\documentclass[a4paper,11pt]{article}
\pdfoutput=1
\usepackage{jheppub}
\makeatletter
\renewcommand{\@fpheader}{}
\makeatother
\usepackage[T1]{fontenc}
\usepackage{mathtools}
\usepackage{amsthm}
\renewcommand\beforetochook{\pagestyle{myplain}\pagenumbering{roman}\small}
\renewcommand\afterTocSpace{\normalsize\bigskip\medskip}
\renewcommand\afterTocRuleSpace{\clearpage}
\newcommand{\dd}{\mathrm{d}}
\newcommand{\Mminus}{\mathcal{M}_{-}}
\newcommand{\Mplus}{\mathcal{M}_{+}}
\newcommand{\Sh}{\Sigma}
\newcommand{\propacc}{a}
\newcommand{\DEC}{\textup{\textsc{dec}}}

\newtheorem{theorem}{Theorem}[section]
\newtheorem{proposition}[theorem]{Proposition}
\newtheorem{lemma}[theorem]{Lemma}
\newtheorem{corollary}[theorem]{Corollary}
\title{Radiative steering of warp shells}
\author[a,b,c]{An T. Le}
\affiliation[a]{College of Engineering and Computer Sciences, VinUniversity, Hanoi, Vietnam}
\affiliation[b]{Center for Environmental Intelligence, VinUniversity, Hanoi, Vietnam}
\affiliation[c]{Intelligent Autonomous Systems, TU Darmstadt, Germany}
\emailAdd{an.lt@vinuni.edu.vn}
\emailAdd{an@robot-learning.de}
\keywords{Classical Theories of Gravity}
\hypersetup{pdftitle={Radiative steering of warp shells},
 pdfauthor={An T. Le},pdflang={en-US}}
\abstract{
We construct exact timelike junctions between compact flat cavities and
a Kinnersley photon-rocket exterior. The shells satisfy the strict surface
dominant energy condition and steer subluminally under nonnegative
radiation. Burns joined at static spheres give $m_f/m_i=e^{-3L}$ for
exterior velocity-space path length $L$; the supported cavity center
obeys a separate sharp bound. Self-similar shells require anisotropic
stress and have growing normal modes with the ambient geometry and
coefficients held fixed.
A separate elastic model admits local self-gravitating recoil control
with outward emission. At zero gravitational coupling, exact unstrained
turns yield sharp fuel bounds and travel limits at fixed duration.
Limits on angular concentration and inward emission reduce efficiency.
For smooth emission and positive inner-face density, stress-free recoil
requires photons to enter the cavity. Self-gravitating settling after a
turn remains open.
}

\begin{document}
\addtocontents{toc}{\protect\setcounter{tocdepth}{2}}
\maketitle

\section{Introduction}
\label{sec:intro}

A compact flat cavity can move within a curved spacetime, but its
geometry alone does not specify a propulsion mechanism. Alcubierre's
warp metric violates the null energy condition~\cite{alcubierre1994}.
The obstruction extends to the Nat\'ario class with unit lapse and flat
spatial slices~\cite{natario2002,santiago2022}. The null-energy
argument assumes an Eulerian worldline, orthogonal to those slices,
that crosses the deformation and returns to the asymptotic region.
A flat cavity alone imposes neither metric restriction.
Work on subluminal shells includes the
classification proposed in~\cite{bobrick2021}, the constant-velocity
shell of~\cite{fuchs2024}, and a member of Huey's noncompact planar
membrane family satisfying the surface dominant energy condition
(DEC)~\cite{membrane2023}. Energy conditions do not by
themselves establish material viability or invariant
motion~\cite{barzegar2026,buchert2026frackowiak}.

We use the term warp shell for a compact cavity bounded by gravitating
matter and steered by radiative recoil. The exterior is the Kinnersley
photon rocket~\cite{kinnersley1969}; an exact timelike junction replaces
its singular source by a flat interior. Anisotropic photon emission
changes the total momentum. The Bondi news vanishes, so there is no
gravitational-wave flux at infinity~\cite{bonnor1994,damour1995,dain1996,podolsky2011}.
This recoil mechanism also underlies radiation-powered relativistic
travel~\cite{fuzfa2019}.
An observer needs a force to follow the accelerating cavity center;
freely falling test observers in the interior are inertial.
Flatness removes tidal curvature inside the cavity but does not carry
those observers with its accelerating boundary.

Sections~\ref{sec:construction} and \ref{sec:existence} establish the
junctions and their sharp mass bounds. Fixed-axis burns begin and end
on spheres at rest in their respective local frames, allowing successive
changes of direction. The exterior momentum is measured in a Bondi
frame; the supported-center velocity is defined by the interior
embedding. They obey different bounds.
Section~\ref{sec:material} derives anisotropy and normal compression
growth for the reconstructed surface law on a prescribed geometry.
These restrictions agree with membrane
analyses~\cite{yang2023,mourao2025}. For a barotropic fluid shell, the
fully coupled study~\cite{pitre2026} finds growing even-parity modes
at all sampled multipoles $\ell\ge2$ and parameter values.
Its constitutive assumptions differ from those of the radiating shell
considered here.

Section~\ref{sec:bulkbody} introduces a separate finite-thickness elastic model with
local self-gravitating recoil control. Section~\ref{sec:rigidarrival}
derives its exact unstrained maneuvers at $G=0$, sharp fuel limits,
and the corresponding distance bound at fixed travel time.
The inner-face force balance shows that nonzero
stress-free recoil cannot keep the cavity photon-free. Stress can
transmit thrust while emission is directed outward, but global
self-gravitating evolution and settling after a turn remain to be
established.

\section{Geometry, stress-energy, and recoil}
\label{sec:construction}
\label{sec:definition}

We use $G_{ab}=8\pi T_{ab}$, $G=c=1$, zero cosmological constant,
signature $(-,+,+,+)$, and
$R^a{}_{bcd}=2\partial_{[c}\Gamma^a_{d]b}+2\Gamma^a_{e[c}\Gamma^e_{d]b}$.
A dot denotes differentiation with respect to the canonical rocket
retarded time $u$, which labels outgoing null hypersurfaces.
The spacetime consists of a flat interior $\Mminus$, a timelike shell $\Sh$, and a photon-rocket exterior $\Mplus$.
The shell has topology $I\times S^2$ for a time interval $I$ and is given in exterior coordinates by $r=R(u,q)$, where $q=\cos\vartheta$ and $\varphi$ is the azimuthal angle.
Fixed-axis burns are concatenated in section~\ref{sec:polygonal}.
The cavity contains only test observers; a massive payload is not modeled.
For an axisymmetric interior embedding $(T,R\sqrt{1-q^2}\cos\varphi,R\sqrt{1-q^2}\sin\varphi,z)$, we define the equatorial center by $\Gamma(u)=(T(u,0),0,0,z(u,0))$.
The matching determines this curve and its acceleration.

\subsection{Photon-rocket exterior}
\label{sec:bulk}

In angles measured from the instantaneous acceleration axis, the exterior metric is~\cite{kinnersley1969,podolsky2011}
\begin{align}
 \dd s_+^2={}&-\left(1-\frac{2m}{r}+2arq-a^2r^2(1-q^2)\right)\dd u^2
       -2\dd u\dd r\nonumber\\
 &+2ar^2\dd u\dd q
       +r^2\left(\frac{\dd q^2}{1-q^2}+(1-q^2)\dd\varphi^2\right).
 \label{eq:rocketmetric}
\end{align}
Here $m(u)>0$. The massless metric describes Minkowski spacetime in
coordinates based on an accelerated worldline, called the seed.
At $m=0$, $u$ and $a$ are its proper time and signed proper acceleration.
At $m>0$, the shell
excises that singular worldline; the interior embedding determines the
actual cavity-center acceleration.

Einstein's equations give
\begin{equation}
 T^+_{ab}=\frac{n^2(u,q)}{r^2}\ell_a\ell_b,
 \qquad \ell_a=-(\dd u)_a,
 \qquad 4\pi n^2=-\dot m-3ma q.
 \label{eq:axin}
\end{equation}
Here $\ell^a=(\partial_r)^a$ is future null; $n^2$ denotes a coefficient,
not an assumed square. This is outgoing null dust, whose energy density
is proportional to $n^2$. All classical pointwise energy conditions hold
exactly when $n^2\ge0$, or
\begin{equation}
 -\dot m\ge3m|a|.
 \label{eq:box3}
\end{equation}
For $a=0$ the metric is outgoing Vaidya; if $m$ is also constant, it is Schwarzschild in retarded coordinates.

The vanishing Bondi news follows directly in Robinson--Trautman coordinates.
Let $v^\mu(u)$ be the seed four-velocity in a fixed asymptotic Lorentz frame, $\hat\ell^\mu=(1,\hat n)$, and $V=-v\cdot\hat\ell>0$.
The angular metric is $r^2V^{-2}\dd\Omega^2$, and the Bondi shear obeys~\cite{dain1996}
\begin{equation}
 \partial_{u_B}\sigma^0=\frac{\eth^2 V}{V}=0.
 \label{eq:nonews}
\end{equation}
Here $u_B$ is Bondi retarded time and $\eth$ is the spin-raising derivative on the unit sphere.
The monopole and dipole harmonics of $V=v^0-\vec v\cdot\hat n$
are annihilated by $\eth^2$. Thus the flux is entirely matter radiation.
This property does not extend to general Robinson--Trautman
rockets~\cite{dain1996}; its cosmological-constant generalization is
analyzed in~\cite{fernandez2026}.

\subsection{Momentum balance and the mass bound}
\label{sec:control}\label{sec:nofreelunch}

The Bondi four-momentum measures the isolated system's remaining energy
and momentum on a cut of future null infinity. In a fixed Bondi frame,
its balance law is~\cite{bondi1962,sachs1962}
\begin{equation}
 \frac{\dd P_{\rm B}^\mu}{\dd u_B}
 =-\oint\left(\frac{|\mathcal N|^2}{4\pi}+\mathcal F\right)
                \hat\ell^\mu\,\dd\Omega.
 \label{eq:t1gr}
\end{equation}
The two terms describe gravitational-wave and matter flux. We normalize
the complex news $\mathcal N$ by this equation;
$\mathcal F=\lim_{r_B\to\infty}r_B^2T_{u_Bu_B}$ is the outgoing matter flux density,
with $r_B$ the Bondi areal radius.
The balance assumes the usual asymptotically flat Bondi expansion and radiative matter falloffs.
For $\mathcal F\ge0$, the radiated four-momentum is an integral of future null vectors with nonnegative weights, hence is future causal:
\begin{equation}
 -\frac{\dd P_{\rm B}^0}{\dd u_B}
 \ge\left|\frac{\dd\vec P_{\rm B}}{\dd u_B}\right|.
 \label{eq:universalbound}
\end{equation}
Nonzero integrable angular flux gives strict inequality: equality requires
a single emission direction. Zero flux leaves total Bondi momentum fixed.

The corresponding mass cost follows from the relativistic rocket
equation~\cite{henriques2012}. Unit future four-velocities form a
hyperbolic space: its distance is relative rapidity, and path length
counts accumulated changes in rapidity, including turns.
\begin{lemma}[Mass cost of a momentum path]
\label{lem:covariantcost}
Let $P(s)$ be a $C^1$ future timelike momentum on a finite interval in
one Minkowski vector space, with $F=-dP/ds$ future causal or zero. Set
$M=\sqrt{-P^2}$, $V=P/M$, $w=\log(M_i/M)$, and
$E=-V\cdot F$. For $E>0$ define
$\eta=\sqrt{(F-EV)^2}/E$, and set $\eta=0$ when $E=0$.
Then $0\le\eta\le1$ and
\begin{align}
 L&:=\int\sqrt{(dV/ds)^2}\,ds=\int\eta\,dw,
 &d&:=\operatorname{arcosh}(-V_i\cdot V_f),\nonumber\\
 \log\frac{M_i}{M_f}-d
 &=(L-d)+\int(1-\eta)\,dw\ge0.
 \label{eq:covariantcost}
\end{align}
In particular, $M_f/M_i\le e^{-L}\le e^{-d}$.
Equality in the first bound requires null nonzero flux almost everywhere
with respect to $dw$; equality in the second requires a geodesic velocity
path traversed without reversal, allowing coasts.
\end{lemma}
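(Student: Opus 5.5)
The plan is to decompose the flux $F$ along and orthogonal to the unit velocity $V$, and to express both $dM/ds$ and $dV/ds$ through that decomposition. Then $dw$ and $|dV/ds|$ become comparable pointwise, and integrating gives the cost identity.

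\textbf{Step 1 (kinematics).} Write $F=EV+F_\perp$ with $V\cdot F_\perp=0$. Here $E=-V\cdot F\ge0$ because $V$ is future timelike and $F$ is future causal or zero. Since $F_\perp$ is spacelike or zero, $F_\perp^2\ge0$. Causality of $F$ gives $0\ge F^2=-E^2+F_\perp^2$, hence $\sqrt{F_\perp^2}\le E$, which is $0\le\eta\le1$. If $E=0$, then $F_\perp^2\le0$ forces $F_\perp=0$, so $F=0$; this is consistent with the convention $\eta=0$. Next, from $M^2=-P^2$ we get $M\,dM/ds=-P\cdot dP/ds=P\cdot F=-ME$, so $dM/ds=-E$ and $dw/ds=E/M\ge0$. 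Differentiating $V=P/M$ gives
\[
\frac{dV}{ds}=\frac{-F}{M}+\frac{E}{M}V=-\frac{F_\perp}{M}.
\]
Hence $\sqrt{(dV/ds)^2}=\sqrt{F_\perp^2}/M=\eta E/M=\eta\,dw/ds$. Integrating yields $L=\int\eta\,dw$. The $C^1$ hypothesis makes all integrands continuous, and $M>0$ throughout because $P$ stays timelike.

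\textbf{Step 2 (the identity).} Since $w$ runs from $0$ to $\log(M_i/M_f)$, we have $\log(M_i/M_f)=\int dw=\int\eta\,dw+\int(1-\eta)\,dw=L+\int(1-\eta)\,dw$. Subtracting $d$ gives \eqref{eq:covariantcost}. Nonnegativity needs two facts. First, $\int(1-\eta)\,dw\ge0$, because $\eta\le1$ and $dw\ge0$. Second, $L\ge d$: the unit future hyperboloid with the induced (Riemannian) metric is hyperbolic space, $L$ is the length of the $C^1$ curve $V(s)$, and $d=\operatorname{arcosh}(-V_i\cdot V_f)$ is the geodesic distance between its endpoints. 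Length dominates distance, and the bounds $M_f/M_i\le e^{-L}\le e^{-d}$ follow by exponentiating.

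\textbf{Step 3 (equality cases).} Equality in the first bound means $\int(1-\eta)\,dw=0$, so $\eta=1$ $dw$-a.e. On the support of $dw$ we have $E>0$, and $\eta=1$ there means $F_\perp^2=E^2$, i.e.\ $F^2=0$: the flux is null and nonzero. Equality in the second bound means the curve $V$ realizes the hyperbolic distance. I expect this to be the only mildly delicate step. My approach is to invoke uniqueness of geodesics in the uniquely geodesic space $\mathbb H^3$. A curve whose length equals the endpoint distance is a monotone (non-backtracking) reparametrization of the geodesic segment. Stationary intervals, where $dV/ds=0$, are allowed; these are the coasts. To make this precise I would reduce it to the triangle inequality applied along intermediate points $V(s)$: for every $s$, $d(V_i,V(s))+d(V(s),V_f)=d$, which places $V(s)$ on the segment, and additivity of length then forbids reversal of direction.
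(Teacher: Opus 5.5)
Your proof is correct and follows the paper's argument: the splitting $F=EV+F_\perp$ gives $dM/ds=-E$, $dV/ds=-F_\perp/M$ and hence $\sqrt{(dV/ds)^2}=\eta\,dw/ds$, and integration yields the cost identity. The only difference is in the hyperbolic step. The paper writes $V=(\cosh\chi,\sinh\chi\,\hat n)$ in the initial rest frame, so the pointwise bound $d\chi^2+\sinh^2\chi\,d\hat n^2\ge d\chi^2$ gives $L\ge\int|d\chi|\ge d$ and its equality conditions directly. You instead quote the hyperboloid-model distance and use the triangle-equality argument with unique geodesics, which is equally valid.
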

\begin{proof}
Differentiating $P=MV$ and $V^2=-1$ gives
$E=-dM/ds\ge0$ and $F-EV=-M\,dV/ds$.
Causality gives $|F-EV|\le E$; $E=0$ forces $F=0$.
Thus $\sqrt{(dV/ds)^2}=\eta\,dw/ds$.
In the initial rest frame write $V=(\cosh\chi,\sinh\chi\,\hat n)$.
The velocity-space metric is
$d\chi^2+\sinh^2\chi\,d\hat n^2$, so $L\ge\int|d\chi|\ge d$.
Integration proves \eqref{eq:covariantcost} and its equality conditions.
\end{proof}

Here $\eta$ is the magnitude of exhaust momentum per unit emitted energy
in the instantaneous rest frame. For Bondi momentum, $s=u_B$ and $V$ is the
normalized total momentum.
Equation~\eqref{eq:covariantcost} separates excess path length from
exhaust inefficiency; neither $V$ nor its length need describe the cavity center.

For the photon rocket, the momentum on a canonical cut $u=\mathrm{const}$
is $P_{\rm B}^\mu=mv^\mu$~\cite{dain1996}.
In the angular formulas below, $V=-v\cdot\hat\ell$ is again the scalar
defined before \eqref{eq:nonews}.
If $\dd\Omega_*$ is instantaneous rest-frame solid angle and $k^\mu$ is normalized by $-v\cdot k=1$, then
\begin{equation}
 \frac{\dd P_{\rm B}^\mu}{\dd u}
 =-\oint n^2 k^\mu\dd\Omega_*
 =-\oint\frac{n^2}{V^3}\hat\ell^\mu\dd\Omega,
 \qquad
 k^\mu=\frac{\hat\ell^\mu}{V},\quad
 \dd\Omega_* =\frac{\dd\Omega}{V^2}.
 \label{eq:bondiflux}
\end{equation}
The time relation is $\partial u_B/\partial u=V$ at fixed asymptotic angle.
In the fixed-frame integral, $n^2$ is evaluated at the corresponding aberrated rest-frame direction.
In the instantaneous rest frame, \eqref{eq:axin} gives
\begin{equation}
 \oint n^2\dd\Omega_*=-\dot m,
 \qquad -\oint n^2 q\dd\Omega_*=ma,
 \label{eq:moments}
\end{equation}
since $\oint q\dd\Omega_*=0$ and $\oint q^2\dd\Omega_*=4\pi/3$.
These moments reproduce $\dot P_{\rm B}^\mu=\dot m v^\mu+m\dot v^\mu$.

Integration of \eqref{eq:box3} gives
\begin{equation}
 \frac{m_f}{m_i}\le \exp\left[-3\int_{u_i}^{u_f}|a|\dd u\right]
 \le e^{-3d},
 \qquad d=\left|\int_{u_i}^{u_f}a\dd u\right|.
 \label{eq:tsiolkovsky}
\end{equation}
Equality requires $-\dot m=3m|a|$ and, for the endpoint bound, a fixed
acceleration sign. The factor $3$ is the dipole efficiency penalty:
during emission, $\eta=m|a|/(-\dot m)\le1/3$ in
Lemma~\ref{lem:covariantcost}.
It follows from the restricted angular emission; general photon
emission has efficiency at most $1$.
For an initially resting rocket, fixed-frame radiated energy is $m_i-m_f\cosh d$; it differs from the invariant mass depletion $m_i-m_f$.

\subsection{Junction conditions and the static shell}
\label{sec:junction}\label{sec:anchor}

Let $h_{ab}$ be the induced metric, $N$ the unit normal directed from cavity to exterior, and $K_{ab}=e_a^\mu e_b^\nu\nabla_\mu N_\nu$.
We use $[X]=X^+-X^-$, with the same normal orientation on both sides.
The junction equations are~\cite{israel1966,poisson2004}
\begin{align}
 [h_{ab}]&=0,&
 S_{ab}&=-\frac{1}{8\pi}([K_{ab}]-[K]h_{ab}),\label{eq:box4}\\
 D_b S^b{}_a&=-[T_{Na}],&
 S^{ab}\bar K_{ab}&=[T_{NN}],\qquad
 \bar K_{ab}=\tfrac12(K^+_{ab}+K^-_{ab}).\label{eq:mombal}
\end{align}
Here $D$ is the intrinsic derivative, $T_{Na}=T_{\mu\nu}N^\mu e_a^\nu$, and $T_{NN}=T_{\mu\nu}N^\mu N^\nu$.
The first relation matches the geometry measured on the shell; the
curvature jump fixes its surface stress. The last two relations are
the tangential energy-momentum and normal force balances, following
from Gauss--Codazzi. Material and emission laws remain to be specified.

The surface dominant energy condition (DEC) requires $-S^a{}_b w^b$ to be future causal for every future timelike tangent vector $w^a$.
For Type~I stress, which has a timelike energy eigenvector, this is equivalent to $\varepsilon\ge\max(|P_1|,|P_2|)$ in its rest frame~\cite{hawking1973}.
We call the condition strict when $\varepsilon>\max(|P_1|,|P_2|)$.

For a static sphere of radius $R$ with flat interior, put $x=2m/R$ and
$s=\sqrt{1-x}$, the shell's redshift factor. Let $\tau$ be shell proper
time and $A$ either angular direction, without summation below.
The mixed principal curvatures are $K^{+\tau}{}_{\tau}=m/(R^2s)$, $K^{+A}{}_A=s/R$, $K^{-\tau}{}_{\tau}=0$, and $K^{-A}{}_A=1/R$.
The surface energy density $\sigma_0$ and tangential pressure $p_0$ are
\begin{equation}
 \sigma_0=\frac{1-s}{4\pi R},\qquad
 p_0=\frac{(1-s)^2}{16\pi Rs},\qquad
 \sigma_0-p_0=\frac{(1-s)(5s-1)}{16\pi Rs}.
 \label{eq:decfactor}
\end{equation}

The density and pressure are positive, and strict surface DEC holds exactly for
$1/5<s<1$, or $0<x<24/25$, the familiar flat-interior compactness bound~\cite{horvatilijic2007,andreasson2008sharp}.
The normalized margin $8\pi R(\sigma_0-p_0)$ is uniformly positive
on compact subintervals and persists under small exact deformations.

\section{Exact accelerating junctions}
\label{sec:existence}

We construct a contracting family to analyze the shell stress, then
a family with static ends that allows successive burns to be joined.

\subsection{Self-similar contraction}
\label{sec:homothetic}

Let the shell's size decrease with its mass. Set
\begin{equation}
 \mathcal R(u)=R_0-\nu(u-u_0)>0,\qquad
 m(u)=\frac{x\mathcal R(u)}2,\qquad
 \propacc(u)=\frac{\lambda}{\mathcal R(u)},\qquad
 R(u,q)=\mathcal R(u)\chi(q).
 \label{eq:homolaw}
\end{equation}
Here $R_0>0$, $x$ is the reference compactness, $\nu$ is the contraction rate, and $\lambda$ is the signed dimensionless acceleration.
The actual angular compactness is $2m/R=x/\chi$.

\begin{theorem}[Exact self-similar accelerating junction]
\label{thm:homothetic}
For each $x\in(0,24/25)$ there is $\epsilon_x>0$ such that \eqref{eq:homolaw}
admits a smooth embedded timelike junction to a flat cavity whenever
$\nu\ge3|\lambda|$ and $|\lambda|+\nu<\epsilon_x$.
Its profile is analytic across $[-1,1]$, with $\chi(0)=1$ and $\chi'(0)=0$.
On every compact time interval with $\mathcal R>0$, the shell bounds a regular cavity,
the glued region is stably causal, the surface stress is Type~I with strict \DEC{},
and the exterior satisfies all classical bulk energy conditions.
For $\lambda\ne0$, both the exterior steering and the interior equatorial-center acceleration are nonzero.
\end{theorem}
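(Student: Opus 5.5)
Under \eqref{eq:homolaw} the whole configuration is self-similar, and the plan is to use that to turn the junction problem into a perturbation of the static shell of section~\ref{sec:anchor}. The scaling $u-u_*\mapsto\kappa(u-u_*)$, $r\mapsto\kappa r$, with $u_*=u_0+R_0/\nu$, maps the exterior metric \eqref{eq:rocketmetric} to itself up to the constant factor $\kappa^2$. The reason is that $m/\mathcal R$, $a\mathcal R$ and $\dot{\mathcal R}$ are all constant. A homothetic Killing field $\xi$ therefore exists in $\Mplus$. The interior can be made to share it: we look for the flat embedding in the form $T=\mathcal R\,\tau(q)$-type plus a homothetic translation. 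More precisely, we take a Minkowski dilation about the cavity's vanishing point composed with the appropriate boost, so that $\Mminus$ carries the same homothety. The shell $R=\mathcal R\chi(q)$ is tangent to $\xi$.

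With this ansatz, $[h_{ab}]=0$ reduces to ODEs in $q$ for the interior embedding functions $(\tau(q),\rho(q),\zeta(q))$ together with the profile $\chi(q)$. The surface stress then follows from \eqref{eq:box4}. The remaining freedom is used to impose regularity at the poles $q=\pm1$, analyticity, and the normalization $\chi(0)=1$, $\chi'(0)=0$.

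The key step is the implicit function theorem at $(\lambda,\nu)=(0,0)$. At that point the solution is the static sphere, with $\chi\equiv1$ and the interior a round static sphere of radius $R$ after rescaling. The induced-metric matching equations are a smooth, indeed analytic, map of $(\chi,\text{embedding data},\lambda,\nu)$ between spaces of analytic or $C^{k}$ functions on $[-1,1]$ that are even-compatible at the poles. We linearize at the static solution and decompose into Legendre modes. Isometric embedding of a perturbed round sphere into flat space is rigid modulo translations and rotations. So the linearized operator should be invertible up to the $\ell=0,1$ kernel, and that kernel is fixed by $\chi(0)=1$, $\chi'(0)=0$ and the choice of equatorial center.

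I expect this step to be the main obstacle: showing that the linearization, including the time-direction component of the embedding, is an isomorphism with a gauge-fixed kernel, uniformly in $x\in(0,24/25)$. The constraint $x<24/25$ also guarantees $s>0$ with margin, so the exterior shell stays timelike and outside the horizon. If the $q$-ODE approach proves awkward, a fallback is to solve the matching directly as an analytic ODE system in $q$ with regular singular points at $q=\pm1$ and to use analytic dependence on parameters. The conclusion is a solution analytic in $q$ and in $(\lambda,\nu)$ for $|\lambda|+\nu<\epsilon_x$.

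The stated properties then follow by continuity from the static case.
\begin{itemize}
\item[(i)] Strict \DEC{} holds because $8\pi R(\sigma_0-p_0)>0$ holds uniformly for $s\in(1/5,1)$ by \eqref{eq:decfactor}. Small perturbations preserve Type~I stress and the strict inequality. Self-similarity makes the margin scale uniformly in $u$, so one $\epsilon_x$ works on every compact interval with $\mathcal R>0$.
\item[(ii)] The exterior energy conditions are exactly \eqref{eq:box3}. Since $-\dot m=x\nu/2$ and $3m|a|=3x|\lambda|/2$, this is $\nu\ge3|\lambda|$.
\item[(iii)] Embeddedness, regularity of the cavity, and stable causality all follow from closeness to the static configuration. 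For stable causality, the time function $u$ outside is glued to the interior Minkowski time, and both remain timelike-monotone along the shell for small parameters.
\item[(iv)] For $\lambda\ne0$, the exterior steering is nonzero because $a=\lambda/\mathcal R$. For the center, we compute the first-order-in-$\lambda$ term of $z(u,0)$ from the linearized $\ell=1$ sector. This term should be proportional to $\lambda$ with a nonzero coefficient, which gives a nonzero acceleration of $\Gamma$ for small nonzero $\lambda$.
\end{itemize}
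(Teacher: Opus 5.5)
Your outer structure agrees with the paper: a homothety of $\Mplus$, a dilation-plus-boost interior, perturbation off the static sphere, the DEC margin by scaling and continuity, and the exact equivalence $-\dot m\ge3m|\propacc|\iff\nu\ge3|\lambda|$. The gap is the existence step itself, which you defer, and the reason you give for it does not apply. Rigidity of isometric embeddings concerns uniqueness of an embedding for a given metric. Here the induced metric depends on the unknown $\chi$, and a generic Lorentzian $3$-metric has no isometric embedding in Minkowski space at all. So $[h_{ab}]=0$ is a constraint on $\chi$, and the surjectivity an implicit-function argument needs is exactly what must be proved.

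The missing idea is the reduction in appendix~\ref{app:homothetic}. Azimuthal matching fixes $\rho=R\sqrt{1-q^2}$, so matching is equivalent to flatness of the Lorentzian quotient $b=h_{(u,q)}-\dd\rho^2$ of \eqref{eq:quotientmetric}. Under \eqref{eq:homolaw} its coefficients $E,F,G$ depend only on $(q,\chi,\chi')$. Flatness becomes $[(E'+2\nu F)/\sqrt D]'=0$, which has the first integral \eqref{eq:homofirst}. Its constant $\kappa=-\lambda(1-\nu)/C$ is forced by $\chi(0)=1$, $\chi'(0)=0$, and it is precisely the boost rate of your interior homothety. Since $\mathcal H_p=-x/\chi^2\ne0$ at $\lambda=\nu=p=0$, uniformly for $|q|\le1$, a finite-dimensional implicit function theorem gives $\chi'$ analytically. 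Integration from the equator then reaches both poles, which are regular points of this equation rather than singular ones. The explicit development \eqref{eq:homodevelop} supplies the embedding; it stays analytic through $\nu=0$, where your dilation center escapes to infinity. With this reduction your kernel guess turns out right, but for a different reason. By \eqref{eq:endpoint_pivot}, the linearized flatness operator at the static sphere is $Y\mapsto c\,Y''$, whose kernel is spanned by $1$ and $q$. It is not diagonal in Legendre modes, and no uniformity in $x$ is needed because $\epsilon_x$ may depend on $x$.

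Several of your consequence steps also need repair.
\begin{itemize}
\item Stable causality: the exterior retarded time $u$ labels outgoing null hypersurfaces, so $g^{-1}(\dd u,\dd u)=0$ and $u$ is not a temporal function. It also does not agree with the interior Minkowski time on $\Sh$, so the two cannot simply be glued. The paper uses the interior function $u(T,z)$, whose gradient is timelike since $G,D>0$. Outside it uses the tilted function $\tau_+$ of \eqref{eq:endpoint_temporal}, controlled by the bound on $B_\ell$ from \eqref{eq:endpoint_normalnorm}, and then smooths across the junction.
\item Center acceleration: $\nu\ge3|\lambda|$ allows $\nu\gg|\lambda|$. A nonzero linear coefficient in $\lambda$ at $(0,0)$ therefore gives nonzero acceleration on the whole range only once you add that spherical symmetry makes the center inertial at $\lambda=0$ for every $\nu$. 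The acceleration then factors as $\lambda\,g(\lambda,\nu)$ with $g(0,0)\ne0$. The paper instead reads off the exact center rapidity $-\kappa t$, which gives \eqref{eq:homoaccel}.
\item Embeddedness: closeness to the static sphere yields embeddedness only locally in self-similar time. A single $\epsilon_x$ must serve arbitrarily long intervals on which the interior becomes strongly boosted. You therefore need a global argument, such as the sign conditions $T_t>0$ and $\partial z/\partial q|_T>0$ in \eqref{eq:homojacobian}.
\end{itemize}
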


Appendix~\ref{app:homothetic} gives the angular equation, embedding,
and uniform normalized DEC margin.

For a contracting segment ($\nu>0$) with dimensionless duration $t_f$, where $\dd t=\dd u/\mathcal R$,
\begin{equation}
 \frac{m_f}{m_i}=\frac{\mathcal R_f}{\mathcal R_i}=e^{-\nu t_f},
 \qquad d=|\lambda|t_f,
 \qquad u_f-u_i=\frac{\mathcal R_i}{\nu}(1-e^{-\nu t_f}).
 \label{eq:homobudget}
\end{equation}
Thus $\nu=3|\lambda|$ saturates the exterior mass bound.
This family contracts throughout, with no static endpoints.

\subsection{Smooth burns with static ends}
\label{sec:endpointburn}

Here each endpoint is static in its own rest frame. A burn changes the
velocity between those frames. Write $\chi_c$ for the supported
center's rapidity in the interior Minkowski frame.

\begin{theorem}[Smooth static-to-static acceleration]
\label{thm:endpointburn}
Let $R_0>0$, $0<2m_i/R_0<24/25$, $d>0$, and $\gamma\ge3$.
There is a smooth embedded axisymmetric timelike junction, defined for $u\in\mathbb R$,
between a photon-rocket exterior and a regular flat cavity, with
\[
 R(u,0)=R_0,\qquad R_q(u,0)=0,\qquad
 -\dot m=\gamma m\propacc,\qquad
 \propacc\ge0,\qquad \int\propacc\,\dd u=d.
\]
Outside a finite retarded-time interval, $\propacc=0$ and $R(u,q)=R_0$.
The endpoint masses are $m_i$ and $m_f=m_i e^{-\gamma d}$.
The first junction holds exactly, the surface stress is Type~I with strict
\DEC{}, the exterior null radiation is nonnegative, and the glued spacetime
is stably causal.
For the interior equatorial center,
\begin{equation}
 \Delta\chi_c=\frac2\gamma
   \left(\operatorname{artanh}s_f-\operatorname{artanh}s_i\right)>0,
 \qquad s_{i,f}=\sqrt{1-2m_{i,f}/R_0}.
 \label{eq:endpoint_center_gain}
\end{equation}
\end{theorem}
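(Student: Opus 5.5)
Here is the plan: prescribe the acceleration profile, solve the first junction condition as an equation for the interior embedding, and then read off the surface stress and the center's rapidity from the result.

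\textbf{Prescribed data.} Fix a smooth bump $\propacc(u)=\epsilon\,\beta(u/T)$ with $\beta\ge0$ compactly supported and $\int\propacc\,\dd u=d$. By choosing the dilation $T$ large, $\epsilon$ (and hence $\propacc R_0$ and $|\dot m|R_0\lesssim\gamma\epsilon R_0$) can be made as small as we like. Set $m(u)=m_i\exp[-\gamma\int^u\propacc]$. Then $4\pi n^2=m\propacc(\gamma-3q)\ge0$ because $\gamma\ge3$, so the exterior radiation is nonnegative. We take the ansatz $R(u,q)=R_0\,\chi(u,q)$ with $\chi(u,0)=1$ and $\chi_q(u,0)=0$, and $\chi\equiv1$ off the burn.

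\textbf{Solving the first junction.} I would write the interior embedding $(T,X,Y,z)(u,q,\varphi)$ in axisymmetric form and impose $[h_{ab}]=0$. With the exterior areal function $r=R$, this gives three PDEs in $(u,q)$ for $T$, $z$ and $\chi$; the $\varphi\varphi$ component simply fixes the cylindrical radius $R\sqrt{1-q^2}$. I would solve this system as a small-data perturbation of the static sphere, for which $T=s\,u$, $z=R_0q$ and $\chi=1$. Because the data vary slowly in $u$, the equations are quasi-static: at leading order in the small parameter, the angular profile at each $u$ solves the same ODE in $q$ as in Appendix~\ref{app:homothetic}, with slowly varying parameters.

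Analyticity across $[-1,1]$ and the normalization at $q=0$ would come from the implicit function theorem in a weighted space of functions smooth up to the poles, exactly as in the self-similar case. Regularity of that solution together with smooth dependence on $u$ then gives a smooth embedding for all $u\in\mathbb R$, and it equals the static sphere wherever $\propacc=0$.

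I expect the main obstacle to be this uniform-in-$u$ solvability with compact support. The time derivatives enter as small perturbations, but the interior $T,z$ must integrate to globally consistent boosted static frames at both ends. I would first handle this by treating $u$ as a parameter in the implicit function theorem, absorbing the $\partial_u$ terms via a contraction in $C^k$ with small norm controlled by $1/T$.

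\textbf{Stress, causality and the center rapidity.} The surface stress follows from \eqref{eq:box4}. Since the data are $C^k$-close to static spheres with $2m/R_0\in(2m_f/R_0,2m_i/R_0)\subset(0,24/25)$, the uniformly positive normalized margin in \eqref{eq:decfactor} persists, giving Type~I stress with strict \DEC{} for sufficiently small $\epsilon$. Stable causality would follow from a global time function built from $u$ outside and from $T$ inside, matched across the shell.

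For \eqref{eq:endpoint_center_gain}, I would compute the equatorial worldline $\Gamma$ from the matching at $q=0$, where $R=R_0$. The induced metric along it gives $\dd\tau=s(u)\,\dd u$ to leading order, and the interior center acceleration is related to the exterior $\propacc$ through the equatorial component of the junction. I expect the rapidity rate along $\Gamma$ to be $\dd\chi_c/\dd u=\propacc/s$ up to an exact correction fixed by the angular matching. Using $\dot m=-\gamma m\propacc$ and $\dd s/\dd m=-1/(R_0 s)$, this gives $\propacc\,\dd u=-\dd m/(\gamma m)$ and $\dd s=-\dd m/(R_0 s)$, so
\[
\frac{\propacc\,\dd u}{s}=\frac{2}{\gamma}\,\frac{\dd s}{1-s^2}.
\]
The exact identity should be derived from the $q=0$ equations, which hold at all orders because $R$ and $R_q$ are pinned there. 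Integrating yields $\tfrac{2}{\gamma}(\operatorname{artanh}s_f-\operatorname{artanh}s_i)$. This is positive since $m_f<m_i$ implies $s_f>s_i$.
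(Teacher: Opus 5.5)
\textbf{The gap is in solving the matching equation.} The radiation sign, the DEC persistence and the final integration $a\,\dd u/s=\frac{2}{\gamma}\,\dd s/(1-s^2)$ are fine. The step that fails is your plan to solve the matching equation with $u$ as a parameter, absorbing the $\partial_u$ terms by a $C^k$ contraction.

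Reduce $[h_{ab}]=0$ to flatness of the quotient $b$, and solve $\mathcal P=0$ for $R_{qq}$. The result still contains $R_{uu}$ and $R_{uq}$, with coefficients that vanish only at the static state; for instance $\mathcal P_{R_{uu}}=-8m(RR_{qq}-2R_q^2)$ and $\mathcal P_{R_{uu}R_{qq}}=-8mR$. Each iteration of the $q$-ODE therefore loses two $u$-derivatives, and no contraction in a fixed $C^k$ can close.

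There is also a structural obstruction. Lemma~\ref{lem:endpoint_symbol} gives $B^2-AC=8mR\mathcal P+48m^2R_q^2\Delta/R^2$. So on solutions with $\Delta<0$ the equation is elliptic wherever $R_q\neq0$. Pinning $R(u,0)=R_0$ and $R_q(u,0)=0$ then makes this a Cauchy problem in $q$ for an elliptic (degenerate at the equator) equation. The smallness of $\epsilon$ and the factor $1/T$ per $u$-derivative only weaken the ellipticity. A mode of frequency $k$ in $u$ still grows exponentially in $k$ across $|q|\le1$. A generic compactly supported smooth bump has too much high-frequency content for this, so analyticity in $u$ is the natural requirement.

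The self-similar case is not a template either. In Appendix~\ref{app:homothetic} the analytic implicit function theorem only solves the algebraic first integral $\mathcal H(q,\chi,p)=0$ for $p=\chi'$. That reduction comes from the homothety and has no analogue here.

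\textbf{What the paper does instead.} It uses an analytic Cauchy--Kovalevskaya argument with quantitative control up to the flat ends. Lemma~\ref{lem:endpoint_analytic} uses Nirenberg's abstract theorem on complex time disks of radius $\rho$. It shows that the equatorial Cauchy problem has an analytic solution on $|q|<9/8$ whenever $\|m-m_c\|_\rho+\|a\|_\rho\le c\rho^8$, uniformly in $\rho$.

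The pulse is then chosen as $\exp[-1/(u(1-u))]$. This is analytic on the open burn interval and of size $O(e^{-1/(2d_c)})$ on disks of radius $d_c/8$, where $d_c$ is the distance to an endpoint. Hence one small amplitude works at every $u_c$. The derivatives of $R-R_0$ decay like $d_c^{-k}e^{-1/(4d_c)}$, which gives a smooth join to the static sphere.

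Your dilation by $T$ could replace the paper's concatenation of small pulses once this estimate is available. However, your bump must be of this analytic, quantitatively flat type.

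\textbf{The center rapidity also needs an exact computation.} Your formula for $\Delta\chi_c$ requires the exact equatorial connection, not just an expectation. With the coframe $\theta^0=h\,\dd u$, $\theta^1=j\,\dd q+k\,\dd u$, the unknown $R_{qq}$ cancels and gives $\omega_u=a/s$. The remaining correction is $\frac{\dd}{\dd u}\operatorname{artanh}(aR_0/s)$, which vanishes between static ends. Note also that $\dd\tau_c/\dd u=\sqrt{s^2-(aR_0)^2}$, not $s$; the result does not depend on this.
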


Appendix~\ref{app:endpointburn} constructs and concatenates small
pulses that join static intervals with all derivatives matching there.
The case $\gamma=3$ attains the dipole bound. The proof chooses the
duration and requires sufficiently small amplitudes; it gives no sharp
acceleration threshold.

\subsection{Direction changes and the cavity-center bound}
\label{sec:polygonal}

Let $\mathbb H^3$ be the future unit hyperboloid in the exterior seed Minkowski space, with distance $d_{\mathbb H}(v,w)=\operatorname{arcosh}(-v\cdot w)$.
Its geodesics are fixed-axis boosts.
We call the static intervals adjoining each burn collars.

\begin{theorem}[Exact steering by successive burns]
\label{thm:turningjunction}
Fix $R_0>0$, $0<2m_0/R_0<24/25$, and a finite sequence $v_0,\ldots,v_N\in\mathbb H^3$.
Set $d_j=d_{\mathbb H}(v_{j-1},v_j)$ and $L=\sum_{j=1}^N d_j$.
There is a smooth exact timelike cavity junction whose exterior velocity traverses the successive geodesic segments, with static spherical ends and intermediate collars of radius $R_0$, and masses
\begin{equation}
 m_j=m_0\exp\left(-3\sum_{\ell=1}^j d_\ell\right).
 \label{eq:polygonmass}
\end{equation}
The cavity is globally embedded and flat, the glued spacetime is stably causal, the surface has strict Type~I DEC, the null radiation is nonnegative, and Bondi news vanishes.
The pulse shapes and waiting times are chosen by the construction.
For $D=d_{\mathbb H}(v_0,v_N)$, $m_N/m_0=e^{-3L}\le e^{-3D}$; equality holds exactly when the nonconstant segments follow one geodesic without backtracking.
\end{theorem}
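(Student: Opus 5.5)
The proof will concatenate $N$ copies of the static-to-static burn of Theorem~\ref{thm:endpointburn} with $\gamma=3$, each rotated to its own axis, and splice them along the static collars. The key observation is that on a collar the exterior is exactly Schwarzschild (constant $m$, $a=0$) in retarded coordinates, and the interior is flat with a static sphere of radius $R_0$. Both are invariant under the full Lorentz group acting on the seed Minkowski space. On the exterior this holds because, for $m$ constant and $a=0$, the metric \eqref{eq:rocketmetric} depends on the seed only through its rest frame; on the interior the sphere is static in the corresponding boosted frame. So I may freely choose the acceleration axis of the $j$-th burn in the rest frame of $v_{j-1}$, namely the direction of the initial tangent of the geodesic from $v_{j-1}$ to $v_j$. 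I take its total acceleration integral equal to $d_j$; when $d_j=0$ I simply insert a longer collar.

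\textbf{Step 1: realise one segment.} Within the $j$-th burn, the fixed-axis photon rocket has seed velocity moving along a single boost geodesic of $\mathbb H^3$, with rapidity changing by $\int a\,\dd u=d_j$, since $a\ge0$. Hence $v$ travels from $v_{j-1}$ to $v_j$ without backtracking. Theorem~\ref{thm:endpointburn} with $\gamma=3$ and initial mass $m_{j-1}$ applies, because compactness only decreases: $2m_{j-1}/R_0\le 2m_0/R_0<24/25$. It yields $m_j=m_{j-1}e^{-3d_j}$, and induction gives \eqref{eq:polygonmass}.

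\textbf{Step 2: gluing.} Each pulse matches the static data to all orders at its ends, and the collars are identical exact Schwarzschild/flat configurations of mass $m_j$, related only by the rest-frame change $v_j$. The concatenated junction is therefore smooth. The following properties hold pulsewise by Theorem~\ref{thm:endpointburn} and trivially on collars, since the static DEC holds strictly for $x<24/25$ by \eqref{eq:decfactor}:
\begin{itemize}
\item strict Type~I DEC of the surface stress,
\item $n^2\ge0$, since $\gamma=3$ gives $-\dot m\ge3m|a|$ in \eqref{eq:box3},
\item vanishing news, by \eqref{eq:nonews}.
\end{itemize}

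\textbf{Step 3: global cavity embedding.} This is where I expect the main obstacle. Each burn's interior is a regular flat region, and consecutive pieces share a static collar of a world-tube in Minkowski space. I would glue the interior Minkowski patches by the Poincaré isometry identifying the collar tubes, and develop the resulting flat region as the union of successive world-tube segments. Embeddedness needs the tube to be non-self-intersecting across segments. For this I would use that the shell stays timelike, with every cross-section of bounded radius contained in a spacelike slice of its own frame. If needed, I would lengthen the collars (the waiting times chosen by the construction) so that distinct burns occupy causally ordered, disjoint regions of the interior embedding.

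Stable causality would follow from a global time function. On the interior this is the Minkowski time of the final frame, and on the exterior it is $u$. To make them agree across $\Sh$ I would smooth the two using their monotonicity along the shell, pulse by pulse, exactly as in Theorem~\ref{thm:endpointburn}, and then patch across collars.

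\textbf{Step 4: the final inequality.} By the triangle inequality in $\mathbb H^3$, $L\ge D$, so $m_N/m_0=e^{-3L}\le e^{-3D}$. Equality in the triangle inequality for a broken geodesic holds exactly when the nonconstant segments are collinear along one geodesic and consistently oriented; this is the stated condition. This matches the equality case in Lemma~\ref{lem:covariantcost}.
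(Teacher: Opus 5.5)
\textbf{Where you agree with the paper.} Your construction follows the paper's main steps. These are: $\gamma=3$ pulses from Theorem~\ref{thm:endpointburn}; each axis set, during a zero-thrust collar, to the initial tangent $w_j=(v_j-\cosh(d_j)v_{j-1})/\sinh d_j$; interior blocks joined by Poincar\'e maps; the mass law $-\dot m=3m|A|$; and the triangle inequality in $\mathbb H^3$. Your idea of lengthening collars is also the paper's embedding argument. The coast continues until a rest-time hyperplane of the collar lies above all earlier compact pieces and the static past, and later generators stay to its future because $h_{uu}<0$.

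Two inaccuracies are harmless. First, the collar exterior is not boost invariant. You only need that rotating the angular labels is an isometry of it. That is equivalent to the paper's single arbitrary-direction rocket \eqref{eq:vectorrocket}, built from one seed worldline with a Fermi--Walker triad, and it is what places the exterior velocity in one $\mathbb H^3$. Second, consecutive interior collar frames are not related by the seed change, because the supported center gains the different rapidity \eqref{eq:endpoint_center_gain}. This does not matter, since the two sides are glued separately along $\Sh$.

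\textbf{The gap: stable causality.} Your exterior function $u$ is not temporal. For \eqref{eq:vectorrocket} one has $g^{uu}=0$, so $\dd u$ is null and $u$ is constant along the outgoing generators $\partial_r$. In addition, the Minkowski time of the final interior frame has a trace on $\Sh$ different from $u$. Monotonicity along the shell does not let you smooth two functions with mismatched traces into one.

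The paper uses $u$ on both sides. In the cavity it takes the inverse $u(T,z)$ of the quotient development, extended independently of cylindrical radius. This is temporal because $b^{uu}=G/\Delta<0$. On each static collar it is affine in rest time with slope $1/s_j$, so consecutive blocks patch.

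In the exterior the paper takes $\tau_+=u+\delta(1-e^{-\ell/L})$, with $\ell=r-R(u,n)$ and $p=e^{-\ell/L}/L$. Then $g^{-1}(\dd\tau_+,\dd\tau_+)=-2\delta p+\delta^2p^2B_\ell$. Since $B_\ell$ contains $2rA\cdot n$, it grows linearly in $r$ during burns, so a linear correction such as $u+\delta\ell$ would fail at large $r$. The saturating correction is temporal once $(B_\ell)_+\le C_0+C_1\ell$ and $\delta(C_0/L+C_1/\mathrm e)<2$. That bound holds uniformly because the concatenated history is finite, with $R\ge R_{\min}>0$ and bounded shape derivatives.

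Both functions then have trace $u$ on $\Sh$. Smoothing inside the common timelike covector cone of the continuous glued metric gives a global temporal function. The paper also notes that the inertial ends make the retarded coordinates of the piecewise seed globally injective; your gluing leaves this implicit.
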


Rotational symmetry allows the next acceleration axis to be chosen
during a static collar. Spacelike cuts separate successive interior
pieces and prevent overlap (appendix~\ref{app:endpointburn}). This
realizes the classical photon-rocket budget~\cite{podolsky2011} with a
compact cavity. Axis changes during nonzero thrust are not covered.

\begin{corollary}[Sharp mass bound for the supported center]
\label{cor:centerbudget}
Fix $R_0>0$ and $0<2m_i/R_0<24/25$.
Let $C$ be the hyperbolic distance between the supported center's endpoint four-velocities in the interior Minkowski frame.
For a fixed-axis junction with $R(u,0)=R_0$, $R_q(u,0)=0$, a timelike center, static ends, and nonnegative radiation, or a concatenation in Theorem~\ref{thm:turningjunction},
\begin{equation}
 \frac{m_f}{m_i}\le
 \left[\cosh\frac{3C}{2}+s_i\sinh\frac{3C}{2}\right]^{-2},
 \qquad s_i=\sqrt{1-2m_i/R_0}.
 \label{eq:centerbudget}
\end{equation}
For every finite $C\ge0$, a straight saturated burn attains equality, with no prescribed duration.
\end{corollary}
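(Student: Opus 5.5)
The plan is to reduce the bound to a local kinematic identity along the equatorial center, integrate it against the exterior energy condition \eqref{eq:box3}, and then invert the resulting rapidity inequality.

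\emph{Step 1 (local center identity).} For a fixed-axis junction with $R(u,0)=R_0$ and $R_q(u,0)=0$, I would show that the interior equatorial center satisfies
\[
 \frac{\dd\chi_c}{\dd u}=\frac{\propacc}{s},\qquad s=\sqrt{1-2m/R_0}.
\]
The formula \eqref{eq:endpoint_center_gain} already agrees with this. Indeed, $\dd\,\operatorname{artanh}s=-\dd m/(2ms)$. With $-\dot m=\gamma m\propacc$ this gives $\int \propacc/s\,\dd u=(2/\gamma)\Delta\operatorname{artanh}s$. To derive the identity in general, I would return to the matching at $q=0$. The first junction condition equates the induced metric on the equator computed from \eqref{eq:rocketmetric} with that from the interior embedding $(T,R\sqrt{1-q^2}\cos\varphi,R\sqrt{1-q^2}\sin\varphi,z)$. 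Because $R_q=0$ there, the equatorial curve has constant areal radius $R_0$ and proper-time rate $s\,\dd u$. The boost of its interior tangent frame is then fixed by the $q$-derivative of the matched metric, which contributes the exterior term $\propacc R_0^2$ from $2ar^2\dd u\dd q$. Differentiating along $u$ should yield the rapidity rate $\propacc/s$, a blue-shift of the exterior acceleration by the redshift factor. This step, including the requirement that $\Gamma$ be timelike and the sign conventions, is the main obstacle. I would first verify it against the static-end pulses of Appendix~\ref{app:endpointburn} and then argue that it uses only the equatorial data.

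\emph{Step 2 (integration).} The path in velocity space is fixed-axis, so $C\le\int|\dd\chi_c|\le\int|\propacc|/s\,\dd u$. Nonnegative radiation gives $|\propacc|\,\dd u\le -\dd m/(3m)$. Hence
\[
 C\le\int\frac{-\dd m}{3ms}=\tfrac23\left(\operatorname{artanh}s_f-\operatorname{artanh}s_i\right).
\]
For a concatenation as in Theorem~\ref{thm:turningjunction}, the collars are static with radius $R_0$, so $s$ is continuous and the $\operatorname{artanh}$ increments telescope. The hyperbolic triangle inequality then bounds $C$ by the sum of the per-burn center distances.

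\emph{Step 3 (inversion and sharpness).} Write $t=\tanh(3C/2)$. Then $s_f\ge (s_i+t)/(1+s_it)$. Using $1-s_f^2=(1-s_i^2)(1-t^2)/(1+s_it)^2$ and $m_f/m_i=(1-s_f^2)/(1-s_i^2)$, I obtain
\[
 \frac{m_f}{m_i}\le\frac{1-t^2}{(1+s_it)^2}=\left[\cosh\tfrac{3C}{2}+s_i\sinh\tfrac{3C}{2}\right]^{-2}.
\]
For equality, I would take Theorem~\ref{thm:endpointburn} with $\gamma=3$, so that every inequality above is saturated. The achieved $\Delta\chi_c=\tfrac23(\operatorname{artanh}s_f-\operatorname{artanh}s_i)$ is continuous and strictly increasing in $d$, since $s_f\to1$ as $m_f\to0$. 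It therefore takes every value $C\ge0$, and the duration is left free.
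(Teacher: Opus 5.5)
Your integration, inversion, telescoping for concatenations, and sharpness argument all match the paper. The gap is in Step 1: the pointwise identity $\dd\chi_c/\dd u=\propacc/s$ is false, and Step 2 uses it in a way that does not survive the correction.

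At $q=0$ the quotient metric has $E=-s^2+(\propacc R_0)^2$, $F=\propacc R_0^2$ and $G=R_0^2$. So the center's proper-time rate is $\sqrt{s^2-(\propacc R_0)^2}$, not $s$. Moreover, $\partial_u$ is tilted relative to the orthonormal coframe $\theta^0=h\,\dd u$, $\theta^1=j\,\dd q+k\,\dd u$. At the equator $h=s$ and $k=\propacc R_0$, so the tilt is the rapidity $\operatorname{artanh}(\propacc R_0/s)$. The boost connection of this coframe is $\propacc/s$, and the $R_{qq}$ terms cancel in it. The true rate is therefore $\dot\chi_c=\frac{\dd}{\dd u}\operatorname{artanh}(\propacc R_0/s)+\propacc/s$, which is \eqref{eq:endpoint_centerrate}. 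Your consistency check against \eqref{eq:endpoint_center_gain} cannot detect the extra term. That check only tests the integral over a pulse with static ends, where the total derivative integrates to zero.

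As a consequence, your Step 2 inequality $\int|\dd\chi_c|\le\int|\propacc|/s\,\dd u$ is unjustified. Through the tilt term, $\chi_c$ responds directly to $\propacc$. A short, tall pulse makes $\chi_c$ rise by about $\operatorname{artanh}(\propacc_{\max}R_0/s)$ and then fall back. Its total variation can then far exceed $\int|\propacc|/s\,\dd u$.

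The repair is to avoid total variation altogether. For fixed-axis motion the center's velocity stays on one geodesic, so $C=|\Delta\chi_c|$ exactly. At static ends $\propacc=0$, so the boundary term $[\operatorname{artanh}(\propacc R_0/s)]$ vanishes. This gives $C=\bigl|\int\propacc/s\,\dd u\bigr|\le\int|\propacc|/s\,\dd u$, after which your integration against $-\dot m\ge3m|\propacc|$ goes through unchanged. The same fix applies burn by burn in a concatenation, since each burn is bounded by static collars.

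This is precisely where the static-end hypothesis enters, and your proposal never uses it. The timelike-center hypothesis is what guarantees $|\propacc|R_0<s$, so the tilt is finite. Without static ends, the boundary term is not controlled by the mass loss at all.

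A minor point: Theorem~\ref{thm:endpointburn} requires $d>0$, so the case $C=0$ is attained by a static shell rather than by that theorem.
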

\begin{proof}
For fixed-axis motion, the center connection \eqref{eq:endpoint_centerrate} and the static endpoints give $C=|\int a/s\,\dd u|$, where $s=\sqrt{1-2m/R_0}$.
Since $-\dot m\ge3m|a|$ and $\dot s=-\dot m/(R_0s)$,
\begin{equation}
 C\le\int\frac{|a|}{s}\,\dd u
 \le\frac23\left(\operatorname{artanh}s_f-\operatorname{artanh}s_i\right).
 \label{eq:centerintegral}
\end{equation}
For concatenated saturated burns, the interior velocity triangle inequality bounds $C$ by the sum of the increments \eqref{eq:endpoint_center_gain}; that sum telescopes to the same right-hand side.
Put $A=3C/2$.
Then $s_f\ge(s_i+\tanh A)/(1+s_i\tanh A)$, and $m_f/m_i=(1-s_f^2)/(1-s_i^2)$ gives \eqref{eq:centerbudget}.
Theorem~\ref{thm:endpointburn} with $\gamma=3$ attains both inequalities for any $C>0$; a static shell covers $C=0$.
\end{proof}

As $2m_i/R_0\to0$, the bound tends to $e^{-3C}$.
At a static endpoint the shell's proper energy is
\begin{equation}
 E_\Sigma=4\pi R_0^2\sigma_0=R_0(1-s),\qquad
 m=E_\Sigma-\frac{E_\Sigma^2}{2R_0}.
 \label{eq:bindingenergy}
\end{equation}
Thus even at rest the exterior mass includes gravitational binding.
The rapidity $C$ compares interior velocities, whereas $m$ normalizes the
exterior Bondi momentum. Lemma~\ref{lem:covariantcost} requires mass and velocity from the same
momentum, so it cannot be applied to $(m,C)$. Figure~\ref{fig:steering}
compares the exterior and supported-center bounds.

\begin{figure}[p]
\centering
\includegraphics[width=\textwidth,height=.75\textheight,keepaspectratio]{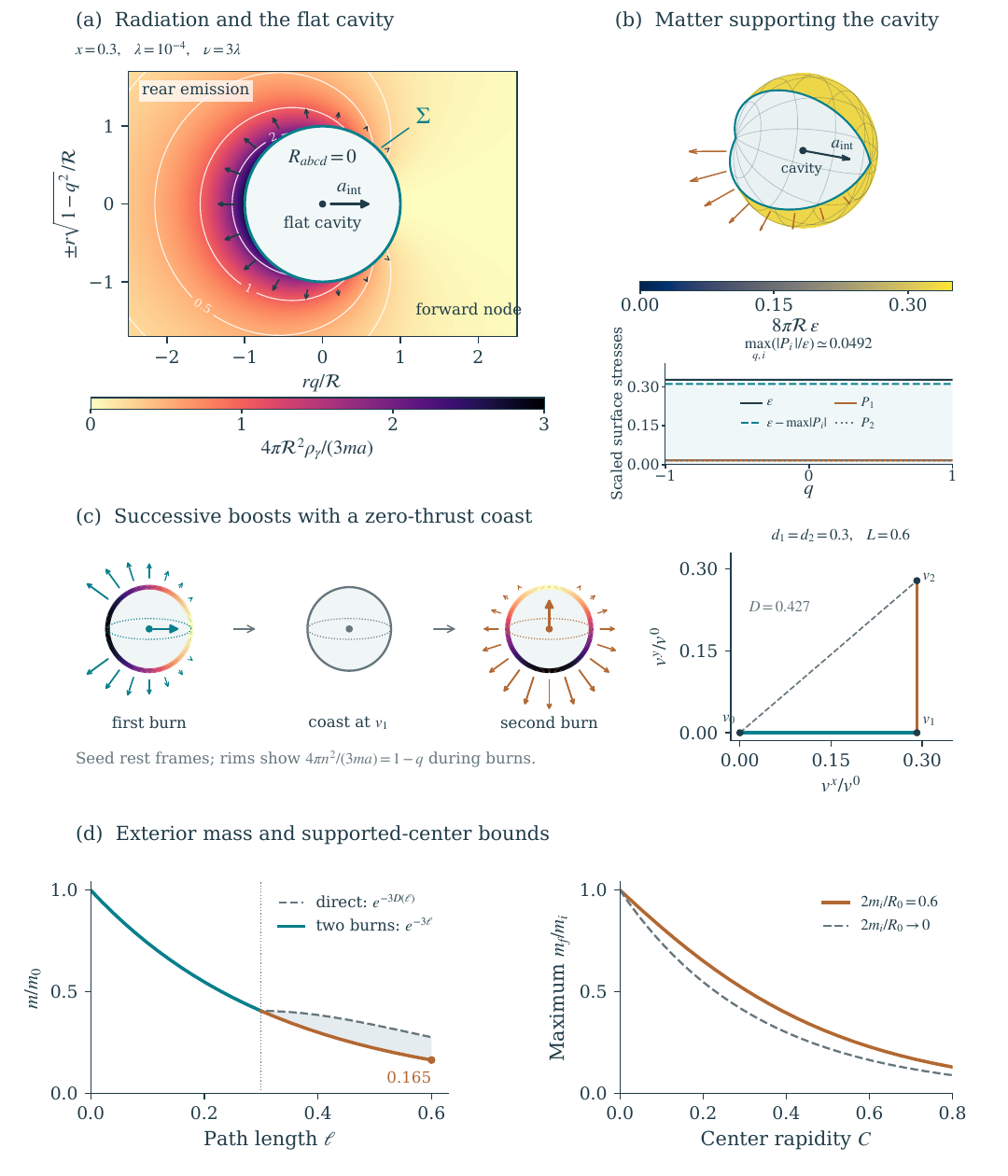}
\caption{Radiative steering and shell stress.
(a,b) Self-similar solution $x=0.3$, $\lambda=10^{-4}$, $\nu=3\lambda$,
with boundary from \eqref{eq:homofirst}.
The map shows $\rho_\gamma=T_{ab}U^aU^b=n^2/(r^2F)$ for
$U=F^{-1/2}\partial_u$, $F=-g_{uu}>0$; exterior arrows follow null rays.
Outer-face colors show energy-frame density; inset stresses and DEC
margin are scaled by $8\pi\mathcal R$. The pressure curves overlap at this scale.
These coordinate projections are not isometric spatial embeddings;
deformation is below line width. Interior arrows
denote supported-center acceleration; free test particles remain inertial.
(c) Perpendicular boosts in successive seed rest frames, changing axes
during a zero-thrust coast at $v_1$:
$d_1=d_2=0.3$, $L=0.6$, $D=\operatorname{arcosh}(\cosh^2 0.3)\simeq0.4274$.
Rims encode exterior dipole power $1-q$; velocity axes share a fixed
exterior frame. (d) Exterior mass cost versus a direct boost to the same
velocity, timing free, and the separate interior-center bound
\eqref{eq:centerbudget}. Burns in (c,d) have static ends; (a,b) contract.
DEC does not establish stability.}

\label{fig:steering}
\end{figure}

\section{Material restrictions and normal stability}
\label{sec:material}

\subsection{Anisotropy}

The junction determines the stress along a chosen geometry. A constitutive
law must also specify its response to strain and emission~\cite{mourao2025}.
In an orthonormal shell frame with a spatial leg along a meridian, write
$S_{\hat a\hat b}=\bigl(\begin{smallmatrix}\sigma&j&0\\j&p_1&0\\0&0&p_2\end{smallmatrix}\bigr)$.
Here $j$ is the meridional energy flux. The energy frame is the local
rest frame in which this flux vanishes.
Near the static Type~I \DEC{} branch, its energy-frame density and meridional pressure are
\[
 \varepsilon=\frac{\sigma-p_1+d_L}{2},\qquad
 P_1=\frac{p_1-\sigma+d_L}{2},\qquad
 d_L=\sqrt{(\sigma+p_1)^2-4j^2},\qquad P_2=p_2.
\]
Since $\varepsilon+p_2>0$ near the static state, the identity
\begin{equation}
 (P_1-p_2)(\varepsilon+p_2)
   =(\sigma+p_2)(p_1-p_2)-j^2=: \mathcal I
 \label{eq:fluidtest}
\end{equation}
shows that a perfect surface fluid exists in some frame if and only if $\mathcal I=0$.
For the self-similar family, analyticity of the profile and reflection symmetry give, at the equator,
\begin{equation}
 (8\pi\mathcal R)^2\mathcal I
 =\lambda^2\left[\frac{3(1-s)^2}{s^3}
        +O(|\lambda|+\nu)\right],\qquad s=\sqrt{1-x}.
 \label{eq:anisotropyexpansion}
\end{equation}
Appendix~\ref{app:homothetic} derives the positive leading coefficient.
For each fixed $0<x<24/25$, every sufficiently small member with $\lambda\ne0$ therefore requires anisotropic stress in its energy frame.

\subsection{Constitutive reconstruction of a finite history}
\label{sec:globalmaterial}

We fit a material law to a prescribed finite shell history and check
its intrinsic elastic wave speeds.

\begin{theorem}[Constitutive realization on a finite shell history]
\label{thm:globalmaterial}
On any compact slab of a strict Type~I DEC history in
section~\ref{sec:existence}, extended slightly at its ends, there is a
smooth surface law on $\mathcal B\simeq S^2$ reproducing its stress and
outgoing flux. In an open state neighborhood it has a conserved particle
current, strict DEC, nonnegative emission, and, on the prescribed geometry,
real strictly subluminal intrinsic elastic speeds. Reaction variables
are transported by the timelike material flow.
\end{theorem}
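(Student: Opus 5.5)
The construction is a Lagrangian elastic surface law whose reference state is the prescribed history itself, perturbed only by lower-order terms.

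\textbf{Material flow and reference body.} Take the strict Type~I DEC history from section~\ref{sec:existence} on a compact slab, extended slightly at its ends. Let $u^a$ be the unit timelike energy eigenvector of $S^a{}_b$, which is smooth by strictness and Type~I, with eigenvalue $\varepsilon>0$. The integral curves of $u^a$ define a smooth submersion $\Sh\to\mathcal B\simeq S^2$, and each point of $\Sh$ gets a material label $X\in\mathcal B$. On the slab, the pulled-back induced metric $\gamma_{AB}(\tau,X)$ is then a smooth one-parameter family of metrics on $\mathcal B$.

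\textbf{Particle current and emission.} Set $\mathcal J^a=\rho u^a$. The continuity equation $D_a(\rho u^a)=0$ is an ODE along the flow, so it is solved by $\rho=\rho_0(X)\sqrt{\det\gamma_0/\det\gamma}$. With this choice, $D_bS^b{}_a=-[T_{Na}]$ in \eqref{eq:mombal} is exactly the statement that the exterior flux $n^2\ge0$ is an energy-loss source for the surface matter. I would introduce scalar reaction variables $\xi$ (for example, the cumulative emitted energy per particle), transported by $u^a\partial_a\xi=\Phi/\rho$, where $\Phi$ is the rate read off from the junction. Because $n^2\ge0$ on the history by \eqref{eq:box3}, $\xi$ is monotone.

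\textbf{Constitutive law.} Choose a smooth free energy $W(X,\gamma_{AB},\xi)$ with stress $S^{AB}=\rho\,\partial W/\partial\gamma_{AB}$ (up to the standard factor 2) and $\varepsilon=\rho W$. I would fit $W$ by a second-order Taylor expansion around the prescribed state $(\gamma_{AB}(\tau,X),\xi(\tau,X))$. Since that state is injective in $\tau$ for generic emission, it is a graph over an embedded curve in state space; where $\xi$ stalls, I would use $\tau$ itself as an auxiliary transported clock.
\begin{itemize}
\item The zeroth and first orders are fixed to reproduce $\varepsilon$, $P_1$, $P_2$, and the principal directions.
\item The Hessian is chosen freely: a positive-definite elastic tensor of small size $\kappa$.
\end{itemize}
Extend $W$ off the curve with a tubular-neighbourhood cutoff and a smooth partition of unity over $\mathcal B$. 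The emission law $\Phi(X,\gamma,\xi)$ is extended off the curve as a nonnegative function, for example a square of a smooth extension of $n$.

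\textbf{Open conditions.} Strict DEC, nonnegative emission, and positivity of $\rho$ are open conditions. They therefore persist in a state neighbourhood, using the uniform normalized DEC margin from the theorems of section~\ref{sec:existence} and compactness of the slab.

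\textbf{Main obstacle: wave speeds.} The hardest step is the claim of real, strictly subluminal elastic speeds on the prescribed geometry. Linearize the intrinsic momentum equation $D_bS^b{}_a=\dots$ at fixed geometry, with the Lagrangian displacement as unknown. Its principal symbol is the acoustic tensor $Q(k)=\rho^{-1}(\text{elastic tensor})(k,k)+\text{prestress terms}$, measured against the inertia $\varepsilon+P$ in the energy frame. The speeds are real if $Q$ is positive definite, which holds when the Hessian dominates the prestress; taking $\kappa$ large enough relative to $|P_1|,|P_2|$ is enough.

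The speeds are subluminal if the eigenvalues of $Q$ are below $\varepsilon+P$. This is possible only because strict DEC gives room: I would set $\kappa=\theta\,(\varepsilon-\max|P_i|)$ with $\theta$ fixed in an interval chosen so that the squared speeds lie in $(0,1)$, using the inequalities $|P_i|<\varepsilon$. I expect that bounding the prestress contributions, which can be negative (tension), is the delicate part. I would first check it at the static sphere, where $P_1=P_2=p_0$ and explicit formulas are available, and then extend by continuity and compactness to small amplitudes. Shrinking the state neighbourhood preserves all strict inequalities, which completes the argument.
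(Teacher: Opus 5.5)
There is a genuine gap, and it sits in the step you defer, which is the substantive content of the theorem. You never compute the principal symbol, and your recipe puts the constraint in the wrong place.

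Take $\mathcal E=\rho W(\gamma_{AB})$ in a rest frame with $\gamma_{AB}=\delta_{AB}$ and $P_{AB}=-2\rho\,\partial W/\partial\gamma_{AB}$. For a unit wave vector $k$, the squared speeds are the generalized eigenvalues of $\mathcal A_k(v,v)=4\rho\,\partial_\gamma^2W(v\odot k,v\odot k)-P(k,k)|v|^2$ relative to the inertia $\varepsilon|v|^2+P(v,v)$. This reproduces $c_s^2=dp/d\varepsilon$ for a fluid and \eqref{eq:materialspeed} for the paper's law. Real, strictly subluminal speeds therefore require, for all $v,k$,
\[
 P(k,k)|v|^2<4\rho\,\partial_\gamma^2W(v\odot k,v\odot k)
 <P(k,k)|v|^2+\varepsilon|v|^2+P(v,v).
\]
The lower end of this window is the compressive prestress. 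Its width is the enthalpy, not the DEC margin $\varepsilon-\max|P_i|$.

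On the histories of section~\ref{sec:existence}, $P\simeq p_0>0$, and $p_0/\sigma_0=(1-s)/(4s)\to1$ as $s\to1/5$. A Hessian that is ``small'', or of size $\theta(\varepsilon-\max|P_i|)$ with $\theta$ of order one, then falls below $p_0$ and gives imaginary speeds. Conversely, strict DEC permits $P(k,k)<-\varepsilon/2$; there the longitudinal window lies entirely below zero, so positive definiteness is the wrong class altogether. Near the static branch your continuity argument can be completed, for instance with isotropic transverse and longitudinal coefficients in $(p_0,\sigma_0+2p_0)$. But that computation is exactly what has to be proved, and it would cover only histories close to the static sphere.

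The missing idea is the paper's choice of law, which removes the difficulty. It takes $\mathcal E=\tfrac12Q_{AB}(\tau,X)H^{AB}+V(\tau,X)+nz$, with $Q$ and $V$ given explicitly by \eqref{eq:globalcoefficients} and a constant $z_0$ obeying \eqref{eq:carriedmargin}. The $Q$ term is a wave-map term with exactly null characteristics. The $nz$ term adds inertia $nzH^{-1}$ without stiffness, so \eqref{eq:materialspeed} lies in $(0,1)$ at every point of any strict Type~I history, with no smallness assumption.

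Your energy bookkeeping also needs that closure. Pressure work is automatic, so the material energy equation reads $\rho(W_\tau U\tau+W_\xi U\xi)=-\mathcal L$. Once $W$ depends on the clock, which you already need where $\xi$ stalls, the reaction law must be solved from this identity with $\mathcal L=\bar{\mathcal L}(\tau,X)\ge0$ prescribed. The paper does this with $Uz=(-\mathcal L-\mathcal E_\tau)/n$. If you prescribe $U\xi=\Phi/\rho$ independently, the emitted power off the history has no controlled sign.

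Your example extension, a smooth square root of the radiation coefficient, does not exist in general. On a saturated burn $n^2\propto1-q$, whose square root is not smooth at the forward pole.

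Finally, you treat only the energy component of $-[T_{Na}]$. Reproducing the outgoing flux also requires two further ingredients:
\begin{itemize}
\item the null-direction closure \eqref{eq:emissionclosure}, which supplies the tangential recoil and $T^+_{NN}$;
\item forcing the label equations by that recoil through the identity \eqref{eq:materialnoether}.
\end{itemize}
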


The labels $X^A$ identify material elements and remain constant along
the future unit material velocity $U$. The operator $U$ differentiates
along material proper time. The strain is $H^{AB}=h^{ab}D_aX^AD_bX^B$.
A fixed material area form defines the conserved particle density $n$, distinct from the radiation coefficient $n^2$ in \eqref{eq:axin}.
With an internal clock $U\tau=1$ and carried energy per particle $z>0$, take
\begin{equation}
 \mathcal E(\tau,X,H,z)
   =\frac12Q_{AB}(\tau,X)H^{AB}+V(\tau,X)+n(H,X)z.
 \label{eq:globalenergy}
\end{equation}
The coefficients $Q>0$ and $V$ fit the stress; $nz$ adds inertia without
stiffness. The acoustic Rayleigh quotient is
\begin{equation}
 c^2(v)=\frac{Q_{AB}v^Av^B}
 {Q_{AB}v^Av^B+nz(H^{-1})_{AB}v^Av^B},\qquad v\ne0,
 \label{eq:materialspeed}
\end{equation}
which lies strictly between zero and one.
For emitted power $\mathcal L=-\bar T^+_{Na}\bar U^a\ge0$, the laws
\begin{equation}
 U\tau=1,\qquad Uz=\frac{-\mathcal L-\mathcal E_\tau}{n}
 \label{eq:globalreaction}
\end{equation}
give energy balance including pressure work. Overbars here denote the
prescribed history. Appendix~\ref{app:material}
constructs the coefficients and flux. This law is tailored to the history;
normal stability requires a separate test.

\subsection{Short-wavelength normal perturbations}

Use the prescribed Gaussian normal metric
$g=\dd\ell^2+(h_{ab}+2\ell\bar K_{ab}+O(\ell^2))\dd x^a\dd x^b$,
where $\ell$ is signed proper distance from the shell and the background
obeys \eqref{eq:mombal}. We test displacements perpendicular to the shell
while keeping the ambient geometry fixed. Freezing coefficients at one
event then isolates the local short-wavelength behavior.

\begin{proposition}[Instability of the frozen surface equations]
\label{prop:membrane}
Consider the material, reaction, and emission laws of Theorem~\ref{thm:globalmaterial}, together with the normal balance, on a prescribed ambient geometry.
Assume no bending moments or additional normal traction.
Freeze the linearized coefficients in a material rest frame, and let
$p(\hat k)=P^{ij}\hat k_i\hat k_j$ for a unit spatial direction $\hat k$.
If $p(\hat k)>0$, the full frozen surface system has modes
$e^{\sigma(k)t+ik\hat k\cdot x}$ with nonzero normal displacement and
\begin{equation}
 \sigma(k)=k\sqrt{\frac{p(\hat k)}{\varepsilon}}+O(1),
 \qquad k\longrightarrow\infty.
 \label{eq:surfacegrowth}
\end{equation}
Intrinsic elastic moduli and local reaction and emission response do not remove this branch.
\end{proposition}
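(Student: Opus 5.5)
The plan is to linearize the normal force balance $S^{ab}\bar K_{ab}=[T_{NN}]$ of \eqref{eq:mombal} about the prescribed history, for a normal displacement $\ell=\zeta(x^a)$ in the Gaussian normal metric, with the ambient geometry held fixed. The tangential balances, material laws \eqref{eq:globalenergy}, and reactions \eqref{eq:globalreaction} are coupled in as lower-order or decoupled terms. The key point is that a normal displacement $\zeta$ perturbs the mean extrinsic curvature at second order in derivatives, $\delta\bar K_{ab}=-D_aD_b\zeta+(\text{curvature terms})\zeta$. The stress contracted against it therefore gives a term $-S^{ab}D_aD_b\zeta$. Inertia enters through the normal acceleration of the material worldlines. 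In the material rest frame this is $\varepsilon\,U^aU^bD_aD_b\zeta$, which equals $-\varepsilon\,\partial_t^2\zeta$ after combining with the $-\varepsilon U^aU^b$ part of $S^{ab}$. The frozen principal symbol of the normal equation is then
\[
 \varepsilon\,\partial_t^2\zeta-P^{ij}\partial_i\partial_j\zeta=\text{(lower order in }\zeta,\ \text{tangential displacement, }\delta\tau,\ \delta z,\ \delta\mathcal L),
\]
which is a backward (elliptic-in-spacetime) wave operator whenever $p(\hat k)>0$. The sign is the familiar fact that tension stabilizes and positive pressure buckles a membrane without bending rigidity.

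The steps, in order, are as follows.

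\textbf{Step 1.} Derive $\delta\bar K_{ab}$ and the normal acceleration in Gaussian normal coordinates. Since the background obeys \eqref{eq:mombal}, the zeroth-order terms cancel. The jump $[T_{NN}]$ depends on $\zeta$ only algebraically, through the background bulk fields, so it is $O(1)$ in $k$.

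\textbf{Step 2.} Write the full frozen linear system for $(\zeta,\xi^A,\delta\tau,\delta z)$, where $\xi^A$ is tangential displacement of the labels. The tangential equations $D_bS^b{}_a=-[T_{Na}]$ give a wave operator for $\xi$ with the speeds \eqref{eq:materialspeed}. They couple to $\zeta$ only through terms like $\bar K\,\partial\zeta$ and $S\,\bar K\,\partial\zeta$, which carry one fewer derivative than the principal part. The reaction laws are first order along $U$ and couple to $\zeta,\xi$ through $\mathcal E_\tau$ and $\mathcal L$. By the no-bending-moment assumption and locality of emission, these contain at most first derivatives of the displacements.

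\textbf{Step 3.} Insert the plane wave $e^{\sigma t+ik\hat k\cdot x}$ and form the dispersion determinant. Scale $\sigma=k\omega$ and rescale $\xi$, $\delta\tau$, $\delta z$ by suitable powers of $k$. The leading order in $k$ of the determinant factorizes into
\begin{itemize}
\item the normal factor $\varepsilon\omega^2-p(\hat k)$,
\item the elastic factors $\omega^2+c^2$, which have no positive real roots,
\item the transport factors $\omega=0$ (or $\omega=\pm i\,U\!\cdot\!\hat k$-type roots).
\end{itemize}
The root $\omega_0=\sqrt{p/\varepsilon}>0$ is simple and distinct from the others. The implicit function theorem in $1/k$ then gives a root $\sigma=k\omega_0+O(1)$, with eigenvector dominated by the $\zeta$ component, so the normal displacement is nonzero. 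This yields \eqref{eq:surfacegrowth}. The elastic moduli and reaction/emission response enter only the other factors and the $O(1)$ corrections, and so cannot remove the branch.

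The main obstacle is Step 2/3 bookkeeping. I must confirm that no coupling term between $\zeta$ and the tangential or reaction variables reaches principal order and shifts the normal factor. A mixed term of order $k^2$ could, through the Schur complement, modify $p$ into an effective stiffness. The first check is the scaling $\xi\sim\zeta$, which shows that the cross terms in the tangential equation are $O(k)$ while its diagonal is $O(k^2)$. The Schur correction to the normal row is then $O(k^2)\cdot O(k)^{-2}\cdot O(k)$-type bounded at order $k^0$ relative to $k^2$.

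A secondary subtlety is sign conventions. Positive $p$ should correspond to compression of the surface, and since $\varepsilon>0$ under strict DEC, the square root is real. I would verify both points against the static shell \eqref{eq:decfactor}, where $p_0>0$.
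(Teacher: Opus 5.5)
Your plan follows the paper's proof, but Step 1 has a sign error on exactly the sign the proposition is about. That step has to be corrected, not just checked.

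\textbf{Agreement with the paper.} In your notation ($\zeta$ normal, $\xi^A$ tangential), the paper argues as follows:
\begin{itemize}
\item It takes the principal normal operator from the quadratic action $\tfrac12S^{ab}\partial_a\zeta\partial_b\zeta$.
\item It notes that $\delta h_{ab}=2\bar K_{ab}\zeta$ is algebraic in $\zeta$, and that $\delta S^{ab}\bar K_{ab}$ contains only first derivatives of $\xi$.
\item It reduces to the scaled first-order system $W=(k\zeta,\partial_t\zeta,k\xi,\partial_t\xi,\delta\tau,\delta z)$ with $W_t=[kA_1(\hat k)+O(1)]W$.
\item It reads off the principal factors $(\varepsilon\lambda^2-p(\hat k))\det(\lambda^2M+Q)$, with $M=Q+nzH^{-1}$, plus zero transport roots.
\item It continues the simple isolated root $c_*=\sqrt{p(\hat k)/\varepsilon}$ in $1/k$.
\end{itemize}
Your dispersion-determinant and Schur-complement version is equivalent. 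The correct count for the correction to the normal entry is $O(k)\,O(k^{-2})\,O(k)=O(1)$, against a diagonal of order $k^2$.

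\textbf{The error in Step 1.} There is no separate inertial term to add. Contracted with the Hessian part of $\delta\bar K_{ab}$, the $\varepsilon U^aU^b$ part of $S^{ab}=\varepsilon U^aU^b+P^{ab}$ already is $\varepsilon$ times the normal acceleration. Combining it with another inertia term double counts.

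In the rest frame the principal part is simply $S^{ab}\partial_a\partial_b\zeta=\varepsilon\partial_t^2\zeta+P^{ij}\partial_i\partial_j\zeta$, which is \eqref{eq:normalprincipal}. On $e^{\sigma t+ik\hat k\cdot x}$ its symbol is $\varepsilon\sigma^2-p(\hat k)k^2$.

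Your displayed operator $\varepsilon\partial_t^2\zeta-P^{ij}\partial_i\partial_j\zeta$ is the ordinary hyperbolic wave operator. For $p(\hat k)>0$ it gives $\sigma=\pm ik\sqrt{p(\hat k)/\varepsilon}$, which is neutral oscillation. It also contradicts your own ``elliptic in spacetime'' description and the factor $\varepsilon\omega^2-p(\hat k)$ you use in Step 3. Once the extra inertia is removed, Step 3 goes through as you describe.

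\textbf{A smaller inaccuracy.} $\delta[T_{NN}]$ is algebraic in the local state, not in $\zeta$ alone. The emission closure gives $T^+_{NN}=\mathcal Lc$ with $c=\sqrt{1-v^2}$ and $v^a=d_AD^aX^A$, so it also carries first derivatives of $\xi$. This is still subprincipal, so the unstable branch and the $O(1)$ error in \eqref{eq:surfacegrowth} are unaffected.
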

\begin{proof}
For normal displacement $\xi$, the quadratic action
$\tfrac12S^{ab}\partial_a\xi\partial_b\xi$ gives
\begin{equation}
 \varepsilon\partial_t^2\xi+P^{ij}\partial_i\partial_j\xi=0.
 \label{eq:normalprincipal}
\end{equation}
The variation $\delta h_{ab}=2\bar K_{ab}\xi$ couples only first
normal derivatives to the intrinsic equations;
$\delta S^{ab}\bar K_{ab}$ contains only first derivatives of
$\eta=\delta X$. Algebraic emission traction and local reactions add
no second displacement derivatives.
With $\alpha=(\delta\tau,\delta z)$ and
$W=(k\xi,\partial_t\xi,k\eta,\partial_t\eta,\alpha)$, the bounded
triangular reduction of appendix~\ref{app:material} gives
$W_t=[kA_1(\hat k)+O(1)]W$. Apart from zero transport roots, its
principal characteristic factors are
\begin{equation}
 (\varepsilon\lambda^2-p(\hat k))\det(\lambda^2M+Q),\qquad
 M=Q+nzH^{-1}>0.
 \label{eq:surfacefactors}
\end{equation}
The elastic roots are imaginary. The positive normal root
$c_*=\sqrt{p(\hat k)/\varepsilon}$ is simple and isolated, so analytic
perturbation in $1/k$ gives $\sigma=kc_*+O(1)$ and an eigenvector
converging to the normal eigenvector.

\end{proof}

The sign in \eqref{eq:normalprincipal} is the usual membrane buckling sign~\cite{yang2023,mourao2025}:
positive pressure is compression, whereas negative pressure is tension.
For the static Minkowski--Schwarzschild shell,
\begin{equation}
 c_*^2=\frac{p_0}{\sigma_0}=\frac{1-s}{4s}>0,
 \qquad s=\sqrt{1-x}\in(0,1).
 \label{eq:staticgrowth}
\end{equation}
Both exact families retain a uniform positive $c_*$ on sufficiently small
compact histories.

Appendix~\ref{app:normalresponse} proves the resulting frozen Sobolev
ill-posedness: arbitrarily small high-frequency data can produce
arbitrarily large Sobolev norms at a fixed positive time. It also gives a sufficient bulk-response
condition for coupled growth. That condition remains unproved for the
radiating junction.

A bare bending stiffness gives $\omega^2=(Bk^4-pk^2)/\varepsilon$, bounding the
unstable band but making $|\omega/k|$ unbounded. A causal restoring mechanism needs
additional dynamics; finite thickness also requires a new material equilibrium.

\section{Finite-thickness matter and radiative control}
\label{sec:bulkbody}

The need for a different wall model is already visible in the static
limit. Consider the spherical Kerr--Schild ansatz, written in static
coordinates as
\begin{equation}
 \dd s^2=-f(r)\dd t^2+f(r)^{-1}\dd r^2+r^2\dd\Omega^2,
 \qquad f(r)=1-\frac{2M(r)}r>0.
 \label{eq:radialtensionmetric}
\end{equation}
Let $M\in C^2([r_-,r_+])$, with $0<r_-<r_+$ and
$M(r_-)=M'(r_-)=0$, so the metric joins a flat cavity with continuous
first derivatives. Primes denote $r$ derivatives. In a static orthonormal
frame, Einstein's equations give the density and principal
pressures~\cite{balart2014}:
\begin{align}
 8\pi\rho&=\frac{1-f-rf'}{r^2}=\frac{2M'}{r^2},
 &p_r&=-\rho,\nonumber\\
 8\pi p_t&=\frac{f''}{2}+\frac{f'}r=-\frac{M''}r.
 \label{eq:radialtensionstress}
\end{align}
Hence $\rho+p_t=-r\rho'/2$. The weak energy condition requires
$\rho\ge0$ and $\rho+p_t\ge0$, so $\rho'\le0$.
Since $\rho(r_-)=0$, these inequalities force $\rho=0$ throughout
$[r_-,r_+]$, and then $M'=0$ and $M=0$.
Thus any nontrivial wall in this ansatz violates the weak energy
condition, and therefore DEC. The imposed radial tension $p_r=-\rho$
is essential to this obstruction.

We use a finite-thickness elastic solid with independent radial and
tangential stresses, its own material dynamics, and a self-consistent
exterior. It is a separate model; no limit connecting it to the exact
thin shell is established. Its static spherical reference has a flat
vacuum cavity by spherical symmetry; a general deformation need not.

\subsection{Constitutive law and static cavity bodies}
For a three-dimensional body, let $X^A$ label material particles.
Use a Euclidean reference metric and unit reference particle density.

Let $H^{AB}=g^{\mu\nu}\partial_\mu X^A\partial_\nu X^B>0$,
$n=\sqrt{\det H}$, and let $z\ge0$ be fuel energy per particle.
The future unit material velocity $U$ satisfies $UX^A=0$.
The state $H=I$ is the unstrained reference state.
For $m,\mu,\kappa>0$, take
\begin{equation}
 \rho(H,z)=n\left[m+z+\frac{\mu}{2}
       \left(\operatorname{tr}H^{-1}-3+\log\det H\right)
                     +\frac{\kappa}{8}(\log\det H)^2\right].
 \label{eq:bulkenergy}
\end{equation}
Write $W_0=\rho/n$, the energy per particle. This is a compressible
neo-Hookean elastic law with carried fuel energy.
Here the constant $m$ is rest energy per particle, distinct from the exterior mass $m(u)$.

\begin{proposition}[Local properties of the bulk constitutive law]
\label{prop:bulkcandidate}
At $H=I$, \eqref{eq:bulkenergy} has $\rho=m+z$, zero principal pressures,
and transverse and longitudinal elastic speeds
\begin{equation}
 c_T^2=\frac{\mu}{m+z},\qquad
 c_L^2=\frac{\kappa+2\mu}{m+z}.
 \label{eq:bulkspeeds}
\end{equation}
If $\kappa+2\mu<m$, its relaxed elastic kinetic and strain quadratic forms are positive
modulo rigid motions, and its elastic characteristics are strictly subluminal
on an open neighborhood of the relaxed states with $z$ in any fixed compact nonnegative interval.
Its stress satisfies strict DEC on such a neighborhood.
\end{proposition}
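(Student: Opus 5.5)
The plan is to work directly from the hyperelastic stress formula. For an energy per particle $W_0(H,z)$, the Cauchy stress of a relativistic elastic solid is
\[
T_{\mu\nu}=\rho U_\mu U_\nu+\Bigl(2n\frac{\partial W_0}{\partial H^{AB}}\partial_\mu X^A\partial_\nu X^B-\rho\,\Pi_{\mu\nu}\Bigr),
\]
with $\Pi$ the projector orthogonal to $U$. Sign conventions are fixed so that the stress vanishes for $\rho=n\,\mathrm{const}$. We first compute in the material rest frame, diagonalizing $H=\mathrm{diag}(h_1,h_2,h_3)$. Using $\partial_{H}\operatorname{tr}H^{-1}=-H^{-2}$ and $\partial_H\log\det H=H^{-1}$, the principal pressures are
\[
p_i=-\rho+2n\,h_i\,\partial_{h_i}W_0
\]
(equivalently $p_i=-\partial(\rho/n\cdot n)/\partial\log\ldots$). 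At $H=I$ the derivative of the $\mu$ bracket, $-h_i^{-1}+h_i^{-1}$, vanishes, and so does the $\kappa$ term. Hence only $m+z$ survives. This gives $\rho=m+z$, and the pressures from the $n(m+z)$ part cancel in the usual way (dust), so $p_i=0$.

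Next we linearize about $H=I$ to obtain the acoustic tensor. Writing $h_i=1+2e_i$ in the small-strain limit, the strain energy per unit volume is $\mu\,|e|^2+\tfrac{\kappa}{2}(\operatorname{tr}e)^2$. This is the standard isotropic form with shear modulus $\mu$ and Lamé $\lambda=\kappa$, so the bulk modulus is $\kappa+2\mu/3$. The expansions to check are $\operatorname{tr}H^{-1}-3+\log\det H=2|e|^2+O(e^3)$ and $(\log\det H)^2=4(\operatorname{tr}e)^2$. Since the relaxed stress vanishes, the inertia is the enthalpy $\rho+p=m+z$. Plane-wave analysis of the linearized equations $\partial_\nu T^{\mu\nu}=0$ then gives $(m+z)\omega^2=\mu k^2$ and $(\kappa+2\mu)k^2$, which is \eqref{eq:bulkspeeds}. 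The kinetic form is $(m+z)|\dot\xi|^2>0$. The strain form $\mu|e|^2+\tfrac{\kappa}{2}(\operatorname{tr} e)^2$ is positive definite on symmetric strains because $\mu,\kappa>0$, and it vanishes only on infinitesimal rigid motions. This gives positivity modulo rigid motions.

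Subluminality and DEC then follow by continuity. If $\kappa+2\mu<m\le m+z$, then $c_L^2<1$ and $c_T^2<c_L^2$ at the relaxed states, uniformly for $z$ in a compact interval $[0,z_1]$. The characteristic speeds of the quasilinear elastic system are the roots of the acoustic determinant. Its coefficients depend continuously on $(H,z)$ and on the direction, which ranges over a compact sphere. Strict inequalities persist on an open neighborhood of the compact set $\{H=I\}\times[0,z_1]$, and a tube lemma makes the neighborhood uniform. For DEC, the stress is Type I with $\varepsilon=\rho$ in the rest frame. At $H=I$ we have $\rho=m+z\ge m>0=\max|p_i|$, and the strict inequality is again open under continuous dependence. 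Hyperbolicity of the acoustic tensor near relaxed states follows from the positive definiteness above, which is also an open condition.

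The main obstacle is the linearization bookkeeping. We must confirm that the relativistic acoustic tensor at a stress-free state reduces exactly to the Newtonian Lamé form divided by $\rho+p=m+z$, with no extra terms from the fuel $z$ or from $\log n$. The fuel energy $nz$ contributes only inertia and no stiffness, so the $z$-dependence enters only through the denominator. I would verify this by expanding $\rho$ to second order in the displacement gradient in the rest frame and reading off the Legendre–Hadamard tensor.
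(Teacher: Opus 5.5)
Your route is the paper's route. You compute the relaxed stress, expand $W_0$ to second order to get the isotropic Lam\'e form with shear modulus $\mu$ and Lam\'e constant $\kappa$, and read off the acoustic tensor $\mu I+(\mu+\kappa)\hat k\otimes\hat k$ with inertia $m+z$. Subluminality and strict DEC (margin at least $m$) then follow by continuity.

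Your conclusions are right, but the stress formula you display is wrong, and it contradicts both them and your own normalization that dust carries no stress. Since $\partial n/\partial H^{AB}=\tfrac n2(H^{-1})_{AB}$ and $(H^{-1})_{AB}\partial_\mu X^A\partial_\nu X^B=\Pi_{\mu\nu}$, varying $-\int\rho\sqrt{-g}$ gives
\[
 T_{\mu\nu}=2\frac{\partial\rho}{\partial H^{AB}}\partial_\mu X^A\partial_\nu X^B-\rho g_{\mu\nu}
 =\rho U_\mu U_\nu+2n\frac{\partial W_0}{\partial H^{AB}}\partial_\mu X^A\partial_\nu X^B .
\]
So your $-\rho\Pi_{\mu\nu}$ term is spurious. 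It belongs with $2\,\partial\rho/\partial H$, where the $\tfrac n2H^{-1}$ piece cancels it, not with $2n\,\partial W_0/\partial H$.

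Taken literally, your $p_i=-\rho+2nh_i\partial_{h_i}W_0$ gives $p_i=-(m+z)$ at $H=I$. That would make DEC only marginal and the inertia $\rho+p$ vanish. Your appeal to the ``usual dust cancellation'' cannot be derived from what you wrote.

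The correct principal pressures are
\[
 p_i=2nh_i\partial_{h_i}W_0=n\bigl[\mu(1-h_i^{-1})+\tfrac\kappa2\log\det H\bigr].
\]
For $h_i=e^{-2r_i}$ this is the paper's $-n\{\mu(e^{2r_i}-1)+\kappa\theta\}$, and the relaxed pressures then vanish honestly. A minor slip: the derivative of the $\mu$ bracket is $-h_i^{-2}+h_i^{-1}$, not $-h_i^{-1}+h_i^{-1}$.

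On positivity modulo rigid motions, your argument differs from the paper's. You use pointwise definiteness of $\mu|e|^2+\tfrac\kappa2(\operatorname{tr}e)^2$ in $e$, plus the fact that $\operatorname{sym}Du=0$ forces an infinitesimal rigid motion. That needs the body to be connected and gives only strict positivity on the quotient. The paper instead invokes Korn's inequality on the bounded connected body, which gives coercivity modulo rigid motions. Your version suffices for the literal statement, but the coercive form is what the later elastic energy estimates rely on.
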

\begin{proof}
Let the principal stretches be $e^{r_i}$, put $\theta=\sum_i r_i$, and write $\rho=nW$.
Then
\[
 W=m+z+\frac{\mu}{2}\sum_i(e^{2r_i}-1-2r_i)
                      +\frac{\kappa}{2}\theta^2,\qquad
 p_i=-n\{\mu(e^{2r_i}-1)+\kappa\theta\}.
\]
Thus $W\ge m+z$ and the relaxed pressures vanish.
For an ordinary deformation gradient $F=I+D$, the quadratic reference strain energy is
$\mu|\operatorname{sym}D|^2+\frac{\kappa}{2}(\operatorname{tr}D)^2$.
Korn's inequality removes only infinitesimal rigid motions on a bounded connected body.
The kinetic quadratic form is $(m+z)|\dot u|^2/2$.
The acoustic tensor in a unit direction $\hat k$ is
$\mu I+(\mu+\kappa)\hat k\otimes\hat k$, giving \eqref{eq:bulkspeeds}.
Positivity and the strict gap below unit speed persist by continuity of these symmetric quadratic forms.
The relaxed DEC margin is at least $m$.

\end{proof}

\begin{corollary}[Static finite-thickness reference bodies]
\label{cor:bulkstatic}
Fix a connected material annulus
$\mathcal B=\{X\in\mathbb R^3:R_-<|X|<R_+\}$ with $0<R_-<R_+$,
a smooth fuel profile $z_0(X)\ge0$, and the parameters of Proposition~\ref{prop:bulkcandidate}
with $\kappa+2\mu<m$.
Restore Newton's constant $G$ while retaining $c=1$.
For sufficiently small $G>0$, the material admits an asymptotically flat static
Einstein--elastic configuration near its relaxed state, with a regular vacuum cavity,
traction-free material boundaries, strict material DEC, and subluminal local elastic characteristics.
\end{corollary}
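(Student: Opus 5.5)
We prove the corollary by perturbing in $G$ around the flat, relaxed configuration. The tool is the implicit function theorem applied to the static, spherically symmetric Einstein--elastic system. With spherical symmetry the body is described by one radial deformation map $r(R)$ on $[R_-,R_+]$. The metric takes the form
\[
 -e^{2\Phi(r)}\dd t^2+\Bigl(1-\tfrac{2Gm(r)}{r}\Bigr)^{-1}\dd r^2+r^2\dd\Omega^2,
\]
where the mass function $m(r)$ is determined by the density~\eqref{eq:bulkenergy} through $m'=4\pi r^2\rho$. We take $m(r)=0$ for $r\le r(R_-)$; by spherical symmetry (Birkhoff) the cavity is then exactly flat, and we set $\Phi$ constant there. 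The exterior is Schwarzschild with mass $m(r(R_+))$, so it is asymptotically flat. The unknowns therefore reduce to $r(R)$ together with $\Phi$, and the equations are the Tolman--Oppenheimer--Volkoff relation with anisotropic pressure, $p_r'=-(\rho+p_r)\Phi'-2(p_r-p_t)/r$, plus the traction-free conditions $p_r=0$ at $R=R_\pm$. We compute $p_r$ and $p_t$ from the principal stretches $r'(R)$ and $r/R$ as in the proof of Proposition~\ref{prop:bulkcandidate}, using the metric to convert stretches into $H$.

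At $G=0$ the pair $r(R)=R$, $\Phi=0$ solves the system with all pressures zero. The equations do not involve $z_0$ at this order, because the fuel enters only through $\rho$, and $\rho$ feeds back only via $G$. Write the system as $\mathcal F(G,r)=0$, viewed as a map from $C^{2,\alpha}([R_-,R_+])$ into $C^{0,\alpha}\times\mathbb R^2$; here we have substituted $\Phi'=G(m+4\pi r^3p_r)/(r(r-2Gm))$, which is smooth in $G$ near $0$. The derivative of $\mathcal F$ with respect to $r$ at $G=0$ is the radial linearized elasticity operator with traction boundary conditions:
\[
 \bigl((\kappa+2\mu)\delta r'' + \ldots\bigr),
\]
the familiar Lamé radial problem $(\lambda+2\mu)(u'+2u/R)'=0$ with Neumann-type data. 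Its kernel consists of $u=aR+b/R^2$ with zero traction at both radii. For a spherical annulus with $\mu,\kappa>0$ this forces $a=b=0$: the traction is $(3\kappa')a-4\mu b/R^3$, and requiring it to vanish at two distinct radii gives zero. This is the radial case of the Korn coercivity statement in Proposition~\ref{prop:bulkcandidate}, since radial fields contain no rigid motions. By Fredholm theory, or by solving the ODE explicitly, the linearization is an isomorphism. The implicit function theorem then gives a smooth branch $r_G$ with $\|r_G-\mathrm{id}\|=O(G)$.

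It remains to check the listed properties. Because the stretches are $O(G)$-close to $1$ in $C^1$, the states lie in the open neighborhood of Proposition~\ref{prop:bulkcandidate}, with $z_0$ ranging over a compact set. That proposition then yields strict DEC and subluminal characteristics; the local characteristics are computed in orthonormal frames, where the metric enters only through $H$. Regularity of the cavity is built in, because $m=0$ there and $1-2Gm/r>0$ holds for small $G$. The step I expect to be the main obstacle is making the function spaces and the boundary-value formulation rigorous. In particular, the matter region in $r$ depends on the unknown $r(R_\pm)$; this is handled by working in the material coordinate $R$ throughout. One must also check that the metric is smooth across the material boundaries, which holds to the regularity permitted by the jump in $\rho$: the traction-free condition ensures that $p_r$ is continuous, and that suffices for a $C^{1,1}$ junction.
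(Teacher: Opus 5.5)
\textbf{Gap: you only prove the radial case.} Your argument works when the fuel is radial, $z_0=z_0(|X|)$, but the corollary fixes an arbitrary smooth $z_0(X)\ge0$ on the annulus. The paper later specializes to radial $z_0$ to get spherical references, and it remarks that a nonspherical static cavity is generally curved. That is why the statement only asserts a \emph{regular} vacuum cavity. For nonradial $z_0$ the relaxed density $m+z_0(X)$ depends on angle. Then there is no mass function with $m'=4\pi r^2\rho$, and Birkhoff's theorem gives nothing in the cavity. The static equations also no longer reduce to the anisotropic TOV equation, and the equilibrium cannot stay spherical. So your reduction to a single ODE for $r(R)$ does not cover the statement as posed.

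\textbf{Missing idea: the rigid-motion kernel.} You need a replacement for your invertibility step in the genuinely three-dimensional problem. You noted that radial fields contain no rigid motions; in general they do. On the annulus, the linearized traction (Neumann) problem of isotropic elasticity has a six-dimensional kernel of infinitesimal Euclidean motions. Its range contains only equilibrated loads, with zero total force and torque. The implicit function theorem therefore cannot be applied directly. You must fix the placement of the body to remove the kernel, solve a projected equation, and then show that the projected solution solves the full system, i.e.\ that the self-gravitating load is automatically equilibrated. This is the content of Proposition~4.3 and Theorem~5.9 of \cite{andersson2007}.

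\textbf{How the paper proceeds.} The paper verifies the constitutive hypotheses of \cite{andersson2007} for $W=\rho/n$ at relaxation:
\begin{itemize}
\item $W=m+z_0\ge m>0$;
\item $D_HW=0$;
\item $D^2_HW(I)[E,E]=\frac\mu2\operatorname{tr}(E^2)+\frac\kappa4(\operatorname{tr}E)^2>0$.
\end{itemize}
It then uses $W^{2,p}$ Sobolev embedding with $p>3$ to get $C^1$ closeness to the relaxed state. That closeness gives the regular cavity, strict DEC, and subluminal characteristics, and zero traction excludes a surface layer.

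\textbf{What your route does give.} For radial $z_0$ your argument is a valid, self-contained alternative: a radial ODE, the implicit function theorem, and triviality of the Lam\'e kernel $aR+b/R^2$ under two traction conditions. It directly produces the spherical reference with an exactly flat cavity that the paper uses later. To prove the corollary as stated, however, you need either the theorem of \cite{andersson2007} or your own treatment of the rigid-motion kernel and the equilibration condition.
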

\begin{proof}
Apply Proposition 4.3 and Theorem 5.9 of~\cite{andersson2007} to
$W=\rho/n$. The annulus is bounded, connected, and smooth; the theorem
does not require a connected boundary. At relaxation,
$W=m+z_0\ge m>0$, $D_HW=0$, and for symmetric $E\ne0$,
\[
 D_H^2W(I)[E,E]=\frac\mu2\operatorname{tr}(E^2)
                      +\frac\kappa4(\operatorname{tr}E)^2>0.
\]
These verify conditions (3.20)--(3.22). Projection removes rigid motions
and equilibration restores the field equations. Sobolev embedding for
$p>3$ preserves the cavity, DEC, and causality; zero traction excludes
a surface layer.
\end{proof}

The existence threshold for $G$ is not uniform as the wall becomes
thinner. A nonspherical static cavity is generally curved.

\subsection{Photon emission and recoil}

We couple fuel depletion to a nonnegative photon source.
Write a photon momentum as $p^\mu=E(U^\mu+\omega^\mu)$, where
$U\cdot\omega=0$, $\omega\cdot\omega=1$, and use
$dP=E\,dE\,d\Omega$ on the future mass shell.
Choose a smooth nonnegative $\psi$ supported in a compact positive energy interval with
$4\pi\int E^2\psi(E)\,dE=1$.
Unless stated otherwise, take a smooth rate $\gamma\ge0$ supported away from both material faces
and a smooth material-frame vector $b$ with $|b|\le1$.
Here $\gamma$ is the local fuel-depletion rate, and $b$ controls the
angular asymmetry of emission.
For a finite pulse the rate is also compactly supported in material proper time.
Set
\begin{align}
 {\cal L}_g f=\mathcal C=n\gamma z\,\psi(E)(1+b\cdot\omega),\qquad
 &Uz=-\gamma z,\nonumber\\
 \nabla_\mu T_{\rm mat}^{\mu\nu}=-J^\nu,\quad
 &J^\nu=n\gamma z(U^\nu+b^\nu/3).
 \label{eq:bulkemission}
\end{align}
Here $f$ is the photon distribution and ${\cal L}_g$ is differentiation along the affinely parametrized null geodesic flow on the mass shell.
The photon stress is $T_\gamma^{\mu\nu}=\int p^\mu p^\nu f\,dP$.
Spherical moments give $\nabla_\mu T_\gamma^{\mu\nu}=J^\nu$.
Thus total stress is conserved; the material energy projection gives
fuel depletion and the spatial projection gives recoil. Particle
conservation, $\nabla_\mu(nU^\mu)=0$, is an identity. Fuel stays
nonnegative, the source vanishes at $z=0$, and $f\ge0$ implies photon
DEC. The emitter is a continuum control law.
More generally, replace $1+b\cdot\omega$ by a smooth nonnegative
angular kernel $a_*(\omega)$ and write
\begin{equation}
 \frac1{4\pi}\int a_*\,d\Omega=1,\qquad
 d_*^i=\frac1{4\pi}\int a_*\omega^i\,d\Omega,\qquad
 J=n\gamma z(U+d_*^ie_i),\qquad |d_*|<1.
 \label{eq:generalemission}
\end{equation}
The vector $d_*$ is the mean emission direction weighted by power;
its magnitude is the recoil efficiency. The dipole has $d_*=b/3$.

\subsection{Emission across free faces}

We switch emission on after an initial static interval, so the initial
data satisfy the boundary conditions and their time derivatives.
Conservation of total stress preserves the Einstein constraints in
harmonic coordinates. The remaining issue is regularity at a free face,
which moves with the material and has zero normal traction.
Nearly tangent, or grazing, rays can give $f\notin H^1$ and moments
outside piecewise $H^2$ (appendix~\ref{app:transport}). Here $H^s$
controls square-integrable derivatives through order $s$; piecewise
means separately in the material, cavity, and exterior.
We therefore restrict emission to rays with a uniformly positive
outward crossing velocity.

\label{sec:transverse}

Let $e_r$ be the outward unit material vector proportional to
$g^{\mu\nu}\partial_\nu|X|$.
Choose a smooth nonnegative function $a_0(\mu)$, supported in $\mu>3/4$, with
$\tfrac12\int_{-1}^1a_0(\mu)\,d\mu=1$.
For a material-frame vector $b$, $|b|<1/2$, define
\begin{equation}
 a_b(\omega)=a_0(e_r\cdot\omega)
                   [1+b\cdot(\omega-\beta e_r)],
 \qquad
 \beta=\frac12\int_{-1}^1\mu a_0(\mu)\,d\mu\in(3/4,1).
 \label{eq:outwardkernel}
\end{equation}
Write $\langle\cdot\rangle_0=(4\pi)^{-1}\int(\cdot)a_0\,d\Omega$.
Then $\langle\omega\rangle_0=\beta e_r$ and $\int a_b\,d\Omega=4\pi$.
The bracket is at least $1-(1+\beta)|b|>0$, so the support cone is unchanged.
Replace $(1+b\cdot\omega)$ in \eqref{eq:bulkemission} by $a_b$ and replace
$b/3$ in $J$ by
$d_b=(4\pi)^{-1}\int\omega a_b\,d\Omega$.
The recoil is exactly affine in the control:
\begin{equation}
 d_b=\beta e_r+\mathsf C_{e_r}b,\qquad
 \mathsf C_e=\frac{1-\zeta}{2}(I-e\otimes e)
                  +(\zeta-\beta^2)e\otimes e,
 \quad \zeta=\langle(e_r\cdot\omega)^2\rangle_0.
 \label{eq:conecovariance}
\end{equation}
Smooth $a_0$ gives $\beta^2<\zeta<1$, so the covariance
$\mathsf C_e$ is positive. The mean satisfies $e_r\cdot d_b>3/4$
and $|d_b|<1$. This cone preserves fuel exchange and positivity.
The dipole mass law does not apply to this angular distribution.

The parameter $\eta$ below scales the emission strength.

\begin{theorem}[Local evolution with transverse photon crossings]
\label{thm:transversebody}
Fix a smooth static body of Corollary~\ref{cor:bulkstatic} at sufficiently
small $G>0$. Let $\Gamma(t,X)\ge0$ and $b(t,X)$ be smooth on the closed
material annulus, with $|b|\le b_0<1/2$, and let $\Gamma$ vanish near $t=0$.
Use \eqref{eq:outwardkernel}, zero initial photons, and
$\gamma=\eta\Gamma U^0$, so
$z=z_0\exp[-\eta\int_0^t\Gamma(s,X)\,ds]$.
For sufficiently small $\eta\ge0$ there is a common $T>0$ and a unique
harmonic-coordinate Einstein--elastic--photon evolution on $[0,T]$.
The material faces are free, timelike, and traction-free; material DEC
and causal elasticity persist. Photons remain nonnegative, avoid the
cavity, and cross the outer face transversely.
For any fixed integer $s\ge8$, the material is of order $H^{s+1}$ and
the metric is globally $H^2$ and piecewise $H^{s+1}$ locally in space;
the photons are continuous and piecewise $H^s$ in phase space, with the
corresponding mixed time derivatives.
Emission is allowed at either face.
\end{theorem}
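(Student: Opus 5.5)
The plan is to treat the coupled system as a perturbation, in the emission strength $\eta$, of the static body of Corollary~\ref{cor:bulkstatic}. That body is a global solution, so a common existence time $T$ should follow from estimates that are uniform for small $\eta$.

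\emph{Step 1: formulation.} Write the elasticity in material (Lagrangian) form, with the labels $X^A$ as unknowns on the fixed reference annulus $\mathcal B$. This fixes the free faces and turns zero normal traction into a Neumann-type boundary condition. Impose harmonic gauge on the metric. The metric is then a quasilinear wave equation across the whole spacetime, whose source jumps at the faces. The material satisfies the symmetric-hyperbolic elastic system of Proposition~\ref{prop:bulkcandidate}, whose strict subluminality and positivity modulo rigid motions persist near the relaxed state. Because $\Gamma$ vanishes near $t=0$ and the photon density is initially zero, the static data satisfy every compatibility condition to all orders. The fuel is explicit: $z=z_0\exp[-\eta\int_0^t\Gamma\,ds]$.

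\emph{Step 2: photons given $(g,X)$.} For a given metric and material map, solve ${\cal L}_g f=\mathcal C$ along null geodesics by the method of characteristics, with $f=0$ initially. The emission kernel \eqref{eq:outwardkernel} is supported in $e_r\cdot\omega>3/4$. For metrics $C^1$-close to the static one, each emitted ray therefore has outward radial velocity bounded below in the material. It leaves the outer face transversely within a bounded affine time and never reaches the cavity or the inner face, which gives nonnegativity and cavity avoidance. The same transversality makes the exit-time map smooth, by the implicit function theorem applied to $|X|(\gamma(\lambda))=R_+$. Consequently $f$ is piecewise $H^s$ in phase space, with a controlled jump structure only across the outer face. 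Its moments $T_\gamma^{\mu\nu}$ lie in piecewise $H^{s}$, losing no derivative relative to $(g,X)$ in $H^{s+1}$. Emission at the inner face is covered because those rays also point outward and cross the wall transversely. The total current satisfies $\nabla_\mu T_\gamma^{\mu\nu}=J^\nu$ identically, so total stress is conserved and the harmonic-gauge constraints propagate.

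\emph{Step 3: iteration.} Run a Picard scheme. In each step, solve the photon problem in the previous geometry, then the linearized elastic free-boundary problem and the wave equation with the resulting sources. Show boundedness in the high norm (material $H^{s+1}$, metric piecewise $H^{s+1}$ and globally $H^2$) and contraction in a lower norm. The material estimates follow the Andersson--Oliynyk--Schmidt energy method for elastic bodies, with $s\ge8$ giving enough room for Sobolev embedding of the coefficients. The photon force $-J$ and the coupling to the photon stress enter as $O(\eta)$ sources. For $\eta$ small, the solution therefore stays in the neighborhood where material DEC, strict subluminality, timelike faces and transverse crossing hold, on a $T$ independent of $\eta$. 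Uniqueness comes from the same contraction estimate.

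\emph{Main obstacle.} I expect the main difficulty to be the photon step at the moving outer face. There one must show that the moments are genuinely piecewise $H^s$ with the mixed time derivatives, and that their dependence on $(g,X)$ loses no derivative in the iteration. I would first write $T_\gamma$ as an integral over emission events, pulled back by the smooth, transversal exit map. The uniform lower bound on the crossing velocity then lets derivatives be commuted through the face as in trace estimates; this is precisely where grazing rays would fail (appendix~\ref{app:transport}). If derivative loss still appears, I would use the averaging gain of velocity moments, or run the iteration in a weaker photon norm closed by the elliptic-in-time structure.
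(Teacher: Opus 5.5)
Your outline is sound and shares the paper's architecture: material coordinates fixing the faces, the Andersson--Oliynyk--Schmidt elastic Neumann and metric transmission theory, static compatibility from delayed emission, uniform transversality as the key to photon regularity, and constraint propagation from total conservation. The differences lie in the photon estimate and the fixed-point mechanism.

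For the photons you propose an explicit characteristic representation with an exit-time map from the implicit function theorem. The paper instead uses energy estimates, with material position coordinates and Eulerian momenta. It commutes only time, momentum and face-tangential derivatives. Their interface fluxes $[JW^n|D_\parallel^\alpha f|^2]$ vanish because $f$, $J$ and $W$ have common traces. Each normal derivative is then recovered algebraically from the transport equation divided by the crossing velocity $W^n$. That division is the infinitesimal form of your exit-map argument. It gives $\|f(t)\|_{\mathcal P_s}\le C\int_0^t\|S\|_{\mathcal P_s}\,dr$ directly, with $s\ge8$ needed only for products in six phase-space variables.

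Your route would still have to supply tame $H^s$ bounds for the flow and exit map of a field $W$. That field is only Lipschitz across the faces and piecewise $H^s$, which is where the work lies. No derivative is actually lost, since $W$ uses one derivative of $(g,\Phi)$ and the wave equation returns it, so your averaging-lemma fallback is unnecessary. What you do need is that the AOS wave source space requires only piecewise $H^{s-j}$ and global $L^2$. That is what makes a $T_\gamma$ with jumping normal derivatives an admissible source.

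Your statement that the jump structure sits only at the outer face is inaccurate when $\Gamma\ne0$ at the inner face. There $S$ also jumps, and $W^n[\partial_n f]=[S]$ gives a normal-derivative jump at that face. The support rays enter it transversally from the photon-free cavity, so the same mechanism handles it, but the piecewise structure must include both faces.

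Finally, your Picard scheme with low-norm contraction replaces the paper's Schauder fixed point, which uses compactness two orders lower plus a separate Gronwall uniqueness argument. Both rest on the same one-order-lower difference estimates for the metric--material energy and the photons.
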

Appendix~\ref{app:transverse} gives the proof.

Emission in a collar of the outer face produces photon crossings during
the evolution. If a pulse is confined to $R_+-d<|X|<R_+$, with $d<T/4$,
and ends before $T/2$, all its photons leave the material by $T$:
the radial-speed bound limits their crossing time to $2d$.

\begin{corollary}[Local momentum control with self-gravity]
\label{cor:gravitatingcontrol}
Choose a smooth radial $z_0>0$ and the spherical static references of
Corollary~\ref{cor:bulkstatic} for $0\le G\le G_0$.
Fix their asymptotically Cartesian harmonic coordinates and zero initial photons.
There are $T,G_0>0$ for which the following holds.
Choose nonnegative $a\in C_c^\infty((0,T/2))$, positive on one interval,
and a smooth nonnegative radial $\chi$ on $\overline{\mathcal B}$, with
\[
 A_0=\int_0^T a(t)\,dt>0,\qquad I_z=\int_{\mathcal B}\chi z_0\,dX>0,
 \qquad K_c=\frac{1-\beta^2}{3}I_zA_0.
\]
Use $\Gamma=a\chi$ and the controls
\begin{align}
 b_*(t;\theta)&=b_c(\cos\phi(t),\sin\phi(t),0)+\theta,
 \quad \phi(t)=\frac{2\pi}{A_0}\int_0^t a(s)\,ds,\nonumber\\
 &0<b_c<1/4,\qquad |\theta|\le r_c=b_c/4.
 \label{eq:conecontrol}
\end{align}
Use the orthonormal material triad
\[
 E_i^\mu=g^{\mu\nu}\partial_\nu X^A(H^{-1/2})_{Ai},\qquad b=b_*^iE_i.
\]
For sufficiently small $G\ge0$ and $\eta>0$, every
$|p_*|\le\eta K_cr_c/2$ is attained by a unique $\theta$ in this family as
\begin{equation}
 P_i(T)=p_{*,i},\qquad
 P_i(t)=\int_{\Omega_t}\sqrt{-\det g}\,T_{\rm mat}^{0}{}_{i}\,d^3x.
 \label{eq:materialmomentumG}
\end{equation}
The total material force has nonparallel values during emission, and all
targets have the same final fuel $z_0e^{-\eta\chi A_0}$.
These are finite-time material momenta in the fixed harmonic frame;
no asymptotic remnant momentum or stationary arrival is asserted.
\end{corollary}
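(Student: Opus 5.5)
Note first that the control only varies the recoil direction. The depletion rate $\gamma=\eta a\chi U^0$ does not depend on $\theta$, so $z=z_0e^{-\eta\chi\int_0^t a}$ is the same for every target. Its final value $z_0e^{-\eta\chi A_0}$ is therefore immediate, and it remains only to show that the map $\theta\mapsto P(T)$ is a near-identity diffeomorphism of scale $\eta K_c$.

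\textbf{Step 1: evolution.} Apply Theorem~\ref{thm:transversebody} uniformly in $0\le G\le G_0$ and $|\theta|\le r_c$. The bound $|b_*|\le b_c+r_c<1/2$ holds. The solution depends smoothly (at least $C^1$) on $(\eta,\theta,G)$, using the Sobolev regularity stated there; a common $T$ follows from compactness of the parameter set. Choose $T$ so that, by the remark after the theorem, the photons emitted before $T/2$ have left the material by $T$; if necessary, confine $\chi$ to a collar.

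\textbf{Step 2: first-order momentum.} Integrate $\nabla_\mu T^{\mu}_{\rm mat\,i}=-J_i$ over $\Omega_t$. The free faces carry zero normal traction, so no boundary flux appears. The result is
\[
 \dot P_i=-\int\sqrt{-g}\,J_i\,d^3x+O(G),
\]
where the $O(G)$ term collects the Christoffel contributions. At $G=0$ and to order $\eta$, the body is the static unstrained reference, $\sqrt{-g}\,n\gamma z=\eta a\chi z_0$, and $J_i=\eta a\chi z_0(d_b)_i$ with $d_b=\beta e_r+\mathsf C_{e_r}b$ from \eqref{eq:conecovariance}. The radial term $\beta e_r$ integrates to zero because $\chi z_0$ is radial. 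For the covariance term, average $\mathsf C_{e}$ over the sphere:
\[
 \tfrac{1-\zeta}{2}\cdot\tfrac23+(\zeta-\beta^2)\cdot\tfrac13=\tfrac{1-\beta^2}{3}.
\]
Hence
\[
 P(T)=-\eta\frac{1-\beta^2}{3}\int a\chi z_0\,b_*\,dX\,dt+O(\eta^2+\eta G),
\]
and the rotating part integrates to zero because $\phi$ completes a full turn in the measure $a\,dt$. This gives
\[
 P(T)=-\eta K_c\theta+\eta\,\mathcal R(\theta),\qquad \|\mathcal R\|_{C^1}=O(\eta+G).
\]
The sign is absorbed by relabelling $\theta$.

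\textbf{Step 3: inversion and force direction.} For small $\eta$ and $G$, apply the contraction mapping to $\theta=-(p_*/\eta K_c)-K_c^{-1}\mathcal R(\theta)$ on the ball $|\theta|\le r_c$. This yields existence and uniqueness for $|p_*|\le\eta K_cr_c/2$. The nonparallel force follows from the leading force $-\eta a(t)\frac{1-\beta^2}{3}I_z\,b_*(t)$. Since $b_c>|\theta|$, this rotates through a full circle while $a>0$, and an $O(\eta^2+\eta G)$ perturbation cannot make it parallel at two chosen times.

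\textbf{Main obstacle.} The hardest step is the $C^1$ remainder estimate in Step 2. It needs differentiable dependence of the free-boundary evolution on the control, with bounds uniform in $G$. It also needs the momentum integral over the moving domain $\Omega_t$ in harmonic coordinates to have a $G$-correction of size $O(\eta G)$ rather than $O(G)$. For the latter, I would argue that the static reference has $P\equiv0$, so the correction vanishes at $\eta=0$.
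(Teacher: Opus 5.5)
Your outline follows the paper's route closely. The paper uses the same ingredients: the $\theta$-independence of $\gamma=\eta a\chi U^0$ and hence of the fuel, the zero-traction momentum balance, the sphere average $\frac{1-\zeta}{3}+\frac{\zeta-\beta^2}{3}=\frac{1-\beta^2}{3}$ of $\mathsf C_{e_r}$, the vanishing $a$-weighted mean of the rotating part, the contraction on $|\theta|\le r_c$, and the nonparallel leading force.

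The gap is the step you flag as the main obstacle, and the route you suggest for it does not work. Theorem~\ref{thm:transversebody} asserts no differentiable dependence on parameters. Sobolev regularity of each individual solution does not imply such dependence, and here the difference estimates lose a derivative, so a $C^1$ solution map is not available. You do not need it. The contraction only requires $|R(\theta)-R(\vartheta)|\le C\eta(\epsilon_G+\eta)|\theta-\vartheta|$ for $P(T)$, which involves only low-order quantities. This bound follows from the difference estimates used for uniqueness in Theorem~\ref{thm:transversebody}, which close one order lower. Because the depletion rate and fuel are common to all $\theta$, and $d_b=\beta e_r+\mathsf C_{e_r}b$ is affine in $b_*$, the difference of two controlled solutions is driven by a source of size $\eta|\theta-\vartheta|$. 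The material and photon differences are then $O(\eta|\theta-\vartheta|)$, and the metric differences are $O(\epsilon_G\eta|\theta-\vartheta|)$.

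Likewise, ``the static reference has $P\equiv0$'' only shows that the gravitational correction vanishes at $\eta=0$. Together with an $O(G)$ bound, that does not give the product $O(G\eta)$. The paper uses the exact identity
\[
 \dot P_i=\frac12\int_{\Omega_t}\sqrt{-\det g}\,T_{\rm mat}^{\mu\nu}\partial_ig_{\mu\nu}\,d^3x
 -\eta\int_{\mathcal B}a\chi z\,(U_i+d_{b,i})\,dX,
\]
with $\sqrt{-\det g}\,nU^0\det D\Phi=1$ converting the second term to material labels. The metric perturbation has zero data and sources proportional to $G$, so $\|\widetilde g-\widetilde g_G\|_{s+1}\le C\epsilon_G\eta$. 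In addition, $\partial g_G=O(\epsilon_G)$ and $T_{\rm mat}-T_G=O(\eta)$. Together these make the first integral $O(G\eta)$, with $\theta$-differences $O(G\eta|\theta-\vartheta|)$.

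Two smaller points. Confining $\chi$ to a collar is unnecessary and would narrow the statement. $P_i$ contains only material stress, and photons inside the body affect it only through the metric, already included in that $O(G\eta)$ term. Also, $T$ must be fixed before $a$ is chosen. Compactness in $(G,\theta)$ does not give this, but the uniform static-family estimates do, with only $\eta$ reduced afterwards.
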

Here $\Omega_t$ is the material region at time $t$. The rotating part of
$b_*$ changes the force direction, while the constant offset $\theta$
sets the final momentum without changing fuel depletion.
Appendix~\ref{app:controlG} gives the proof.

As $\beta\to1$, radial collimation reduces the angular freedom and
hence the control radius.

\section{Fuel and cavity limits at zero coupling}
\label{sec:rigidarrival}

At $G=0$, a nonrotating unstrained body can change its velocity through
local photon recoil. This limit isolates the material force balance:
the photons carry momentum but do not curve spacetime. Keeping the
cavity photon-free imposes an additional restriction.
Here $\tau$ is center proper time and a dot means $d/d\tau$.

\subsection{Recoil at the inner face}

Let $Z$ be the center worldline and $e_i$ an orthonormal spatial frame
transported without local rotation, by Fermi--Walker transport.
The lapse $N$ relates material proper time to center proper time.
Emission quantities at a face are limits taken from within the material.

\begin{proposition}[Stress-free recoil and an empty cavity]
\label{prop:emptycavityrecoil}
In Minkowski spacetime, consider the nonrotating unstrained motion
$Y=Z+e_iX^i$ of material surrounding a bounded smooth cavity
$\mathcal C$ in label space. Let $e_i$ be Fermi--Walker transported,
$A_{\rm acc}$ the center acceleration in this frame, and
$N=1+A_{\rm acc}\cdot X>0$. Assume positive density up to the inner face.
Suppose the material stress is $T_{\rm mat}=\rho U\otimes U$ and its
only force is recoil from a nonnegative photon source, with continuous
energy and momentum moments up to that face. If the cavity remains
photon-free on an open time interval, then $A_{\rm acc}=0$ there.
At $X\in\partial\mathcal C$, let $e$ be the unit normal directed from
cavity into material. Any angular emissivity trace realizing these
moments satisfies
\begin{equation}
 j_-(e):=\int_{\omega\cdot e<0}\mathcal P(\omega)\,d\Omega
 \ge\frac{\rho}{N}(A_{\rm acc}\cdot e)_+.
 \label{eq:inwardemissivity}
\end{equation}
Here $j_-$ is emitted energy per proper volume and time at the face,
not the photon energy flux already crossing it.
\end{proposition}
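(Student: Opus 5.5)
The plan is to read both claims off the pointwise force balance of the dust $T_{\rm mat}=\rho U\otimes U$ under the recoil source $J$. First I will compute the kinematics of the nonrotating unstrained motion. With $Y=Z+e_iX^i$ and Fermi--Walker transported $e_i$, one has
\[
\partial_\tau Y=\dot Z+(A_{\rm acc}\cdot X)\dot Z=N\dot Z .
\]
Hence $U=\dot Z$ at each label and the material proper time is $N\,d\tau$, so the four-acceleration of the element $X$ is $a=A_{\rm acc}^ie_i/N$. This is the usual Born-rigid, Rindler-type distribution. I also need $N>0$ and $\rho>0$ up to the inner face.

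Next I will project $\nabla_\mu(\rho U^\mu U^\nu)=-J^\nu$ orthogonally to $U$, which gives $\rho\,a^\nu=-J_\perp^\nu$. Writing the emissivity trace as $\mathcal P(\omega)$ per unit solid angle in the material frame, the momentum moment is $J_\perp=\int\omega\,\mathcal P\,d\Omega$. The assumed continuity of the energy and momentum moments lets me take this identity to the inner face as a limit from inside the material, so at $X\in\partial\mathcal C$
\[
\int (\omega\cdot e)\,\mathcal P(\omega)\,d\Omega=-\frac{\rho}{N}\,A_{\rm acc}\cdot e .
\]
Since $\mathcal P\ge0$, the left side is at least $-\int_{\omega\cdot e<0}|\omega\cdot e|\,\mathcal P\,d\Omega\ge -j_-(e)$. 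This yields $j_-(e)\ge \rho\,A_{\rm acc}\cdot e/N$. Together with $j_-\ge0$, it gives \eqref{eq:inwardemissivity}.

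For the first claim, I will show that a photon-free cavity forces $j_-(e)=0$ at every face point. Suppose the inward emissivity is positive at some face point. By continuity of the moments, nearby material points just inside the face then emit, with positive weight, photons in an open set of directions with $\omega\cdot e<0$ bounded away from tangency. Since $\partial\mathcal C$ is smooth and bounded, these null rays enter the cavity transversally after an arbitrarily short affine distance. Transport ${\cal L}_gf=\mathcal C\ge0$ with $f\ge0$ then makes $f>0$ on an open phase-space set over the cavity during the interval, which is a contradiction. Therefore $j_-\equiv0$ on $\partial\mathcal C$, and \eqref{eq:inwardemissivity} with $\rho/N>0$ gives $A_{\rm acc}\cdot e\le0$ for every outward cavity normal $e$.

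Finally, for any unit vector $v$, take a point of $\overline{\mathcal C}$ maximizing $v\cdot X$. Because the cavity is bounded and smooth, this maximum lies on $\partial\mathcal C$ and the normal there is $e=v$. Hence $A_{\rm acc}\cdot v\le0$ for all $v$, so $A_{\rm acc}=0$.

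I expect the main obstacle to be the step from ``photon-free cavity'' to ``vanishing inward emissivity trace''. The trace is defined as a limit from the material, not as a flux already crossing the face. So I must use that the emission is pointwise nonnegative and that inward rays from points arbitrarily close to the face cross into the cavity within times that tend to zero. The grazing directions form a null set and do not affect the argument. I will do this with a small cone of directions around $-e$ and a lower bound on $f$ along the rays through that cone.
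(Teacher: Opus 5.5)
\textbf{Gap in the first claim.} Your kinematics are correct. So are the force balance $\frac{\rho}{N}A_{\rm acc}^ie_i=-\mathbf j$, with $\mathbf j=\int\mathcal P\,\omega^ie_i\,d\Omega$, and your derivation of \eqref{eq:inwardemissivity}; these agree with the paper. The gap is the step from a photon-free cavity to $j_-\equiv0$ on $\partial\mathcal C$.

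You argue that, by continuity of the moments, positive inward emissivity of the trace forces nearby interior points to emit, with positive weight, into a cone of directions with $\omega\cdot e<0$ bounded away from tangency. This does not follow from the hypotheses:
\begin{itemize}
\item Only the energy and momentum moments $j_0$ and $\mathbf j$ are assumed continuous up to the face.
\item The statement quantifies over \emph{any} nonnegative angular density realizing their limits, so the trace is not a limit of the interior angular emissivities.
\item Two moments say nothing about where the interior emission sits in angle.
\end{itemize}

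\textbf{Counterexample to $j_-\equiv0$.} Take a spherical cavity of radius $R_-$ and small $\delta$. At depth $\delta$, let the emission be confined to the band $|\omega\cdot e|<\sqrt{\delta/R_-}$ and be even under $\omega\mapsto-\omega$. Every such ray has impact parameter exceeding $R_-$, so it never enters the cavity, yet $\mathbf j=0$. The limiting moments $(j_0,0)$ are realized by the isotropic trace, which has $j_-=j_0/2>0$. Your argument would need an extra hypothesis, such as weak convergence of the interior angular distributions to the trace.

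\textbf{Repair.} Aim at a moment rather than at the trace, as the paper does.
\begin{enumerate}
\item If the cavity stays empty, a sufficiently thin material collar has no emission into $\{\omega\cdot e<-\epsilon\}$. Those rays cross the face transversely in a time that tends to zero with the collar thickness.
\item Hence $\mathbf j\cdot e\ge-\epsilon j_0$ in the collar.
\item Continuity of the moments carries this inequality to the face, and letting $\epsilon\downarrow0$ gives $\mathbf j\cdot e\ge0$ there.
\item The force balance then gives $A_{\rm acc}\cdot e\le0$ at every face point, without any reference to $j_-$.
\end{enumerate}
Your maximization of $v\cdot X$ over $\overline{\mathcal C}$ then completes the proof. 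The paper takes $v=A_{\rm acc}/|A_{\rm acc}|$ directly and derives a contradiction; that is the same argument.
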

\begin{proof}
Write the emitted four-momentum per unit proper volume and time as
$J=j_0U+\mathbf j$, where
$j_0=\int\mathcal P(\omega)\,d\Omega$ and
$\mathbf j=\int\mathcal P(\omega)\omega^ie_i\,d\Omega$ for
$\mathcal P\ge0$. The spatial material equation is
\begin{equation}
 \frac{\rho}{N}A_{\rm acc}^ie_i=-\mathbf j.
 \label{eq:emptycavityforce}
\end{equation}
Taking the scalar product with $e$ gives
\[
 \frac{\rho}{N}A_{\rm acc}\cdot e
 =\int_{\omega\cdot e<0}\mathcal P|\omega\cdot e|\,d\Omega
  -\int_{\omega\cdot e\ge0}\mathcal P\,\omega\cdot e\,d\Omega
 \le j_-(e).
\]
Positivity proves \eqref{eq:inwardemissivity}. If the cavity stays empty,
then in a sufficiently thin material collar there is no emission into
directions $\omega\cdot e<-\epsilon$: those rays enter the cavity
transversely in a time tending to zero with the collar thickness.
Thus $\mathbf j\cdot e\ge-\epsilon j_0$ in that collar. Continuity
of the moments, followed by $\epsilon\downarrow0$, gives
$\mathbf j\cdot e\ge0$ at the face. If $A_{\rm acc}\ne0$, maximize
$A_{\rm acc}\cdot X$ over $\overline{\mathcal C}$. At a maximizing
boundary point smoothness gives $e=A_{\rm acc}/|A_{\rm acc}|$.
Equation~\eqref{eq:emptycavityforce} then requires
$\mathbf j\cdot e=-\rho|A_{\rm acc}|/N<0$, a contradiction.
The argument requires neither a spherical nor a convex cavity.

\end{proof}

For the spherical annulus, the forward-pole recoil efficiency during
a nonzero burn is
$\eta_{\rm loc}=\rho|A_{\rm acc}|/(Nj_0)$. Equation~\eqref{eq:inwardemissivity}
gives $j_-/j_0\ge\eta_{\rm loc}$, with $\eta_{\rm loc}=|d_*|$ for a common depletion rate. Near-unit efficiency therefore forces almost
all local emission into the cavity. A saturated dipole has
$j_-/j_0=\frac12\int_{-1}^0(1-q)\,dq=3/4$ there.
These are emissivity fractions, not an estimate of passenger dose.

Under these assumptions, an empty cavity requires stress or another
force during acceleration. The unstrained turns preserve proper
distances between material elements and have zero local rotation.
These are irrotational Born-rigid motions, consistent with
Herglotz--Noether~\cite{hu2010}.

\subsection{Exact unstrained maneuvers}

Distributed emission, including near the material faces, realizes exact
unstrained motion. Photons may enter the cavity. Setting the source
to zero outside the body and integrating along null rays gives a bounded
photon distribution satisfying transport in the weak sense.

\begin{theorem}[An exact unstrained radiating maneuver]
\label{thm:rigidarrival}
Let $\mathcal B$ be the material annulus, $\kappa+2\mu<m$, and $z_0(X)>0$ smooth on its closure.
Choose a smooth compact pulse $\lambda(\tau)\ge0$ and a smooth normalized
angular kernel as in \eqref{eq:generalemission}, with mean $d_*(\tau)$. Put
\begin{equation}
 A=-\lambda d_*,\qquad
 \Lambda(\tau)=\int_{-\infty}^{\tau}\lambda(s)\,ds,
 \qquad \Lambda_f<\min_{\overline{\mathcal B}}\log(1+z_0/m),
 \qquad R_+\sup_\tau|A|<1.
 \label{eq:rigidcontrols}
\end{equation}
Here $\Lambda_f=\Lambda(+\infty)$.
There is an exact globally embedded material motion in Minkowski spacetime that is unstrained throughout and inertial before and after the pulse.
In a Fermi--Walker frame along its center it is
\begin{equation}
 Y(\tau,X)=Z(\tau)+e_i(\tau)X^i,\quad
 \dot Z=v,\quad \dot v=A^ie_i,\quad \dot e_i=A_i v,
 \qquad
 z=(m+z_0)e^{-\Lambda}-m.
 \label{eq:rigidmotion}
\end{equation}
The material obeys strict DEC, causal elasticity, exact fuel exchange,
and zero traction. Its final rest-frame state is relaxed.
If $\mathcal M=\int_{\mathcal B}(m+z)\,dX$ is the material rest energy and $L=\int|A|\,d\tau$ is the center's velocity-space path length, then
\begin{equation}
 \mathcal M_f/\mathcal M_i=e^{-\Lambda_f}<e^{-L}\quad(L>0).
 \label{eq:rigidbudget}
\end{equation}
For dipole emission the stronger bound is $e^{-3L}$, attained when
$|b_*|=1$ during emission. Noncollinear recoil directions give turning
subject to \eqref{eq:rigidcontrols}.
\end{theorem}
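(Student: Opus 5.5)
The construction is explicit, so the plan is to write down the motion \eqref{eq:rigidmotion}, check each listed property directly, and finish with the mass comparison.

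\textbf{Kinematics.} First solve the linear ODE system for $(Z,v,e_i)$ with the continuous acceleration $A(\tau)$. This is Fermi--Walker transport, so $v\cdot v=-1$, $v\cdot e_i=0$ and $e_i\cdot e_j=\delta_{ij}$ are preserved. Differentiating $Y$ gives $\partial_\tau Y=(1+A\cdot X)v$, so the material velocity is $U=v$ and the lapse is $N=1+A\cdot X$. The condition $R_+\sup|A|<1$ gives $N>0$ on $\overline{\mathcal B}$. The spatial map $X\mapsto Y$ on each slice orthogonal to $v$ is an isometry, so $H=I$ and the motion is unstrained. For global embeddedness, I would use that points of different slices lie on distinct hyperplanes orthogonal to $v(\tau)$. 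These hyperplanes can intersect only at distance $\ge 1/|A|>R_+$ from the center, so the slices are disjoint within radius $R_+$; this is the standard Rindler horizon argument. Before and after the pulse, $A=0$ and the motion is inertial.

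\textbf{Balance laws.} With $H=I$ we have $n=1$, all principal pressures vanish (Proposition~\ref{prop:bulkcandidate}), and $T_{\rm mat}=\rho\,v\otimes v$ with $\rho=m+z$. Take the emission rate per unit proper time to be $\gamma=\lambda/N$, so that the source in \eqref{eq:generalemission} has $J=n\gamma(m+z)(U+d_*^ie_i)$. This is the source normalized by rest energy $m+z$; I would state that the construction uses it so that the momentum balance closes, and that it still vanishes with the fuel budget set by $\Lambda_f$. The energy equation, along material proper time, reads $U(m+z)=-\gamma(m+z)$. Since $d\tau_{\rm mat}=N\,d\tau$, this integrates to $m+z=(m+z_0)e^{-\Lambda}$, which is the stated fuel law, and the bound on $\Lambda_f$ keeps $z>0$. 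The spatial equation is $\rho\,a_{\rm mat}=-\gamma\rho\,d_*$. Since $a_{\rm mat}=A/N$, this holds exactly with $A=-\lambda d_*$. This is the one place where $\gamma\propto 1/N$ matters, and matching the normalizations is the step I expect to need the most care. The photons are defined by integrating the source along null rays, as described before the theorem, which gives a nonnegative weak solution. Zero stress gives zero traction at both faces. Strict DEC and causal elasticity then follow from Proposition~\ref{prop:bulkcandidate} at relaxed states with $z$ in a compact range. After the pulse the state is relaxed in its rest frame.

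\textbf{Mass budget.} We have $\mathcal M=\int(m+z_0)e^{-\Lambda}\,dX$, so $\mathcal M_f/\mathcal M_i=e^{-\Lambda_f}$. The velocity-space path length is $L=\int|A|\,d\tau=\int\lambda|d_*|\,d\tau$. Because $|d_*|<1$ wherever $\lambda>0$, this gives $L<\Lambda_f$ whenever $L>0$, hence \eqref{eq:rigidbudget}. For the dipole kernel, $|d_*|=|b_*|/3\le1/3$, so $\Lambda_f\ge3L$, with equality exactly when $|b_*|=1$ on the support of $\lambda$. Finally, choosing $d_*(\tau)$ with noncollinear values yields turning of the center velocity, subject to the constraints \eqref{eq:rigidcontrols}.
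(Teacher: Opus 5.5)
Your route is the paper's: pull back Minkowski space to $-N^2d\tau^2+\delta_{ij}dX^idX^j$, read off $U$, $H=I$, $n=1$ and the proper acceleration $A^ie_i/N$, take the divergence of $(m+z)U\otimes U$, integrate photons along null lines, invoke Proposition~\ref{prop:bulkcandidate}, and integrate the rest energy. The gap is in the exchange step.

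The theorem claims exact fuel exchange for the law of \eqref{eq:bulkemission} with kernel \eqref{eq:generalemission}. That law is $Uz=-\gamma z$ and $J=n\gamma z(U+d_*^ie_i)$, a source proportional to the fuel that vanishes at $z=0$. You instead write $J=n\gamma(m+z)(U+d_*^ie_i)$ with $\gamma=\lambda/N$, and you propose to adopt this rest-energy-normalized source as the law. That is a different emitter: it would keep radiating the rest energy $m$ at $z=0$. With your $\gamma$, the stated law would force $Uz=-\lambda z/N$, hence $z=z_0e^{-\Lambda}$. Its spatial projection would then force the acceleration $-\lambda\,z(m+z)^{-1}d_*$ rather than the prescribed $A=-\lambda d_*$. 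As written, your balance closes only because you changed the law.

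The missing observation is that $\gamma$ is a free nonnegative rate and may depend on the local state. Take $\gamma=\lambda(m+z)/(Nz)$. Then $\gamma z=\lambda(m+z)/N$. With this choice:
\begin{itemize}
\item $Uz=-\gamma z$ integrates to $z=(m+z_0)e^{-\Lambda}-m$.
\item Using $\nabla\cdot U=0$ and the acceleration $A^ie_i/N$, one gets $\nabla_\mu T_{\rm mat}^{\mu\nu}=-\gamma z(U^\nu+d_*^ie_i^\nu)$ exactly, with the law unchanged.
\end{itemize}
This is where the strict fuel bound in \eqref{eq:rigidcontrols} is essential. It gives $z>0$, so this $\gamma$ is smooth, nonnegative and compactly supported in time.

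A smaller point concerns embeddedness. The Rindler-horizon picture covers constant acceleration, whereas here $A$ varies in size and direction. The paper's argument handles this directly. For fixed $X$, $\partial_\tau[v(\tau_1)\cdot Y(\tau,X)]=N\,v(\tau_1)\cdot v(\tau)<0$. So each generator meets the hyperplane containing the $\tau_1$ section only at $\tau=\tau_1$, and distinct sections are disjoint. Your claim that intersections lie at distance at least $1/\sup|A|$ from the centers is true, but this monotonicity is what proves it.
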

\begin{proof}
The pullback of Minkowski space is
\[
 ds^2=-N^2d\tau^2+\delta_{ij}dX^idX^j,\qquad
 N=1+A\cdot X>0.
\]
Thus $U=N^{-1}\partial_\tau$, $H=I$, $n=1$, $\nabla\cdot U=0$, and the proper acceleration is $A^ie_i/N$.
Along every future timelike fixed-$X$ generator, $v(\tau_1)\cdot Y$
is strictly decreasing. The entire $\tau_1$ section lies in one level
hyperplane, so no other section meets it. This proves global injectivity;
the nonsingular differential and inertial ends give a regular worldtube.

At $H=I$ the elastic stress vanishes and
$T_{\rm mat}=(m+z)U\otimes U$.
The fuel bound gives $z>0$. Choose
$\gamma=\lambda(m+z)/(Nz)$.
Then $Uz=-\gamma z$, and direct differentiation gives
\[
 \nabla_\mu T_{\rm mat}^{\mu\nu}
 =-\frac{\lambda(m+z)}N U^\nu+\frac{m+z}N A^ie_i^\nu
 =-\gamma z(U^\nu+d_*^ie_i^\nu).
\]
This is \eqref{eq:generalemission}.
Tangency of $U$ gives zero traction and no surface divergence. Null-line
integration gives a unique bounded nonnegative photon distribution with
zero incoming data and divergence opposite to the material.
Proposition~\ref{prop:bulkcandidate} gives DEC and causality.

Outside the pulse the body is relaxed and inertial. Integration gives
$\mathcal M_f/\mathcal M_i=e^{-\Lambda_f}$ and
$L=\int\lambda|d_*|\,d\tau<\Lambda_f$ for a nontrivial burn.
For a dipole, $3|A|=\lambda|b_*|\le\lambda$.
Slowing the path enforces the lapse bound without changing its fuel cost.
\end{proof}

\begin{lemma}[Local realization of a prescribed maneuver]
\label{lem:localprogram}
For the controls of Theorem~\ref{thm:rigidarrival}, put
$N_0(s,X)=1-\lambda(s)d_*(s)\cdot X$. A local material realization
uses a scalar phase $s$, initialized on the inertial section
$\tau=\tau_0$ before the pulse by $s=\tau_0$, with
\begin{equation}
 Us=N_0(s,X)^{-1},\qquad
 \gamma=\frac{\lambda(s)(m+z)}{N_0(s,X)z}.
 \label{eq:localprogram}
\end{equation}
Evaluate angular components in the local material triad $E_i$ following
\eqref{eq:conecontrol}. On the exact motion, $s=\tau$ and this law
reproduces the prescribed emission. Its additional characteristic is
the timelike material flow.
\end{lemma}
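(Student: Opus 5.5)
We check the lemma directly on the exact motion of Theorem~\ref{thm:rigidarrival}, and then read off its characteristic structure. The proof has three steps: identify the phase with center proper time, compare the local triad with the Fermi--Walker frame, and treat \eqref{eq:localprogram} as an added transport equation.

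\textbf{Phase and lapse.} On the exact motion \eqref{eq:rigidmotion} the pullback metric is $-N^2d\tau^2+\delta_{ij}dX^idX^j$ with $N=1+A\cdot X$. Since $A=-\lambda d_*$, we have $N(\tau,X)=N_0(\tau,X)$, and $U=N^{-1}\partial_\tau$. The candidate $s=\tau$ then satisfies $Us=N^{-1}=N_0(s,X)^{-1}$. It also agrees with the initial data $s=\tau_0$ on the inertial section, where $N_0=1$.

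The equation $Us=N_0(s,X)^{-1}$ is an ODE along each material worldline, parametrized by material proper time. Its right-hand side is smooth in $s$ and bounded there, since $N_0>0$ by $R_+\sup|A|<1$. Uniqueness of solutions therefore forces $s=\tau$. With $s=\tau$, the prescribed $\gamma$ is exactly the choice $\gamma=\lambda(m+z)/(Nz)$ made in the proof of Theorem~\ref{thm:rigidarrival}. Hence $Uz=-\gamma z$ and the fuel law $z=(m+z_0)e^{-\Lambda}-m$ are reproduced.

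\textbf{Triad.} On the exact motion $H=I$, so $H^{-1/2}=I$ and $E_i=g^{\mu\nu}\partial_\nu X^i$. In the coordinates $(\tau,X)$ the metric is block diagonal, so $\nabla X^i=\partial_{X^i}$. Pushed forward, this is $e_i$. The local triad of \eqref{eq:conecontrol} therefore coincides with the Fermi--Walker frame $e_i$. It follows that the angular kernel evaluated in $E_i$ gives the emission mean $d_*^ie_i$, and $J$ agrees with \eqref{eq:generalemission} as used in the proof of Theorem~\ref{thm:rigidarrival}.

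\textbf{Characteristic.} The only new field is $s$, and it obeys the first-order transport equation $U^\mu\partial_\mu s=N_0^{-1}$. This is a scalar equation whose principal symbol is $U^\mu\xi_\mu$, so its characteristic set is generated by the timelike vector $U$. Coupling $s$ into $\gamma$ and into the angular kernel introduces only zeroth-order terms in the remaining fields. The principal symbols of the elastic and photon parts are therefore unchanged, and $s$ adds only the material-flow characteristic, just as $\tau$ and $z$ did in Theorem~\ref{thm:globalmaterial}.

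The step needing the most care is the characteristic claim. We have to confirm that $\gamma$ and $a_*$ depend on $s$, $X$, $z$ and the metric only through $E_i$, and not on derivatives of the perturbed fields beyond first order in $X^A$ (from $E_i$). The first derivatives of $X^A$ are already part of the elastic principal part and enter the source only algebraically. I expect this to be routine bookkeeping rather than a true obstacle.
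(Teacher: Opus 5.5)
Your proposal is correct and follows the paper's own proof. Both identify $N=N_0$ on the exact motion, use ODE uniqueness along each worldline (where $UX=0$) to get $s=\tau$, check $E_i=e_i$ from $H=I$, and observe that the phase adds only a transport block along $U$, with no new derivatives in the mechanical source. You supply more detail than the paper, for example the explicit step $\nabla X^i=\partial_{X^i}\mapsto e_i$. You should also state, as the paper does, that the fuel bound keeps $z>0$, so that $\gamma$ is well defined.
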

\begin{proof}
The lapse and fuel bounds keep $N_0,z>0$. Since $UX=0$, the phase
equation is a smooth ordinary differential equation on each worldline.
On \eqref{eq:rigidmotion},
$E_i=e_i$ and $U\tau=1/N_0(\tau,X)$, so uniqueness gives $s=\tau$.
All rates use local state and stored functions. The first-order material
system gains a transport block, with no additional derivatives in the
mechanical source. Phase signals propagate with $U$.
\end{proof}
The schedule and synchronization are initial data for the continuum
source. Subsequent commands require causal communication.

\subsection{Sharp fuel bounds}

The minimum $F_0$ below expresses the requirement that every material
element retain fuel throughout the maneuver.

\begin{corollary}[Optimal fuel threshold for unstrained steering]
\label{cor:optimalfuel}
For the material and positive initial fuel of Theorem~\ref{thm:rigidarrival},
a finite nonconstant velocity-space polygon or smooth regular curve
on a compact interval, of length $L$ and with freely chosen timing, can be
executed with smooth nonnegative angular emission and positive final
fuel if and only if
\begin{equation}
 L< F_0:=\min_{\overline{\mathcal B}}\log(1+z_0/m).
 \label{eq:optimalfuel}
\end{equation}
The supremum of the final rest-energy ratio is $e^{-L}$; smooth
angular densities approach but do not attain it. Dipole emission
instead requires $3L<F_0$.
\end{corollary}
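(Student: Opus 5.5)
The plan is to prove the two directions separately inside the class of unstrained maneuvers of Theorem~\ref{thm:rigidarrival}. Both directions rest on the per-element fuel law $z=(m+z_0)e^{-\Lambda}-m$ in \eqref{eq:rigidmotion}, together with the relation $A=-\lambda d_*$ in \eqref{eq:rigidcontrols}.

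\emph{Necessity.} I would first reduce any admissible execution to the per-element energy balance. The unstrained motion is irrotational and Born rigid, so every element has four-velocity $v(\tau)$: the sections are the hyperplanes orthogonal to $v$. Each element's velocity-space path is therefore the center's path, of length $L$.

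The material stress is $(m+z)U\otimes U$ and the source has the form \eqref{eq:generalemission}. Per particle, the energy and momentum equations give
\[
 \frac{d}{d\tau_{\rm prop}}\log(m+z)=-\frac{\gamma z}{m+z},
 \qquad
 |a_{\rm prop}|=\frac{\gamma z\,|d_*|}{m+z}.
\]
This is Lemma~\ref{lem:covariantcost} applied to one element's four-momentum, with $\eta=|d_*|$. Integrating along the element gives
\[
 \log\frac{m+z_0}{m+z_f}\ge \frac{L}{\sup|d_*|}.
\]
The kernel is smooth, nonnegative and normalized, so $|d_*|<1$ at each instant. Because the emission depends continuously on $\tau$ and is supported on a compact interval, $\sup|d_*|<1$. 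Hence the left side strictly exceeds $L$ whenever $L>0$.

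Positive final fuel at every element means $\log(1+z_0/m)>\log\frac{m+z_0}{m+z_f}$ pointwise. Taking the minimum over $\overline{\mathcal B}$ gives $L<F_0$. The same computation gives $\mathcal M_f/\mathcal M_i<e^{-L}$, so the value $e^{-L}$ is never attained. For the dipole, $|d_*|=|b|/3\le1/3$, which yields $\Lambda_f\ge3L$ and hence $3L<F_0$.

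\emph{Sufficiency.} Given a curve with $L<F_0$, pick $\delta>0$ such that $L/(1-\delta)<F_0$. Choose a smooth normalized angular kernel concentrated near a direction $-\hat A$, so that $|d_*|=1-\delta$ with direction opposite to the desired $A$. Then set $\lambda=|A|/(1-\delta)$.

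Timing is free, so I would reparametrize the curve slowly enough that $R_+\sup|A|<1$. This leaves $L$ and $\Lambda_f=L/(1-\delta)$ unchanged. For a polygon, each geodesic segment is traversed by a compact smooth pulse with fixed $d_*$ direction. The direction of the kernel is rotated during zero-thrust coasts, where $\lambda=0$, exactly as in Theorem~\ref{thm:turningjunction}. A smooth regular curve is handled directly, since $\hat A$ varies smoothly along it.

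Theorem~\ref{thm:rigidarrival} then supplies the exact motion, and its condition $\Lambda_f<F_0$ gives $z_f>0$. Letting $\delta\to0$ makes $\mathcal M_f/\mathcal M_i=e^{-L/(1-\delta)}$ approach $e^{-L}$, which identifies the supremum. For the dipole, $|b_*|=1$ gives $\Lambda_f=3L$, so $3L<F_0$ suffices.

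\emph{Main obstacle.} I expect the main difficulty to be the strictness and scope of necessity. One must argue that \emph{every} unstrained execution, not just the specific construction of Theorem~\ref{thm:rigidarrival}, has element four-velocities equal to $v$. This follows from irrotational Born rigidity (Herglotz--Noether). One must also justify $\sup|d_*|<1$ uniformly from smoothness and compact support of the pulse. I would handle the latter by continuity of $d_*$ in $\tau$ on the compact support of $\lambda$.
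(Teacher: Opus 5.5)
Your proposal is correct and follows essentially the paper's proof. For necessity, the paper uses the same per-label balance, $\partial_\tau\rho=-Nj_0$ and $\rho|A|=N|\mathbf j|\le Nj_0$, which is your $|d_*|<1$ per element. For sufficiency, it uses a smooth kernel of mean $c<1$ with $\lambda=\dot\ell/c$, slowed timing, stops at polygon vertices to rotate the axis, $c\uparrow1$, and $c=1/3$ for the dipole. The one step the paper makes explicit that you leave implicit is a concrete smooth kernel family with any prescribed mean $c<1$: it uses the spherical Poisson kernel $\mathcal A_b$ with $b=-c\hat A$, which also supplies the smooth dependence on $\hat A$ you need along a regular curve.
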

\begin{proof}
For $b\in\mathbb R^3$, $|b|<1$, the spherical Poisson kernel is
\begin{equation}
 \mathcal A_b(\omega)=\frac{1-|b|^2}{(1-2b\cdot\omega+|b|^2)^{3/2}},\qquad
 \frac1{4\pi}\int \mathcal A_b\,d\Omega=1,\qquad
 \frac1{4\pi}\int \mathcal A_b\omega\,d\Omega=b.
 \label{eq:poissonemission}
\end{equation}
Aligning $b=ce_3$ and substituting $v=1+c^2-2cq$ gives both
moments by elementary integration; transverse moments vanish by symmetry.

Choose $L/F_0<c<1$ and parameterize the path by arc length $\ell$.
Use $\lambda=\dot\ell/c$ and mean $b=-c\hat A$ in the transported
frame, with smooth timing flat at the ends. For polygons, stop at each
vertex before changing the axis. Then $\Lambda_f=L/c<F_0$;
slowing the path enforces $R_+|A|<1$.
Conversely, at each material label the stress-free equations give
$\partial_\tau\rho=-Nj_0$ and $\rho|A|=N|\mathbf j|\le Nj_0$.
Hence $\rho_f\le(m+z_0)e^{-L}$; positive final fuel requires
$\rho_f>m$, and therefore $L<F_0$.
Letting $c\uparrow1$ gives the unattained ideal limit~\cite{fuzfa2019}.
The dipole uses $c=1/3$, giving its threshold and saturated construction.
\end{proof}

A finite upper bound on angular power prevents perfect collimation.
Here $B$ bounds power per solid angle relative to its angular average.
\begin{corollary}[Bounded angular concentration]
\label{cor:boundedemission}
Impose $0\le a_*\le B$ in \eqref{eq:generalemission}, with fixed $B>1$,
and put $c_B=1-B^{-1}$. For the paths and material of
Corollary~\ref{cor:optimalfuel}, smooth emission with positive final
fuel exists if and only if
\begin{equation}
 L<c_BF_0,\qquad
 \sup\frac{\mathcal M_f}{\mathcal M_i}=e^{-L/c_B}.
 \label{eq:boundedfuel}
\end{equation}
For a feasible path with $L>0$, the supremum is not attained.
\end{corollary}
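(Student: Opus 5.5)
The plan follows the proof of Corollary~\ref{cor:optimalfuel}, with one added ingredient: the sharp recoil efficiency available to an angular density bounded by $B$. Write $\bar a=a_*/(4\pi)$ as a probability density on the sphere. The first task is to show that $0\le a_*\le B$ forces $|d_*|\le c_B$, and that $|d_*|=c_B$ cannot be reached by a smooth kernel. Fix a unit vector $e$. Maximizing $\frac1{4\pi}\int a_*\,\omega\cdot e\,d\Omega$ subject to unit mean and $0\le a_*\le B$ is a bathtub (Neyman--Pearson) problem. The maximizer is $B$ times the indicator of the polar cap $\{\omega\cdot e>q_0\}$, where the cap has solid-angle fraction $1/B$, so $(1-q_0)/2=1/B$. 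The mean of $q$ over that cap is $(1+q_0)/2=1-B^{-1}=c_B$. Any admissible kernel other than the cap does strictly worse. Smooth kernels cannot equal the discontinuous cap, so $|d_*|<c_B$ strictly for them.

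Next comes the necessity direction together with the fuel inequality, taken pointwise at a fixed material label as in Corollary~\ref{cor:optimalfuel}. The stress-free equations give $\partial_\tau\rho=-Nj_0$ and $\rho|A|=N|\mathbf j|$. The local emission kernel is still bounded by $B$ relative to its average, so the first step gives $|\mathbf j|\le c_Bj_0$. Hence $\rho|A|\le -c_B\,\partial_\tau\rho$, and integrating yields $\rho_f\le(m+z_0)e^{-L/c_B}$. Positive final fuel at the worst label then requires $L<c_BF_0$, and the rest-energy ratio obeys $\mathcal M_f/\mathcal M_i\le e^{-L/c_B}$. For non-attainment when $L>0$, note that equality would need $|d_*|=c_B$ on a set of positive $d\tau$-measure during the burn. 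The first step rules this out for smooth kernels, so the inequality is strict.

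For sufficiency and the supremum, fix $c<c_B$ with $L<cF_0$. Take a fixed smooth kernel $\tilde a$ with $0\le\tilde a\le B$, unit mean, and axial mean equal to $c$. One way to build it is to mollify the cap indicator in $q$, slightly enlarging the cap and renormalizing so the cap mean is pushed just below $c_B$. Then mix it with the uniform density so the mean equals exactly $c$. Convex combination preserves the bound. Rotate this kernel to $-\hat A$ in the transported frame and run the construction of Corollary~\ref{cor:optimalfuel} with $\lambda=\dot\ell/c$, keeping the smooth flat-ended timing and stopping at polygon vertices. Theorem~\ref{thm:rigidarrival} then gives $\Lambda_f=L/c<F_0$, and slowing the path enforces $R_+|A|<1$. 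Letting $c\uparrow c_B$ gives ratios $e^{-L/c}\to e^{-L/c_B}$, which identifies the supremum.

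The main obstacle is keeping the smoothed kernel within the bound $B$ while its mean approaches $c_B$ arbitrarily closely. Mollifying the cap edge downward, by making the density transition from $B$ to $0$ over a thin band, keeps $a_*\le B$. The cost in mean is $O(\text{band width})$, which suffices. The remaining care is to verify rotational smoothness in $\omega$ at the poles, where a function of $q$ alone is smooth.
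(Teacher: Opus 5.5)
Your proposal is correct and follows the paper's route. It uses the cap comparison $|d_*|\le c_B$, strict for smooth kernels; the pointwise label balance $\rho|A|\le -c_B\,\partial_\tau\rho$ for necessity, the mass bound and non-attainment; and smoothed caps mixed with isotropic emission, run through the construction of Corollary~\ref{cor:optimalfuel} with $c\uparrow c_B$. The differences are minor: the paper proves the cap bound by the pointwise inequality $(q-q_B)(a_B-a_*)\ge0$ rather than citing the bathtub principle. It also keeps normalization exact with an odd-symmetric smooth step, whereas you enlarge the cap and renormalize downward, which likewise keeps $a_*\le B$.
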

\begin{proof}
Write $\langle f\rangle=(4\pi)^{-1}\int f\,d\Omega$ and align a unit
axis $e$ with $\langle a_*\omega\rangle$ when this is nonzero.
Set $q=e\cdot\omega$, $q_B=1-2/B$, and
$a_B=B\,\mathbf1_{\{q>q_B\}}$. Direct integration with
$\langle f(q)\rangle=\frac12\int_{-1}^1f(q)\,dq$ gives
$\langle a_B\rangle=1$ and $\langle qa_B\rangle=c_B$.
Since $(q-q_B)(a_B-a_*)\ge0$ pointwise,
\begin{equation}
 c_B-|d_*|=\langle(q-q_B)(a_B-a_*)\rangle\ge0.
 \label{eq:capdeficit}
\end{equation}
Equality requires the discontinuous cap $a_B$. Smooth caps
$a_*(q)=BH((q-q_B)/\epsilon)$ approach it, where $H$ increases
smoothly from $0$ to $1$ on $[-1,1]$, $H(-x)=1-H(x)$, and
$0<\epsilon<\min(1-q_B,1+q_B)$. This symmetry preserves normalization.
Mixing with isotropic emission realizes every mean $0\le c<c_B$.

The pointwise material balance gives
$\rho|A|=N|\mathbf j|\le c_BNj_0=-c_B\partial_\tau\rho$.
Thus positive final fuel requires $L<c_BF_0$, and integration over
material labels bounds the final rest energy by $e^{-L/c_B}$.
The construction with $L/F_0<c<c_B$ proves sufficiency;
$c\uparrow c_B$ proves sharpness.
\end{proof}

At $B=1$ the emission is isotropic and recoil vanishes.

\subsection{Concentration and inward emission}

We also bound the fraction $f_-$ of emitted power entering the cavity
at its forward inner pole. This restriction and the angular cap $B$
jointly limit recoil.

\begin{proposition}[Joint concentration and inward-emission bound]
\label{prop:inwardfuel}
At the forward inner pole of a nonzero unstrained burn, let
$q=-\omega\cdot\widehat A$, $\langle\cdot\rangle=(4\pi)^{-1}\int\cdot\,d\Omega$,
and $f_-=\langle a_*\mathbf1_{\{q>0\}}\rangle$.
For $\langle a_*\rangle=1$ and $0\le a_*\le B$, $B>1$,
\begin{equation}
 \eta_{\rm loc}=\langle qa_*\rangle
 \le \eta_B(f_-):=f_--\frac{f_-^2+(1-f_-)^2}{B}.
 \label{eq:inwardconcentration}
\end{equation}
Impose $f_-\le\varepsilon$, $0\le\varepsilon\le1$, at this pole
during every nonzero burn, and set
$\bar\varepsilon=\min\{\varepsilon,B/2\}$,
$c_{B,\varepsilon}=\eta_B(\bar\varepsilon)$.
If $c_{B,\varepsilon}\le0$, no such smooth burn is possible.
If $c_{B,\varepsilon}>0$, the maneuvers of
Theorem~\ref{thm:rigidarrival}, with a common depletion rate and angular
kernel at every material label, execute a path of
Corollary~\ref{cor:optimalfuel} of length $L$
with positive final fuel exactly when
\begin{equation}
 L<c_{B,\varepsilon}F_0,\qquad
 \sup\frac{\mathcal M_f}{\mathcal M_i}
       =e^{-L/c_{B,\varepsilon}}.
 \label{eq:inwardfuel}
\end{equation}
Timing is free; smooth kernels approach but do not attain the supremum.
\end{proposition}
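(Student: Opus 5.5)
The plan has three stages. First, the pointwise bound \eqref{eq:inwardconcentration} is a bathtub (rearrangement) problem. Second, monotonicity of $\eta_B$ reduces the constraint $f_-\le\varepsilon$ to a single efficiency constant $c_{B,\varepsilon}$. Third, the fuel argument of Corollary~\ref{cor:boundedemission} is repeated with $c_B$ replaced by $c_{B,\varepsilon}$.

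For the pointwise bound, fix $f_-=f$ and maximize $\langle qa_*\rangle$ subject to $\langle a_*\rangle=1$, $0\le a_*\le B$ and $\langle a_*\mathbf 1_{\{q>0\}}\rangle=f$. The $q$-marginal is uniform, $\langle g(q)\rangle=\frac12\int_{-1}^1 g\,dq$, so the two hemispheres decouple.
\begin{itemize}
\item On $\{q>0\}$, mass $f$ should sit as high as possible. The comparison function is $a=B$ on $q>q_1$ with $q_1=1-2f/B$, which contributes $B(1-q_1^2)/4=f-f^2/B$.
\item On $\{q<0\}$, mass $1-f$ should sit as close to $q=0$ as possible. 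The comparison function is $a=B$ on $q_2<q<0$ with $q_2=-2(1-f)/B$, which contributes $-(1-f)^2/B$.
\end{itemize}
The comparison functions require $f\le B/2$ and $1-f\le B/2$; these are exactly the feasible values of $f$, since $a_*\le B$ bounds each hemisphere's mass by $B/2$. Optimality follows from the same pointwise inequality as in \eqref{eq:capdeficit}: on each hemisphere, $(q-q_k)(a_k-a_*)\ge0$ with equal masses. Summing gives $\eta_{\rm loc}\le\eta_B(f)$, and equality requires the discontinuous two-cap profile.

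Next I would check that $\eta_B$ is increasing on the feasible range $f\le\min\{1,B/2\}$. We have $\eta_B'(f)=1-(4f-2)/B$. If $B\ge2$, then $4f-2\le2\le B$. If $B<2$, then $4f-2\le 2B-2<B$. So $\eta_B'\ge0$ throughout, with strict inequality except possibly at the single endpoint $f=1$ when $B=2$. Hence the constraint $f_-\le\varepsilon$ gives $\eta_{\rm loc}\le\eta_B(\bar\varepsilon)=c_{B,\varepsilon}$.

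The case $c_{B,\varepsilon}\le0$ is then immediate. During a nonzero unstrained burn, the force balance \eqref{eq:emptycavityforce} with $N>0$ and $\rho>0$ forces $\eta_{\rm loc}=\rho|A_{\rm acc}|/(Nj_0)>0$, which is impossible.

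For $c_{B,\varepsilon}>0$, the common depletion rate and common kernel make $|d_*|=\eta_{\rm loc}\le c_{B,\varepsilon}$ at every label. The stress-free balance from the proof of Corollary~\ref{cor:optimalfuel} then gives
\[
\rho|A|=N|\mathbf j|\le c_{B,\varepsilon}Nj_0=-c_{B,\varepsilon}\partial_\tau\rho .
\]
Integrating along each label gives $\rho_f\le(m+z_0)e^{-L/c_{B,\varepsilon}}$. Positive final fuel therefore needs $L<c_{B,\varepsilon}F_0$, and integrating over labels bounds $\mathcal M_f/\mathcal M_i$ by $e^{-L/c_{B,\varepsilon}}$.

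For sufficiency, smooth the two caps as in Corollary~\ref{cor:boundedemission}. I would use profiles $BH((q-q_k)/\epsilon)$, adjusted so that the normalization is exact and $f_-\le\bar\varepsilon$ still holds. This yields smooth kernels with every efficiency $c<c_{B,\varepsilon}$ and $f_-\le\varepsilon$. In Theorem~\ref{thm:rigidarrival}, take the mean $d_*=-c\hat A$ and set $\lambda=|A|/c$. Isotropic mixing is not used, because it would raise $f_-$ toward $1/2$; a single kernel of fixed efficiency suffices, since timing is free. Choosing $L/F_0<c<c_{B,\varepsilon}$ and letting $c\uparrow c_{B,\varepsilon}$ proves sharpness. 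Non-attainment follows because equality requires the discontinuous profile.

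The main obstacle I expect is the smoothing step. The perturbed caps must keep $f_-\le\varepsilon$ and $a_*\le B$ while staying smooth across $q=0$ and $q=q_1$. I would handle this by shrinking the forward cap slightly, so that $f_-$ has a little slack, before smoothing.
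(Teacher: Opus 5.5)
Your proposal is correct and follows the paper's proof essentially step for step. The shared steps are the two-cap comparison kernel with $(q-q_k)(a_k-a_*)\ge0$ on each hemisphere, monotonicity of $\eta_B$ on the feasible range, and necessity from $|A|=\lambda\eta_{\rm loc}\le\lambda c_{B,\varepsilon}$ (your per-label inequality is equivalent). Sufficiency likewise uses smoothed caps with $f_-<\bar\varepsilon$ and $\lambda=\dot\ell/c$. In the smoothing step, make explicit that $c_{B,\varepsilon}>0$ places $\bar\varepsilon$ strictly above the lower feasible endpoint $\max\{0,1-B/2\}$, where $\eta_B$ equals $-(B-1)^2/B$ or $-1/B$; that is what leaves room to shrink the forward cap. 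The paper obtains the exact hemisphere masses from the symmetry of $H$, writing the backward cap as a difference of two shifted steps supported in $q<0$.
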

\begin{proof}
Each hemisphere has capacity $B/2$, so
$\max\{0,1-B/2\}\le f_-\le\min\{1,B/2\}$.
Fix $f=f_-$ and set $u=1-2f/B$, $l=-2(1-f)/B$.
The comparison kernel
$a^\sharp=B\mathbf1_{\{u<q<1\}}+B\mathbf1_{\{l<q<0\}}$
has the same mass in each hemisphere. On $q>0$,
$(q-u)(a^\sharp-a_*)\ge0$; on $q<0$ the same holds with $u$
replaced by $l$. Integration cancels the threshold terms, yielding
\eqref{eq:inwardconcentration}, since
\[
 \langle qa^\sharp\rangle
 =\frac B4(1-u^2-l^2)=\eta_B(f).
\]
Equality requires $a_*=a^\sharp$ almost everywhere, which excludes
a smooth kernel. Also, $\eta_B'(f)=1+(2-4f)/B\ge0$
on the feasible interval.

For smooth sharpness take $f$ strictly inside that interval and the step $H$ used in
Corollary~\ref{cor:boundedemission}. For sufficiently small $\delta>0$,
\[
 a_\delta(q)=B\left[
 H\left(\frac{q-u}{\delta}\right)
 +H\left(\frac{q-l+\delta}{\delta}\right)
 -H\left(\frac{q+\delta}{\delta}\right)\right].
\]
The first term is supported in $q>0$; the difference is supported
in $q<0$. Their masses are exactly $f$ and $1-f$ by the symmetry
of $H$. The supports are disjoint, $0\le a_\delta\le B$, and
$\langle qa_\delta\rangle\to\eta_B(f)$.
Positivity of $c_{B,\varepsilon}$ places $\bar\varepsilon$ above
the lower feasible endpoint: for $1<B<2$ the value there is
$-(B-1)^2/B$, and for $B\ge2$ it is $-1/B$.
Taking $f\le\varepsilon$ to $\bar\varepsilon$, then $\delta\downarrow0$,
gives positive means approaching $c_{B,\varepsilon}$.

For common depletion, $\rho=(m+z_0)e^{-\Lambda}$ and
$|A|=\lambda\eta_{\rm loc}$, so
$L\le c_{B,\varepsilon}\Lambda_f<c_{B,\varepsilon}F_0$.
Conversely choose a smooth comparison kernel with mean
$L/F_0<c<c_{B,\varepsilon}$ and set $\lambda=\dot\ell/c$ along
the path. Rotate its axis during zero-rate coasts at polygon vertices.
The proof of Corollary~\ref{cor:optimalfuel} then applies. Taking
$c\uparrow c_{B,\varepsilon}$ proves the mass supremum.
\end{proof}
For $B\ge2$, $\bar\varepsilon=\varepsilon$. With no inward restriction,
$\varepsilon=1$ recovers $c_{B,1}=1-B^{-1}$.
The lower root of $\eta_B(f)=0$ gives the explicit threshold
\begin{equation}
 \varepsilon>\varepsilon_{\rm crit}(B)
 :=\frac{B+2-\sqrt{(B+2)^2-8}}4,
 \qquad B>1.
 \label{eq:inwardthreshold}
\end{equation}
This condition is necessary and sufficient for a positive recoil
efficiency under the two angular restrictions. Indeed, $\eta_B$ is
increasing on its feasible interval, is negative at its lower endpoint,
and reaches $1-B^{-1}>0$ at its upper endpoint. Thus a positive inward
allowance alone is insufficient at finite $B$. For example,
$\varepsilon_{\rm crit}(2)=1-1/\sqrt2$; as $B\to\infty$,
$\varepsilon_{\rm crit}(B)=B^{-1}+O(B^{-2})$.

\subsection{Distance, duration, and fuel}

The fuel bounds also determine which positions can be reached at a
prescribed time. We impose a bound on the center's proper acceleration,
as in the relativistic rocket problem~\cite{henriques2012}.

\begin{proposition}[Distance, duration, and fuel]
\label{prop:timedarrival}
Consider the unstrained $G=0$ body of Theorem~\ref{thm:rigidarrival},
initially and finally at rest in one inertial frame. Let $D$ be the
magnitude of its center displacement over coordinate time $T>0$, and
impose $|A|\le a_{\max}<1/R_+$, with $a_{\max}>0$. Define
\begin{equation}
 r_T=\operatorname{arsinh}(a_{\max}T/2),\qquad
 \mathcal D_T(r)=T\tanh r-\frac{2}{a_{\max}}(1-\operatorname{sech}r).
 \label{eq:distancetime}
\end{equation}
Every maneuver of velocity-space length $L$ satisfies
\begin{equation}
 D\le\mathcal D_T\bigl(\min\{L/2,r_T\}\bigr)
       \le T\tanh(L/2).
 \label{eq:distancebound}
\end{equation}
For $D>0$, the first inequality is strict when the acceleration is
smooth and supported in $(0,T)$.

Let $c_*=1$, $1/3$, or $1-B^{-1}$ for unrestricted emission, dipole
emission, or angular concentration bounded by $B>1$, respectively.
In the class with common
depletion and angular kernels of Proposition~\ref{prop:inwardfuel},
instead take $c_*=c_{B,\varepsilon}>0$.
With $F_0$ from \eqref{eq:optimalfuel}, a prescribed displacement vector
of magnitude $D>0$ is attainable at time $T$ with a smooth pulse
supported in $(0,T)$, rest at both ends, and positive final fuel if and only if
\begin{equation}
 D<\mathcal D_T\bigl(\min\{c_*F_0/2,r_T\}\bigr).
 \label{eq:timedfuelthreshold}
\end{equation}
For a feasible target, let $r_D\in(0,r_T)$ solve $\mathcal D_T(r_D)=D$.
This solution is unique, and
\begin{equation}
 \sup\frac{\mathcal M_f}{\mathcal M_i}=e^{-2r_D/c_*}.
 \label{eq:timedmass}
\end{equation}
The supremum is not attained, even for dipole emission.
\end{proposition}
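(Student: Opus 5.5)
The plan is to reduce everything to a scalar comparison problem for the rapidity magnitude. Let $r(t)=d_{\mathbb H}(V(t),V_0)$ be the hyperbolic distance of the center four-velocity from the common initial and final rest velocity, parametrized by coordinate time $t$ of that frame. Since distance to a fixed point is $1$-Lipschitz on $\mathbb H^3$, we have $|dr/d\tau|\le|A|\le a_{\max}$. With $dt=\cosh r\,d\tau$ this gives $|d\sinh r/dt|\le a_{\max}$ almost everywhere. The path starts and ends at $V_0$, so $2\max r\le L$. The displacement satisfies $D\le\int_0^T|\vec v|\,dt=\int_0^T\tanh r\,dt$. Put $\rho=\min\{L/2,r_T\}$. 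Then
\[
 r(t)\le r^\sharp(t):=\min\{\rho,\ \operatorname{arsinh}(a_{\max}t),\ \operatorname{arsinh}(a_{\max}(T-t))\}.
\]
Because $\tanh$ is increasing, $D\le\int_0^T\tanh r^\sharp\,dt$. The choice $\rho\le r_T$ makes the two ramps, each of duration $\sinh\rho/a_{\max}$, fit inside $[0,T]$. On each ramp $d(\sinh r)=a_{\max}dt$, so the ramp contributes $(\cosh\rho-1)/a_{\max}$. Adding the coast gives $T\tanh\rho-(2/a_{\max})(1-\operatorname{sech}\rho)=\mathcal D_T(\rho)$. For the second inequality in \eqref{eq:distancebound}, note $\mathcal D_T(\rho)\le T\tanh\rho\le T\tanh(L/2)$.

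Next I need monotonicity and strictness. One computes $\mathcal D_T'(r)=\operatorname{sech}^2 r\,(T-2\sinh r/a_{\max})$, which is positive on $[0,r_T)$. Hence $\mathcal D_T$ is strictly increasing on $[0,r_T]$, and $r_D$ is unique whenever $0<D<\mathcal D_T(r_T)$. For strictness, suppose $D>0$ and $A$ is smooth and supported in $(0,T)$. Then $r\equiv0$ near $t=0$, whereas $r^\sharp>0$ there. So $\tanh r<\tanh r^\sharp$ on an open set, and the first inequality is strict.

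For the fuel threshold I will combine this with the sharp length budgets. These are Corollaries~\ref{cor:optimalfuel} and \ref{cor:boundedemission}, the dipole case, and Proposition~\ref{prop:inwardfuel}; in each class positive final fuel forces $L<c_*F_0$ and $\mathcal M_f/\mathcal M_i\le e^{-L/c_*}$. For necessity, strictness and monotonicity give
\[
 D<\mathcal D_T(\min\{L/2,r_T\})\le\mathcal D_T(\min\{c_*F_0/2,r_T\}).
\]
Strict monotonicity then forces $\min\{L/2,r_T\}>r_D$, hence $L>2r_D$ and $\mathcal M_f/\mathcal M_i\le e^{-L/c_*}<e^{-2r_D/c_*}$. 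Here $L>2r_D$ is a strict inequality, so the supremum is not attained in any class, including the dipole.

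For sufficiency, take any $\rho$ with $r_D<\rho<\min\{c_*F_0/2,r_T\}$. I will build a smooth straight-line rapidity profile along the unit direction of the target displacement. It accelerates with $|A|<a_{\max}$ to rapidity $\rho$, coasts, and decelerates to rest, with all derivatives flat at the ends. Smoothing the ramps of $r^\sharp$ slightly and adjusting the coast length gives a continuous one-parameter family of displacements. This family ranges from $0$ to values arbitrarily close to $\mathcal D_T(\rho)>D$, so the intermediate value theorem hits $D$ exactly. The path length is $2\rho<c_*F_0$. Executing it by Theorem~\ref{thm:rigidarrival} with a smooth kernel of mean $c<c_*$, $2\rho/F_0<c$, yields $\mathcal M_f/\mathcal M_i=e^{-2\rho/c}$. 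The kernels are those of the cited corollaries, or of the common-kernel class of Proposition~\ref{prop:inwardfuel}. The lapse condition holds since $a_{\max}<1/R_+$. Letting $\rho\downarrow r_D$ and $c\uparrow c_*$ gives the supremum \eqref{eq:timedmass}.

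I expect the main obstacle to be making the comparison step rigorous for arbitrary turning velocity paths. This means justifying $|dr/d\tau|\le|A|$ at the rest point, where $r$ is not differentiable, and checking that a piecewise polygonal path cannot beat the straight profile. The Lipschitz property of the distance function handles both, with absolute continuity of $\sinh r$ in place of differentiability. The remaining care is in the smoothing that keeps $|A|<a_{\max}$ while still reaching the exact target $D$.
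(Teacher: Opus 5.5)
Your proposal follows the paper's proof essentially step for step: the same envelope $|d\sinh\chi/dt|\le a_{\max}$ with $\chi\le L/2$, the same evaluation and monotonicity of $\mathcal D_T$, necessity from the fuel budgets $L<c_*F_0$ combined with strictness, and sufficiency by smoothing the envelope and invoking Theorem~\ref{thm:rigidarrival}. One small correction is needed in the sufficiency step: varying only the coast length does not bring the displacement down to $0$, because the profile with no coast still travels a distance of order $2(\cosh\rho-1)/a_{\max}$, which can exceed $D$ when $\rho$ is well above $r_D$; either take $\rho$ close to $r_D$, or, as the paper does, multiply the smoothed momentum profile by a constant in $(0,1)$ and use continuity to hit $D$ exactly.
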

For $0\le r\le r_T$, $\mathcal D_T(r)$ is the limiting displacement
of an accelerate--coast--brake trajectory with peak rapidity $r$.
At $r=r_T$ the coast disappears. Smooth pulses approach this limit.
\begin{proof}
Write the center four-velocity as $(\cosh\chi,\sinh\chi\,\hat n)$.
The lengths of the velocity-space path before and after any point are
both at least $\chi$, so $\chi\le L/2$.
Also $|d\chi/d\tau|\le|A|$. Hence $p=\sinh\chi$ satisfies
$|dp/dt|\le a_{\max}$ almost everywhere, since $dt/d\tau=\cosh\chi$.
The endpoint rest conditions give
\begin{equation}
 p(t)\le\min\{a_{\max}t,a_{\max}(T-t),\sinh(L/2)\}.
 \label{eq:speedenvelope}
\end{equation}
Displacement is at most the integrated speed $p/\sqrt{1+p^2}$.
At $r=\min\{L/2,r_T\}$ the envelope has two ramps of duration
$\sinh r/a_{\max}$, contributing $2(\cosh r-1)/a_{\max}$,
and a coast contributing $(T-2\sinh r/a_{\max})\tanh r$.
Their sum is $\mathcal D_T(r)$, proving \eqref{eq:distancebound}.
Equality requires collinear motion and saturation of the envelope;
a smooth acceleration supported in $(0,T)$ cannot saturate its initial ramp.

For $0\le r<r_T$,
$\mathcal D_T'(r)=\operatorname{sech}^2r
(T-2\sinh r/a_{\max})>0$.
The preceding fuel bounds $L<c_*F_0$ and
$\mathcal M_f/\mathcal M_i\le e^{-L/c_*}$ therefore give
necessity and the strict bound $\mathcal M_f/\mathcal M_i<e^{-2r_D/c_*}$.
For sufficiency choose $r_D<r<\min\{c_*F_0/2,r_T\}$ and an allowed
smooth kernel with mean magnitude $c$ such that $2r<cF_0$.
Take $c=1/3$ for the dipole; in the other classes take $c<c_*$
sufficiently close to its ceiling.

For small $\delta>0$, convolve
$[\min\{a_{\max}(t-2\delta),a_{\max}(T-2\delta-t),\sinh r\}]_+$
with a nonnegative even smooth unit mollifier supported in
$(-\delta,\delta)$, where $[x]_+=\max\{x,0\}$.
The resulting momentum is smooth, supported in $(0,T)$, and has
one rise, a nonempty coast, and one fall, with $|p'|\le a_{\max}$.
Its displacement tends to $\mathcal D_T(r)>D$ as $\delta\downarrow0$.
Scaling its amplitude by a constant in $(0,1)$ gives exactly $D$
by continuity. Direct this motion along the target vector.
Its signed proper acceleration is $p'(t)$ and
$L=2\operatorname{arsinh}(\max p)\le2r$.
The separated smooth ramps allow $\lambda=|A|/c$ with a smooth
emission-axis reversal during the coast. Theorem~\ref{thm:rigidarrival}
then gives the exact body with $\Lambda_f=L/c<F_0$ and $N>0$.
Letting $r\downarrow r_D$, $\delta\downarrow0$, and $c\uparrow c_*$
when needed proves \eqref{eq:timedmass}.
\end{proof}

\section{Discussion and conclusions}
\label{sec:comparison}\label{sec:limits}\label{sec:conclusion}

Exact timelike junctions give compact flat cavities with strict surface
DEC and nonnegative exterior radiation. Static collars permit successive
changes of direction at mass ratio $e^{-3L}$. The supported center has
its own sharp bound because its velocity is defined by the interior
embedding. Neither bound describes geodesic transport of a payload.

Material dynamics impose further conditions. The self-similar shells
require anisotropy, and compression produces growing normal modes in
the frozen surface equations on a prescribed geometry. Finite thickness
provides elastic stresses and local self-gravitating recoil control
with outward emission. It also changes the geometry: the bulk model
has a dynamical exterior and a generally curved vacuum cavity.

At $G=0$, exact unstrained turns make the fuel cost explicit. Smooth
angular emission approaches the ideal photon-rocket efficiency;
bounded concentration and limited inward emission reduce it by sharp
amounts. At finite angular concentration, nonzero recoil requires the
strictly positive inward-emission threshold \eqref{eq:inwardthreshold}.
With bounded center acceleration, these efficiencies also give the
sharp distance and fuel limits at fixed travel time in
Proposition~\ref{prop:timedarrival}.
A smooth bounded cavity of any shape is incompatible with nonzero
stress-free recoil while it remains photon-free, under the stated
inner-face assumptions. A photon-free cavity
therefore requires stress transmission or an additional force.

The remaining problem is a self-gravitating material cavity that
completes a turn and settles. Local recoil control does not establish
that endpoint. It requires global nonspherical Einstein--matter
estimates and a constitutive mechanism that dissipates the resulting
material motion.

\appendix
\addtocontents{toc}{\protect\setcounter{tocdepth}{1}}

\section{Self-similar matching}
\label{app:homothetic}

The first junction condition reduces to flatness of a two-dimensional
Lorentzian metric. We integrate its angular equation, construct the
interior embedding, and evaluate the surface stress.

For $r=R(u,q)$, set $w=1-q^2$ and let $\rho=R\sqrt w$ be the
cylindrical radius.
The azimuthal matching fixes $\rho$ in the Minkowski embedding
$(T,\rho\cos\varphi,\rho\sin\varphi,z)$; the remaining condition is
\begin{equation}
 b:=h_{(u,q)}-d\rho^2=-dT^2+dz^2,
 \label{eq:quotientmatch}
\end{equation}
where
\begin{align}
 b_{uu}&=-1+2m/R-2aRq+a^2R^2w-2R_u-wR_u^2,\nonumber\\
 b_{uq}&=aR^2-(1+wR_u)R_q+qRR_u,\qquad
 b_{qq}=R^2+2qRR_q-wR_q^2.
 \label{eq:quotientmetric}
\end{align}
This Lorentzian quotient, regular at the poles, admits local matching
exactly when it is flat. With $dt=du/\mathcal R$, $\mathcal R=R_0e^{-\nu t}$,
and $p=\chi'$,
\begin{align}
 b&=\mathcal R^2(E\,dt^2+2F\,dt\,dq+G\,dq^2),
 \label{eq:homoquotient}\\
 E&=-1+x/\chi-2\lambda\chi q+(\lambda^2-\nu^2)\chi^2w+2\nu\chi,\nonumber\\
 F&=\lambda\chi^2-p+\nu w\chi p-\nu q\chi^2,\qquad
 G=\chi^2+2q\chi p-wp^2,\qquad D=F^2-EG.
 \label{eq:homoEFG}
\end{align}
For $E<0$, $D>0$, flatness is $[(E'+2\nu F)/\sqrt D]'=0$.
Its signed first integral is
\begin{align}
 N:=E'+2\nu F
 &=\left(-\frac{x}{\chi^2}-2\lambda q+2\lambda^2\chi w\right)p
       -2\lambda\chi(1-\nu\chi+\lambda q\chi)=2\kappa\sqrt D,
 \label{eq:homofirst}\\
 C&=\sqrt{1-x-2\nu+\nu^2},\qquad
 \kappa=-\frac{\lambda(1-\nu)}C,\qquad
 \mathcal H=N-2\kappa\sqrt D.
 \label{eq:homokappa}
\end{align}
The choice of $\kappa$ enforces $\chi(0)=1$, $p(0)=0$.

\paragraph{Continuation and embedding.}
Choose $d>0$ with $1-d>x$. At $\lambda=\nu=p=0$, uniformly for
$|q|\le1$, $|\chi-1|\le d$, one has
$E=-(1-x/\chi)<0$, $D=\chi^2(1-x/\chi)>0$, and
$\mathcal H_p=-x/\chi^2<0$.
The analytic implicit-function theorem gives $p=V$ with
$|V|\le M(|\lambda|+|\nu|)<d/2$. Integration from $\chi(0)=1$
then reaches both poles inside the neighborhood. Analytic ODE
continuation~\cite{teschl2012} gives derivative bounds and
\begin{equation}
 \chi(q)=1+\frac{\lambda^2q^2}{1-x}
           +O((|\lambda|+|\nu|)^3).
 \label{eq:homoprofile}
\end{equation}
Put $h=\int_0^q F/E\,dq'$, $\zeta=\int_0^q\sqrt D/(-E)\,dq'$,
$\eta=t+h$, and $c_0=R_0\sqrt{-E(0)}$. The first integral gives
\begin{equation}
 b=c_0^2e^{-2\nu\eta-2\kappa\zeta}(-d\eta^2+d\zeta^2).
 \label{eq:homoconformal}
\end{equation}
For $I_c(v)=\int_0^v e^{-cs}\,ds$, including $c=0$, define
\begin{equation}
 U=c_0I_{\nu+\kappa}(\eta+\zeta),\quad
 V=c_0I_{\nu-\kappa}(\eta-\zeta),\quad
 T=(U+V)/2,\quad z=(U-V)/2.
 \label{eq:homodevelop}
\end{equation}
Then $b=-dU\,dV=-dT^2+dz^2$. Monotonicity of $I_c$ and
$\zeta'>0$ makes the development injective, with
\begin{equation}
 T_t>0,\qquad \det\frac{\partial(T,z)}{\partial(t,q)}
 =\mathcal R^2\sqrt D>0,\qquad
 \left.\frac{\partial z}{\partial q}\right|_T
 =\frac{\mathcal R^2\sqrt D}{T_t}>0.
 \label{eq:homojacobian}
\end{equation}
In Cartesian sphere coordinates the embedding is
\begin{equation}
 (u,n)\longmapsto
 (T(u,n_3),\mathcal R\chi(n_3)n_1,\mathcal R\chi(n_3)n_2,z(u,n_3)),
 \qquad n\in S^2.
 \label{eq:homosphere}
\end{equation}
Its rank is three, including at the poles. Complete constant-$T$
sections on an extended slab have increasing height and positive radius,
hence bound balls. Timelikeness persists from the static sphere.
The temporal function $u(T,z)$, since $G>0$, extends through the
exterior by \eqref{eq:endpoint_normalnorm}--\eqref{eq:endpoint_temporal}.

\paragraph{Stress and acceleration.}
For $\ell=r-\mathcal R\chi$ and $H=g_+^{-1}(d\ell,d\ell)>0$, use
$N^+=d\ell/\sqrt H$ and $K^+_{ij}=e_i^ae_j^b\nabla_aN^+_b$.
On the interior, for $i,j\in\{u,q\}$,
\begin{equation}
 N^-_\rho=\frac{\sqrt{wD}}{\chi\sqrt H},\qquad
 K^-_{ij}=-N^-_\rho(\operatorname{Hess}_b\rho)_{ij},\qquad
 K^-_{\varphi\varphi}=N^-_\rho\rho.
 \label{eq:homocurvatures}
\end{equation}
Equation~\eqref{eq:homosphere} supplies the regular pole limits.
Metric scaling gives $S_{\hat a\hat b}=\mathcal R^{-1}\mathcal S_{\hat a\hat b}$
with smooth, time-independent $\mathcal S$. The normalized Type~I DEC
margin of \eqref{eq:decfactor} persists uniformly over the sphere.
Gauss--Codazzi supplies the flux balances.
The equatorial center has $d\tau_\Gamma/du=\sqrt{C^2-\lambda^2}$
and rapidity $-\kappa t$, hence
\begin{equation}
 |a_{\rm int}|=\frac{|\kappa|}{\mathcal R\sqrt{C^2-\lambda^2}}
 =\frac{|\lambda|(1-\nu)}{\mathcal R C\sqrt{C^2-\lambda^2}}.
 \label{eq:homoaccel}
\end{equation}
Smallness ensures $\nu<1$, $C^2>\lambda^2$, and nonzero acceleration
when $\lambda\ne0$.

\paragraph{Equatorial anisotropy.}
At $(q,\chi,p)=(0,1,0)$,
$\mathcal H_p=-x(1+2\lambda^2/C^2)$ and
$\mathcal H_q=2x\lambda^2/C^2$, so
\begin{equation}
 \chi''(0)=\frac{2\lambda^2}{C^2+2\lambda^2}.
 \label{eq:equatorialcurvature}
\end{equation}
Let $J_{ab}=\mathcal R[K_{\hat a\hat b}]$ in the tetrad following
constant angular labels, with meridional leg toward increasing $q$.
Israel's equation gives
$8\pi\mathcal R(\sigma,p_1,p_2,j)
=(-J_{11}-J_{22},J_{22}-J_{00},J_{11}-J_{00},-J_{01})$, and thus
\begin{equation}
 (8\pi\mathcal R)^2\mathcal I
 =-(J_{11}+J_{00})(J_{22}-J_{11})-J_{01}^2.
 \label{eq:anisotropycurvature}
\end{equation}
At $\nu=q=0$, with $s=\sqrt{1-x}$, substitution gives
\begin{align}
 J_{00}&=\frac{s^2-1}{2s}+\frac{3(s^2-1)}{2s^3}\lambda^2+O(\lambda^4),\nonumber\\
 J_{01}&=\frac{3(s^2-1)}{2s^2}\lambda+O(\lambda^3),\qquad J_{22}=s-1,\nonumber\\
 J_{11}&=s-1+\frac{3(s^2+2s-3)}{2s^3}\lambda^2+O(\lambda^4).
 \label{eq:equatorialjumps}
\end{align}
The coefficient in \eqref{eq:anisotropycurvature} is
$3(1-s)^2[(3s+1)(s+3)-3(s+1)^2]/(4s^4)=3(1-s)^2/s^3>0$.
Analyticity, spherical symmetry at $\lambda=0$, and reflection
$\lambda\mapsto-\lambda$ give a factor $\lambda^2$ in $\mathcal I$.
This proves \eqref{eq:anisotropyexpansion} on $\nu\ge3|\lambda|$.

\section{Smooth matching through static endpoints}
\label{app:endpointburn}

We first solve the matching equation analytically in angle on small
time neighborhoods. Pulses whose derivatives vanish at the endpoints
then allow these local solutions to join static intervals smoothly.
The final steps establish a global cavity and concatenate changes of direction.

\paragraph{The exact equation and its type.}
For the general quotient \eqref{eq:quotientmetric}, write
$E=b_{uu}$, $F=b_{uq}$, $G=b_{qq}$, and $\Delta=EG-F^2$.
Flatness is equivalent, when $\Delta\ne0$, to
\begin{align}
 \mathcal P&:=4\Delta\,\operatorname{Riem}_{uquq}[b]=0,\nonumber\\
 \mathcal P&=\Delta(4F_{uq}-2E_{qq}-2G_{uu})
   +B_0 J B_0^{\mathsf T}-A_0 J C_0^{\mathsf T},\label{eq:endpoint_curvature}\\
 J&=\begin{pmatrix}G&-F\\-F&E\end{pmatrix},\qquad
 A_0=(E_u,2F_u-E_q),\nonumber\\
 B_0&=(E_q,G_u),\qquad C_0=(2F_q-G_u,G_q).\nonumber
\end{align}
All third derivatives of $R$ cancel; $\mathcal P$ is affine in $R_{qq}$.

\begin{lemma}[Principal symbol and round-graph obstruction]
\label{lem:endpoint_symbol}
Set $A=\mathcal P_{R_{uu}}$, $B=\mathcal P_{R_{uq}}/2$ and
$C=\mathcal P_{R_{qq}}$. Then
\begin{equation}
 B^2-AC=8mR\mathcal P+\frac{48m^2R_q^2}{R^2}\Delta.
 \label{eq:endpoint_symbol}
\end{equation}
On a solution with $m,R>0$ and $\Delta<0$, the graph equation is elliptic
where $R_q\ne0$ and has zero discriminant where $R_q=0$.
For a round graph $R=R(u)$ with $m,R>0$, $\mathcal P=8m\propacc^2R^3$; hence exact matching
requires $\propacc=0$.
\end{lemma}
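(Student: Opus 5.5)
The identity \eqref{eq:endpoint_symbol} is purely algebraic in the $2$-jet of $R$, so I will prove it by direct computation. The first step is to isolate how the second derivatives enter. In \eqref{eq:quotientmetric}, $E,F,G$ depend on $R$ only through its first derivatives, and $\Delta$ is first order as well. Consequently the products $B_0JB_0^{\mathsf T}$ and $A_0JC_0^{\mathsf T}$ are quadratic in second derivatives of $R$, while $4F_{uq}-2E_{qq}-2G_{uu}$ is linear in third derivatives plus quadratic in second derivatives. The cancellation of third derivatives, already noted after \eqref{eq:endpoint_curvature}, leaves $\mathcal P$ as a polynomial in $(R_{uu},R_{uq},R_{qq})$. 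The affinity in $R_{qq}$ then shows that the $R_{qq}^2$ coefficient vanishes.

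The main computation follows. I will write
\[
\mathcal P=\alpha R_{uu}^2+2\beta R_{uu}R_{uq}+\gamma' R_{uq}^2+\delta R_{uu}R_{qq}+\epsilon R_{uq}R_{qq}+(\text{lower}),
\]
with the $R_{qq}^2$ term absent. From this, $A$, $B$ and $C$ are linear in the second derivatives, and $B^2-AC$ is quadratic in them. To compare with $8mR\mathcal P$, I will check the quadratic coefficients first and then the linear and constant parts. I would organize the whole calculation by first writing $E_u,E_q,F_u,\dots$ as explicit linear functions of $R_{uu},R_{uq},R_{qq}$ with first-order coefficients. A computer-algebra check of the final identity is the honest route here, and this step is the obstacle: the identity is long and its residue $48m^2R_q^2\Delta/R^2$ is only visible after full expansion. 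To reduce the algebra I would exploit the Lorentz covariance of $b$ under the seed reparametrization, which lets one set $q=0$ locally. If that is unavailable, I would simply expand the polynomial identity symbolically in $(q,R,R_u,R_q,R_{uu},R_{uq},R_{qq},m,a,\dot m,\dot a)$.

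The consequences are then immediate. On a solution $\mathcal P=0$, so $B^2-AC=48m^2R_q^2\Delta/R^2$. With $\Delta<0$ and $m,R>0$, this is negative where $R_q\ne0$, giving ellipticity, and zero where $R_q=0$, giving a degenerate discriminant.

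For the round graph $R=R(u)$, I will set $R_q=R_{qq}=R_{uq}=0$ in \eqref{eq:quotientmetric}. This gives $G=R^2$ and $F=aR^2+qRR_u$, while $E$ becomes a polynomial in $q$ that is at most quadratic. Substituting into \eqref{eq:endpoint_curvature}, I expect all $q$-dependence, and all terms in $\dot m$, $\dot a$ and $R_{uu}$, to cancel, leaving $8ma^2R^3$. I would verify this first at $a=0$, where Vaidya with a round flat-interior match gives $\mathcal P=0$, and then track the $a^2$ terms, which arise from $E_{qq}=2a^2R^2\cdot(-1)\cdot$ (the $w$ coefficient) and from the $m/R$ cross terms. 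Because $m,R>0$, $\mathcal P=0$ then forces $a=0$.
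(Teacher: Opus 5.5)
Your proposal takes the paper's route. You expand $\mathcal P$ in $(R_{uu},R_{uq},R_{qq})$, compare $B^2-AC$ with $8mR\mathcal P$, and then set $R_q=R_{uq}=R_{qq}=0$ for the round graph, where indeed $C_0=0$ and $\mathcal P=4\propacc^2R^2[\Delta+R^2(1+R_u+\propacc Rq)^2]=8m\propacc^2R^3$.

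Drop the suggested reduction to $q=0$ by Lorentz covariance. The coordinate $q$ is tied to the acceleration axis, and no symmetry of \eqref{eq:rocketmetric} moves a general latitude to the equator; your fallback of full symbolic expansion is the valid route. That expansion is short once you see that the second-order part of $\mathcal P$ is $8mR(R_{uq}^2-R_{uu}R_{qq})$, consistent with the paper's $A=-8m(RR_{qq}-2R_q^2)$, $\mathcal P_{R_{uu}R_{qq}}=-8mR$ and $\mathcal P_{R_{uq}R_{uq}}=16mR$. Then $B^2-AC-8mR\mathcal P$ contains no second derivatives, so you only need to compute a first-order residual rather than check quadratic and linear coefficients separately.
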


\begin{proof}
Substitution and differentiation in the second derivatives give
$A=-8m(RR_{qq}-2R_q^2)$ and
$\mathcal P_{R_{uu}R_{qq}}=-8mR$,
$\mathcal P_{R_{uq}R_{uq}}=16mR$.
Expansion gives \eqref{eq:endpoint_symbol}; setting $R_q=R_{uq}=R_{qq}=0$
gives the last identity.
\end{proof}

At $R=R_0$, $\propacc=0$, the quotient is flat for varying $m$, with
\begin{equation}
 D_R\operatorname{Riem}_{uquq}[Y]
 =\left(\frac m{R_0^2}-\frac{\dot m}{R_0(1-2m/R_0)}\right)Y_{qq}.
 \label{eq:endpoint_pivot}
\end{equation}
The coefficient is positive for $0<2m/R_0<1$ and $\dot m\le0$.
The static symbol has rank one; away from $R_q=0$ the equation is elliptic.

\paragraph{An analytic estimate uniform in the time radius.}
Rescale lengths so that $R_0=1$. Fix a compact interval
$I\Subset(0,1/2)$ of reference masses.
The norm $\|\cdot\|_\rho$ is the supremum on the complex time disk
of radius $\rho$ specified below.

\begin{lemma}[Analytic matching over the sphere]
\label{lem:endpoint_analytic}
There are $c,C>0$, uniform for $m_c\in I$ and $0<\rho\le1$, with the
following property. If $m,\propacc$ are holomorphic near
$|u-u_c|\le\rho$, real on its real diameter, and
\begin{equation}
 e:=\|m-m_c\|_\rho+\|\propacc\|_\rho\le c\rho^8,
 \label{eq:endpoint_smallness}
\end{equation}
then $\mathcal P=0$ has an analytic solution with
$R(u,0)=1$, $R_q(u,0)=0$ on
$|u-u_c|<5\rho/16$, $|q|<9/8$, satisfying
\begin{equation}
 \|(R-1,R_u,R_q,R_{uu},R_{uq})\|_\infty\le C\sqrt e.
 \label{eq:endpoint_jetbound}
\end{equation}
It is unique among nearby analytic solutions with these equatorial data.
On fixed smaller product domains,
$|\partial_u^k\partial_q^\ell(R-1)|\le C_{k\ell}\rho^{-k}\sqrt e$.
\end{lemma}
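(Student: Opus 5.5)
We treat $\mathcal P=0$ as a Cauchy--Kovalevskaya-type problem in the angular variable $q$, with equatorial data $R(u,0)=1$, $R_q(u,0)=0$, and solve it by a contraction in analytic norms. This is possible because $\mathcal P$ is affine in $R_{qq}$ and contains no third derivatives. Write $R=1+Y$ and solve for $Y_{qq}$ as
\[
 Y_{qq}=\mathcal G(u,q,m,\dot m,\propacc,\dot\propacc,Y,Y_u,Y_q,Y_{uu},Y_{uq}).
\]
By \eqref{eq:endpoint_pivot} the coefficient of $Y_{qq}$ at the static sphere is $m/R_0^2-\dot m/(R_0(1-2m/R_0))$. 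For $m_c\in I$ and small perturbations this is bounded below by a constant uniform in $m_c$, so the division is legitimate on a neighborhood of the static jet. The source is driven by $\propacc$ and $\dot m$: at $Y=0$ the residual is $\mathcal P[1]$, which by Lemma~\ref{lem:endpoint_symbol} and the round-graph computation contains $8m\propacc^2$ plus terms linear in $\propacc$ multiplied by $q$-dependent factors from $E,F,G$. Cauchy estimates on the disk of radius $\rho$ bound $\dot m$ and $\dot\propacc$ by $e/\rho$.

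Because $\mathcal G$ depends on $Y_{uu}$ and $Y_{uq}$, the problem loses derivatives in $u$. We would handle this with a Nirenberg--Nishida scale of Banach spaces. Let $X_s$ consist of functions holomorphic on $|u-u_c|<s$, $|q|<r(s)$, where the $q$-radius shrinks linearly as $s$ decreases from $\rho$ to roughly $5\rho/16$. Each $u$-derivative then costs a factor $(s-s')^{-1}$. The abstract Cauchy--Kovalevskaya theorem in integral form,
\[
 Y(u,q)=\int_0^q\!\!\int_0^{q'}\mathcal G\,dq''\,dq',
\]
yields a unique solution on a $q$-interval whose length is proportional to the inverse Lipschitz constant. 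The Lipschitz constant comes from $Y_{uu}$ with coefficient $\mathcal P_{R_{uu}}=A=-8m(RR_{qq}-2R_q^2)$, which is itself small of order $\sqrt e$ near the static jet, together with the $u$-derivative losses.

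The main obstacle is making the $q$-interval reach $|q|<9/8$ uniformly in $\rho$, since the usual theorem gives only a $q$-radius of order $\rho$. We would first try exploiting the degeneracy from Lemma~\ref{lem:endpoint_symbol}: the symbol is rank one statically, and the $u$-derivative terms enter only through coefficients that vanish at the static solution. After rescaling, the effective derivative loss is then weighted by $\|Y\|$ and by the data, giving a loss of order $\sqrt e\,\rho^{-2}$ per unit $q$. The condition $e\le c\rho^8$ makes this small, and iterating CK on $O(1)$ subintervals in $q$, shrinking the time radius by a fixed fraction on each, reaches $|q|<9/8$ with time radius $5\rho/16$. The $\sqrt e$ in \eqref{eq:endpoint_jetbound} should arise because the source is linear in $\propacc$ while some coefficients carry $\propacc^2$. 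We would instead bound the solution by the square root of the residual, absorbing the $\rho$ powers through $e\le c\rho^8$; this bookkeeping we expect to be the hardest part.

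Uniqueness among nearby analytic solutions follows from the contraction, because CK uniqueness for the normal-form equation applies once the $Y_{qq}$ coefficient is nonzero. The higher derivative bounds on smaller product domains follow from Cauchy estimates in $u$, each of which costs $\rho^{-1}$, and in $q$ on a fixed domain.
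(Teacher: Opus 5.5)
\textbf{There is a genuine gap.} Your architecture matches the paper's. Both use Cauchy--Kovalevskaya in $q$ on analytic scales in $u$ via Nirenberg's theorem, the vanishing static linearization (the paper's $H(q,0;0)=D_JH(q,0;0)=0$), the smallness $e\le c\rho^8$, Cauchy estimates, and analytic uniqueness. The gap is the step the lemma is really about: reaching $|q|<9/8$ with jet bounds uniform in $\rho$. You leave this as a hope, and the mechanism you sketch does not deliver it.

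Nirenberg's theorem is first order in $q$. Any first-order reduction of $Y_{qq}=\mathcal G(\dots,Y_{uu},Y_{uq})$ contains a unit-coefficient compatibility equation, such as $(R_{uu})_q=(R_{uq})_u$, which does not degenerate at the static sphere. Its derivative loss $4/(\rho(s-s'))$ alone limits the guaranteed interval to $O(\rho)$, however small $e$ is. Restarting on subintervals would then need $O(1/\rho)$ restarts and would shrink the time radius geometrically. The scalar double integral avoids that term, but it is not the form of the theorem you cite. There the radius is set by characteristic speeds, e.g.\ $\rho/\sqrt{\epsilon}$ for a coefficient $\epsilon$ of $Y_{uu}$, not by an inverse Lipschitz constant.

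The missing idea is an anisotropic rescaling of the jet system.
\begin{itemize}
\item Carry $J=(Y,U,V,A_2,B_2)$, standing for $(R-1,R_u,R_q,R_{uu},R_{uq})$, as independent unknowns.
\item Write $J=\delta S_\eta W$ with $S_\eta=\operatorname{diag}(1,1,\eta,1,\eta)$. This conjugates the static part $LJ+NJ_u$ to $\eta(LW+NW_u)$.
\item Take $\eta=\eta_0\rho^2$ and $\delta=\sqrt e$. Nirenberg's constants $\mathcal C\lesssim\rho^{-2}(\eta+\delta/\eta+e/(\delta\eta))$ and $\mathcal K\lesssim e/(\delta\eta\rho^2)$ are then small once $\eta_0$, and then $c$, are chosen small.
\item A single Newton scheme then reaches $|q|<9/8$ at time radius $5\rho/16$.
\item Finally, the homogeneous system satisfied by $U-Y_u$, $B_2-V_u$, $A_2-U_u$ with zero data shows that the independent jets come from one function.
\end{itemize}

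\textbf{The source of $\sqrt e$.} The same scaling produces the $\sqrt e$, and your explanation of it is wrong. By the round-graph identity in Lemma~\ref{lem:endpoint_symbol}, $\mathcal P[1]=8m\propacc^2$ exactly. So the residual at $Y=0$ has no terms linear in $\propacc$ and none involving $\dot m$ or $\dot\propacc$. The bound $C\sqrt e$ is not sharp. It reflects the choice $\delta=\sqrt e$, which balances the quadratic nonlinearity $\delta/\eta$ against the forcing and data-dependent coupling $e/(\delta\eta)$.

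Carrying the jets as unknowns is also what yields $|R_{uu}|,|R_{uq}|\le C\sqrt e$ uniformly in $\rho$, as \eqref{eq:endpoint_jetbound} requires. A scheme that controls only $\sup|Y|$ loses a factor $\rho^{-2}$ when it recovers $Y_{uu}$ by Cauchy estimates.

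\textbf{Minor points.} In a Cauchy--Kovalevskaya scale the admissible $q$-range grows, not shrinks, as the $u$-radius decreases. Reality of the solution requires uniqueness together with the conjugation symmetry of the real data, which you do not mention.
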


\begin{proof}
Uniformly for $m_c\in I$, $|q|\le8$, the static values of $\Delta$
and $\mathcal P_{R_{qq}}$ are nonzero. Dividing by the latter gives
\begin{equation}
 \begin{gathered}
 R_{qq}=H(q,J;d),\qquad d=(m-m_c,a,\dot m,\dot a),\\
 J=(Y,U,V,A_2,B_2)=(R-1,R_u,R_q,R_{uu},R_{uq}).
 \end{gathered}
 \label{eq:endpoint_normal}
\end{equation}
On a smaller fixed polydisk derivatives are bounded uniformly by $M$, and
$H(q,0;0)=D_JH(q,0;0)=0$.

Treat the derivatives collected in $J$ as independent unknowns and
solve the following first-order system with zero data at $q=0$:
\begin{align}
 Y_q&=V,\qquad U_q=B_2,\qquad V_q=H,\qquad
 (A_2)_q=(B_2)_u,\nonumber\\
 (B_2)_q&=H_YU+H_UA_2+H_VB_2\nonumber\\
 &\quad+H_{A_2}(A_2)_u+H_{B_2}(B_2)_u+H_d d_u.
 \label{eq:endpoint_jets}
\end{align}
Use bounded holomorphic functions on disks of radius
$r(s)=\rho(1+s)/4$, $0<s<1$, with component maximum norm.
Cauchy's estimate is
\[
 \|\partial_u W\|_{s'}\le\frac4{\rho(s-s')}\|W\|_s,
 \qquad |d|\le C e/\rho,\quad |d_u|\le C e/\rho^2.
\]
The static linear part is $LJ+NJ_u$, where
$L_{13}=L_{25}=N_{45}=1$ and the remaining entries vanish.
For
\[
 J=\delta S_\eta W,\qquad
 S_\eta=\operatorname{diag}(1,1,\eta,1,\eta),
 \qquad 0<\eta,\delta\le1,
\]
one has $S_\eta^{-1}LS_\eta=\eta L$ and
$S_\eta^{-1}NS_\eta=\eta N$.

The scaled field $\mathcal F_q(W)$ on $\|W\|_s<1$ satisfies the
analytic-scale bounds
\begin{align}
 \|D\mathcal F_q(W)Z\|_{s'}
 &\le\frac{\mathcal C}{s-s'}\|Z\|_s,\nonumber\\
 \|\mathcal F_q(W+Z)-\mathcal F_q(W)-D\mathcal F_q(W)Z\|_{s'}
 &\le\frac{\mathcal C}{s-s'}\|Z\|_s^{3/2},\label{eq:endpoint_ckbounds}\\
 \|\mathcal F_q(0)\|_s&\le\frac{\mathcal K}{1-s}.\nonumber
\end{align}
The constants may be chosen to satisfy
\[
 \mathcal C\le M\rho^{-2}
       \left(\eta+\frac\delta\eta+\frac e{\delta\eta}\right),\qquad
 \mathcal K\le\frac{Me}{\delta\eta\rho^2}.
\]
For $W,W+Z$ in the unit ball, the unscaled nonlinear field is
$f(J,d,d_u)+Q(J,d)J_u$, with
$f(0,d,d_u)=O(e/\rho^2)$, $D_Jf=O(\delta+e/\rho^2)$,
and $Q=O(\delta+e/\rho)$. The component rescaling contributes
$\eta^{-1}$; a time derivative requires a smaller analytic disk by
Cauchy's estimate. The quadratic remainder implies the $3/2$ bound. The field is
holomorphic, so Nirenberg's theorem~\cite{nirenberg1972} applies with
the following radius estimate.

For $e>0$, choose $\eta=\eta_0\rho^2$, $\delta=\sqrt e$.
Then
\begin{equation}
 \mathcal C\le M\left(\eta_0+
            \frac{2\sqrt e}{\eta_0\rho^4}\right),\qquad
 \mathcal K\le\frac{M\sqrt e}{\eta_0\rho^4}.
 \label{eq:endpoint_scaledconstants}
\end{equation}
Choose $\eta_0$, then $c$, to make both constants small; $e=0$ gives
$R=1$. For an explicit radius, set $t=q/4$,
$\mathcal F_*(W,t)=4\mathcal F_{4t}(W)$,
$\mathcal C_*=4\mathcal C$, $\mathcal K_*=4\mathcal K$,
and $a_n=\tfrac34\prod_{j=2}^{n+1}(1-j^{-2})$.
Starting with $W_0=0$, let
\[
 G_n(t)=\int_0^t\mathcal F_*(W_n(\tau),\tau)\,d\tau-W_n(t),\qquad
 \lambda_n=\sup_{s,\,|t|<a_n(1-s)}
 \|G_n(t)\|_s\left(\frac{a_n(1-s)}{|t|}-1\right).
\]
Solve $Z_n=G_n+\int_0^tD\mathcal F_*(W_n)Z_n\,d\tau$ and put $W_{n+1}=W_n+Z_n$.
In the weighted norm defining $\lambda_n$, the Volterra operator
$Z\mapsto\int_0^tD\mathcal F_*(W_n)Z\,d\tau$ has norm at most
$4a_n\mathcal C_*$. To see this, apply Cauchy's estimate at the
intermediate radius $s(\tau)=(1+s-|\tau|/a_n)/2$ and integrate along
the radial segment from $0$ to $t$. Thus $8a_0\mathcal C_*<1$
bounds the weighted correction by $2\lambda_n$;
\eqref{eq:endpoint_ckbounds} then gives
$\lambda_0\le a_0\mathcal K_*$ and
$\lambda_{n+1}\le32\mathcal C_*\lambda_n^{3/2}(n+2)$.
Choose the constants in \eqref{eq:endpoint_scaledconstants} so that
\begin{equation}
 8a_0\mathcal C_*<1,\qquad 1024\mathcal C_*\sqrt{a_0\mathcal K_*}\le1,
 \qquad a_0\mathcal K_*<1/32.
 \label{eq:endpoint_newtonconstants}
\end{equation}
Then $\lambda_n\le a_0\mathcal K_*(n+1)^{-4}$ by induction, since
$(n+2)^5/(n+1)^6\le32$.
The total correction on successive smaller domains is at most
$2\sum_n\lambda_n(a_n/a_{n+1}-1)^{-1}<16a_0\mathcal K_*<1/2$.
Thus the iteration converges in the unit ball. Since $a_n\to3/8$,
$s=1/4$ gives $|q|<9/8$ and time radius $5\rho/16$. Cauchy estimates
give \eqref{eq:endpoint_jetbound} and its derivative bounds.

It remains to check that the independent unknowns are the derivatives
of one function $Y$. For
$C_1=U-Y_u$, $C_2=B_2-V_u$ and $C_3=A_2-U_u$,
\eqref{eq:endpoint_jets} gives
\[
 (C_1)_q=C_2,\qquad (C_3)_q=0,\qquad
 (C_2)_q=H_YC_1+H_UC_3+H_VC_2.
\]
Zero data give $C_1=C_2=C_3=0$, recovering the original equation.
Analytic Cauchy uniqueness and continuation give uniqueness on the
product domain; conjugation gives reality.
\end{proof}

\paragraph{A pulse with flat endpoints.}
For fixed $\gamma\ge3$, define in the rescaled units
\begin{align}
 \beta(u)&=\begin{cases}
  \exp[-1/(u(1-u))],&0<u<1,\\0,&u\notin(0,1),
 \end{cases}\qquad
 \propacc=\alpha\beta,\nonumber\\
 m(u)&=m_i\exp\!\left[-\gamma\int_0^u\propacc(v)\,\dd v\right].
 \label{eq:endpoint_bump}
\end{align}
These histories are smooth, with static ends, and obey
$-\dot m=\gamma m\propacc\ge3m|\propacc|$.
At $u_c\in(0,1)$, put $d_c=\min(u_c,1-u_c)$ and $\rho=d_c/8$.
On $|u-u_c|\le\rho$,
\[
 \Re\frac1{u(1-u)}=\Re\frac1u+\Re\frac1{1-u}
 \ge\frac{56}{81d_c}>\frac1{2d_c}.
\]
For $u_c\le1/2$, the first reciprocal is at least
$(7u_c/8)/(9u_c/8)^2$ and the second has positive real part; use $1-u$ for
the other half. The local analytic mass equation then gives
\begin{equation}
 \|m-m(u_c)\|_\rho+\|\propacc\|_\rho
       \le C\alpha e^{-1/(2d_c)}.
 \label{eq:endpoint_bumpbound}
\end{equation}
The constant is uniform for reference masses in a fixed compact mass interval.
Since $d_c^{-8}e^{-1/(2d_c)}$ is bounded, one sufficiently small $\alpha>0$
makes Lemma~\ref{lem:endpoint_analytic} apply at every $u_c$.

Analytic Cauchy uniqueness identifies solutions on overlapping time
disks, first at the equator and then by angular continuation.
For each $0<\epsilon<1/8$, Cauchy estimates give
\begin{equation}
 |\partial_u^k\partial_q^\ell(R-1)|
 \le C_{k\ell}\sqrt\alpha\,d_c^{-k}e^{-1/(4d_c)},\qquad |q|\le1+\epsilon.
 \label{eq:endpoint_flatjets}
\end{equation}
The exponential dominates every power of $d_c^{-1}$, so all mixed
derivatives of $R-1$ vanish at the endpoints. Extension by $R=1$ is smooth
and exact, with every fixed derivative norm tending to zero as
$\alpha\to0$.

For small $\alpha$, $E<0$, $\Delta<0$, $G>0$, and both metrics
remain Lorentzian. The embedding and curvatures approach the static
values uniformly on compact mass intervals below $12/25$, preserving
Type~I DEC. Gauss--Codazzi gives the flux balances.

\paragraph{Finite rapidity and a global cavity.}
For prescribed $d>0$, choose $I\Subset(0,12/25)$ containing
$[m_i e^{-\gamma d},m_i]$. The uniform smallness threshold gives an
allowed increment $d_0=\alpha_0\int_0^1\beta\,du>0$.
Concatenate finitely many translated pulses of increment at most $d_0$,
adjusting the last amplitude to sum to $d$. Vanishing endpoint derivatives allow
static gaps and smooth joining. Restoring $R_0$ uses
$\widetilde u=u/R_0$, $\widetilde m=m/R_0$, and
$\widetilde a=R_0a$, preserving $\int a\,du$.

The flat quotient on the simply connected time strip has a parallel
orthonormal coframe. Torsion freeness makes its forms closed, yielding
$b=-dT^2+dz^2$ with future orientation and
$T_u>0$, $J_{\rm dev}=\det\partial(T,z)/\partial(u,q)=\sqrt{-\Delta}>0$.
On each static end the coframe is constant, so $T\to\pm\infty$
uniformly in $q$. Each $T$ therefore determines a unique $u(T,q)$, and
\begin{equation}
 \left.\frac{\partial z}{\partial q}\right|_T
       =\frac{J_{\rm dev}}{T_u}>0.
 \label{eq:endpoint_injectivity}
\end{equation}
The injective quotient gives the spherical embedding
\begin{equation}
 (u,n)\longmapsto
 (T(u,n_3),R(u,n_3)n_1,R(u,n_3)n_2,z(u,n_3)),\qquad n\in S^2,
 \label{eq:endpoint_embedding}
\end{equation}
with rank three at the poles. Its constant-$T$ sections bound balls
by increasing height and positive cylindrical radius.

\paragraph{A temporal function across the junction.}
Since $b^{uu}=G/\Delta<0$, the inverse coordinate $u(T,z)$ has a
timelike gradient and is therefore a temporal function.
Extend it independently of cylindrical radius through the cavity.
In the exterior, let $\ell=r-R(u,q)\ge0$ and choose $L>0$.
The exact inverse metric gives
\begin{align}
 g^{-1}(\dd u,\dd\ell)&=-1,\nonumber\\
 B_\ell:=g^{-1}(\dd\ell,\dd\ell)
 &=1-\frac{2m}{r}+2\propacc rq+2R_u
   -2\propacc(1-q^2)R_q+\frac{(1-q^2)R_q^2}{r^2}.
 \label{eq:endpoint_normalnorm}
\end{align}
The finite construction has $R\ge R_{\min}>0$ and bounded histories and
shape derivatives. Therefore $(B_\ell)_+\le C_0+C_1\ell$ uniformly in time and
angle. Put
$\tau_+=u+\delta(1-e^{-\ell/L}),\quad p=e^{-\ell/L}/L>0$.
Then
\begin{equation}
 g^{-1}(\dd\tau_+,\dd\tau_+)
 =-2\delta p+\delta^2p^2B_\ell<0
 \quad\hbox{if}\quad
 \delta(C_0/L+C_1/\mathrm e)<2.
 \label{eq:endpoint_temporal}
\end{equation}
The orientation agrees with the future outgoing null vector $\partial_r$.
The temporal functions share the trace $u$. In Gaussian normal charts
the continuous metric gives a common convex timelike covector cone.
Local smoothing, with a locally finite partition and errors inside that
cone, gives a global smooth temporal function and stable
causality~\cite{minguzzi2019}.

\paragraph{The interior center and the mass law.}
The equatorial center $\Gamma$ lies inside each constant-$T$ cavity section.
Writing $s(u)=\sqrt{1-2m(u)/R_0}$, its proper time is
$\frac{\dd\tau_c}{\dd u}=\sqrt{s^2-(\propacc R_0)^2}>0$.
At the equator, $E=-s^2+(\propacc R_0)^2$, $F=\propacc R_0^2$,
$G=R_0^2$.
To compute the center's rapidity, use the orthonormal coframe
$\theta^0=h\,\dd u$, $\theta^1=j\,\dd q+k\,\dd u$, where
$h=\sqrt{-\Delta/G}$, $j=\sqrt G$, and $k=F/\sqrt G$.
The torsion equations give the boost connection
$\omega_u=[h_q+k(j_u-k_q)/h]/j$.
At $q=0$, $h=s$, $j=R_0$, $k=\propacc R_0$,
$j_u=0$, $k_q=-R_{qq}/R_0$, and
$h_q=\propacc(R_0-R_{qq})/s$.
Thus $\omega_u=\propacc/s$; the unknown angular curvature cancels.
The center tangent has coframe rapidity $\operatorname{artanh}(\propacc R_0/s)$, so
\begin{equation}
 \dot\chi_c=
 \frac{\dd}{\dd u}\operatorname{artanh}\frac{\propacc R_0}{s}
       +\frac{\propacc}{s}.
 \label{eq:endpoint_centerrate}
\end{equation}
The endpoint term vanishes. Since
$\dot s=\gamma\propacc(1-s^2)/(2s)$,
\begin{equation}
 \Delta\chi_c=\int\frac{\propacc}{s}\,\dd u
   =\frac2\gamma(\operatorname{artanh}s_f-
                         \operatorname{artanh}s_i),\qquad
 d=\int\propacc\,\dd u=\frac1\gamma\log\frac{m_i}{m_f}.
 \label{eq:endpoint_gains}
\end{equation}
Any finite positive center rapidity can also be selected:
$s_f=\tanh(\operatorname{artanh}s_i+\gamma\Delta\chi_c/2)<1$ specifies
a positive $m_f$ and hence a finite $d$.

\paragraph{Proof of Theorem~\ref{thm:turningjunction}.}
Choose a seed worldline $Z(u)$ with $v=\dot Z$ and a Fermi--Walker orthonormal triad $e_i$:
\[
 \dot v=A^ie_i,\qquad \dot e_i=A_i v,\qquad
 k=v+n^ie_i,\qquad x=Z+rk,\quad n\in S^2.
\]
The Minkowski pullback plus $2m\,\dd u^2/r$ is the arbitrary-direction photon rocket~\cite{podolsky2011},
\begin{align}
 \dd s_+^2={}&-\left(1-\frac{2m}{r}+2rA\cdot n
       -r^2[|A|^2-(A\cdot n)^2]\right)\dd u^2-2\dd u\dd r\nonumber\\
 &+2r^2\dd u\,A\cdot\dd n+r^2\dd\Omega^2,
 \qquad 4\pi n_{\rm rad}^2=-\dot m-3mA\cdot n.
 \label{eq:vectorrocket}
\end{align}
Here $n_{\rm rad}^2$ is the radiation coefficient, distinguished from the angular unit vector $n$.
For a nonzero edge, the initial unit tangent to its velocity geodesic is
$w_j=\frac{v_j-\cosh(d_j)v_{j-1}}{\sinh d_j}$.
Express $w_j$ in the current triad and use Theorem~\ref{thm:endpointburn} with $\gamma=3$ and that fixed acceleration direction.
Zero-length edges require no burn.
All pulse derivatives vanish on the intervening static collars, so the exterior metric and shell graph $r=R(u,n)$ are smooth.
The inertial ends ensure that every past light cone meets the seed; timelikeness makes the intersection unique.
Thus its retarded coordinates are globally injective away from the excised seed.

On a static collar the interior embedding is
\[
 Y=C_j+s_j(u-u_j)U_j^c+R_0E_{j,i}^c n^i,\qquad
 s_j=\sqrt{1-2m_j/R_0}.
\]
This matched frame generally differs from the seed frame. Identify
consecutive collars by a Poincar\'e transformation preserving angular
labels and static data. Insert a coast until a rest-time cap lies beyond
the preceding compact active pieces and static past. Attach the next
block above it: $h_{uu}<0$ keeps every generator on its future side.
The separating spacelike plane prevents overlap. Induction gives a
complete embedded boundary whose Minkowski-time sections bound balls.

Each block's interior temporal function $u$ is affine in rest time on a static collar, with slope $1/s_j$, so these functions patch.
For the exterior put $\ell=r-R(u,n)$ and $A_T=A-(A\cdot n)n$.
The inverse of \eqref{eq:vectorrocket} gives $g^{-1}(\dd u,\dd\ell)=-1$ and
\[
 B_\ell=1-\frac{2m}{r}+2rA\cdot n+2R_u
       -2A_T\cdot\nabla_{S^2}R+\frac{|\nabla_{S^2}R|^2}{r^2}.
\]
Finite bounded histories again give $(B_\ell)_+\le C_0+C_1\ell$.
Equation~\eqref{eq:endpoint_temporal} and the same cone smoothing therefore provide a global temporal function.
Strict surface DEC persists on each rotated block and static collar; \eqref{eq:vectorrocket} gives nonnegative radiation, and \eqref{eq:nonews} gives zero news for the full velocity history.
Integrating $-\dot m=3m|A|$ gives \eqref{eq:polygonmass}.
The velocity triangle inequality gives $D\le L$; equality in hyperbolic space requires successive segments of one geodesic with consistent orientation.
\hfill$\square$

\section{Global material reconstruction}
\label{app:material}

We fit the stress with a stored-energy law, impose material-radiation
exchange, and obtain the intrinsic wave speeds from the quadratic
perturbation energy.

The isolated timelike eigendirection of a strict Type~I stress defines
a smooth material velocity $U$. Flow from a spacelike sphere supplies
material labels and a clock $U\tau=1$.
A positive material area form $\omega$ gives
\begin{equation}
 J^a=\tfrac12\epsilon^{abc}\omega_{AB}D_bX^AD_cX^B=nU^a,
 \qquad n=|\omega_{12}|\sqrt{\det H},\qquad
 n_{H^{AB}}=\tfrac n2(H^{-1})_{AB}.
 \label{eq:materialcurrent}
\end{equation}
Closedness of the pulled-back area form gives $D_aJ^a=0$. Let $e_A^a$
be spatially dual to $D_aX^A$, so $h(e_A,e_B)=(H^{-1})_{AB}$.
On the selected history, denoted by overbars,
$\bar S_{ab}=\bar\varepsilon\bar U_a\bar U_b+\bar P_{ab}$ and
$\bar\gamma=\bar h+\bar U\otimes\bar U$.
Choose a constant $z_0>0$ such that, uniformly on the compact slab,
\begin{equation}
 0<\bar n z_0<\min(\bar\varepsilon,\bar\varepsilon+\bar P_1,
                                     \bar\varepsilon+\bar P_2).
 \label{eq:carriedmargin}
\end{equation}
For \eqref{eq:globalenergy}, take
\begin{equation}
 Q_{AB}=\bar e_A^a\bar e_B^b[\bar P_{ab}+(\bar\varepsilon-\bar nz_0)\bar\gamma_{ab}],
 \qquad V=-\tfrac12\bar\gamma^{ab}\bar P_{ab}.
 \label{eq:globalcoefficients}
\end{equation}
These smooth global tensors extend off the history with $Q>0$.
Metric variation of $I_m=-\int\mathcal E\,dV_h$, holding $(\tau,z)$ fixed, gives
\begin{equation}
 S_{ab}=Q_{AB}D_aX^AD_bX^B
 -(\tfrac12Q_{AB}H^{AB}+V)h_{ab}+nzU_aU_b.
 \label{eq:globalstress}
\end{equation}
At $(H,z)=(\bar H,z_0)$, tracing \eqref{eq:globalcoefficients} gives
$Q:\bar H=\bar P_1+\bar P_2+2(\bar\varepsilon-\bar nz_0)$, hence
$\mathcal E=\bar\varepsilon$ and $S=\bar S$.
Strict DEC persists by continuity.

For the expansion tensor $\Theta_{ab}$ of $U$,
$UH^{AB}=-2D^aX^AD^bX^B\Theta_{ab}$ and the chain rule gives
\begin{equation}
 U\mathcal E+(\mathcal E\gamma^{ab}+P^{ab})\Theta_{ab}
 =\mathcal E_\tau U\tau+nUz=-\mathcal L.
 \label{eq:globalfirstlaw}
\end{equation}
The last equality is \eqref{eq:globalreaction}.
On the history, junction energy balance gives $\mathcal E_\tau=-\bar{\mathcal L}$,
so $Uz=0$ at $z=z_0$.
Normalize the prescribed null direction by $-\bar U\cdot\bar k=1$ and write
$\bar k=\bar U+\bar v+\bar c\bar N$, $\bar c>0$.
Choose $d_A$ with $\bar v^a=d_AD^a\bar X^A$. In the state neighborhood set
\begin{equation}
 v^a=d_AD^aX^A,\quad c=\sqrt{1-v^2}>0,\quad
 k=U+v+cN,\quad T^{+\mu\nu}|_\Sh=(\mathcal L/c)k^\mu k^\nu.
 \label{eq:emissionclosure}
\end{equation}
For prescribed $\mathcal L=\bar{\mathcal L}\ge0$, this reproduces the
crossing tensor, with
$f_a=-T^+_{Na}=-\mathcal L(U_a+v_a)$ and $T^+_{NN}=\mathcal Lc$.
If $\mathfrak E_A$ is the label Euler--Lagrange expression, the diffeomorphism identity is
\begin{equation}
 D_bS^b{}_a=\mathfrak E_AD_aX^A-\mathcal E_\tau D_a\tau-nD_az.
 \label{eq:materialnoether}
\end{equation}
Set $\mathfrak E_A=e_A^a(f_a+\mathcal E_\tau D_a\tau+nD_az)$.
Equation~\eqref{eq:globalreaction} makes the parenthesis orthogonal
to $U$, giving $D_bS^b{}_a=f_a$. Off the history, normal balance
is a separate embedding equation.

To determine the characteristics, freeze a material rest frame with $H=I$.
For $\delta X=v\psi$, the rank-one spatial Jacobian has determinant
$\det(I+v\otimes\nabla\psi)=1+v\cdot\nabla\psi$.
Thus $nz$ has no quadratic spatial term; its timelike correction adds
$nz|v|^2(\partial_t\psi)^2/2$ to the kinetic energy.
The quadratic Lagrangian is
\begin{equation}
 \tfrac12v^T(Q+nzI)v\,(\partial_t\psi)^2
 -\tfrac12v^TQv\,|\nabla\psi|^2.
 \label{eq:materialquadratic}
\end{equation}
In general material coordinates its inertia is $M=Q+nzH^{-1}$, giving
\eqref{eq:materialspeed}.
For $\Pi=\partial_t\delta X$, $Y_i=\partial_i\delta X$, and
$\alpha=(\delta\tau,\delta z)$, the remaining principal terms are
\[
 M\partial_t\Pi=Q\partial_iY_i+B_i\partial_i\alpha,
 \qquad \partial_tY_i=\partial_i\Pi,\qquad \partial_t\alpha=0.
\]
The bounded change $Y_i\mapsto Y_i+Q^{-1}B_i\alpha$ separates the reaction variables.
The elastic energy $\Pi^TM\Pi+\sum_iY_i^TQY_i$ is positive and the
reaction block has zero rest-frame speed, with no additional Jordan block.

\subsection{Normal response and Sobolev growth}
\label{app:normalresponse}

\begin{lemma}[Persistence under a bounded normal response]
\label{lem:normalresponse}
Fix $\varepsilon,p>0$, put $c_*=\sqrt{p/\varepsilon}$, and let $k>0$.
Suppose elimination of the other perturbations is invertible near the closed disk
$|\sigma-c_*k|\le c_*k/2$ and gives
\[
 D(\sigma,k)=\varepsilon\sigma^2-pk^2+\mathcal F(\sigma,k).
\]
Assume $\mathcal F$ is holomorphic there and $|\mathcal F|\le C_0k$ on the disk, with $C_0\ge0$ independent of $k$.
If $k>4C_0/(3p)$, this disk contains exactly one zero of $D$, counted with multiplicity, and
\begin{equation}
 \operatorname{Re}\sigma>\tfrac12c_*k,\qquad
 |\sigma-c_*k|\le\frac{2C_0}{3\varepsilon c_*}.
 \label{eq:responsegrowth}
\end{equation}
\end{lemma}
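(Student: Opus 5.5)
The plan is Rouché's theorem on the closed disk $\overline{\mathcal D}=\{|\sigma-c_*k|\le c_*k/2\}$, comparing $D$ with the frozen principal part $D_0(\sigma)=\varepsilon\sigma^2-pk^2=\varepsilon(\sigma-c_*k)(\sigma+c_*k)$. Since $D_0$ has zeros $\pm c_*k$, exactly one of them ($\sigma=c_*k$, simple) lies in $\overline{\mathcal D}$. Both $D_0$ and $\mathcal F$ are holomorphic on a neighbourhood of $\overline{\mathcal D}$ by hypothesis, so it suffices to show $|\mathcal F|<|D_0|$ on the boundary circle.

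On $|\sigma-c_*k|=c_*k/2$ we have $|\sigma+c_*k|\ge 2c_*k-c_*k/2=\tfrac32c_*k$. Hence
\[
 |D_0|\ge\varepsilon\cdot\tfrac12c_*k\cdot\tfrac32c_*k=\tfrac34\varepsilon c_*^2k^2=\tfrac34pk^2 .
\]
The condition $k>4C_0/(3p)$ gives $\tfrac34pk^2>C_0k\ge|\mathcal F|$. Rouché then yields exactly one zero of $D$ in the open disk, counted with multiplicity, and none on the circle. This zero $\sigma$ therefore satisfies $|\sigma-c_*k|<c_*k/2$, and so $\operatorname{Re}\sigma>c_*k-c_*k/2=\tfrac12c_*k$.

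For the sharper location bound, evaluate at the zero: $D(\sigma)=0$ gives $\varepsilon|\sigma-c_*k|\,|\sigma+c_*k|=|\mathcal F(\sigma,k)|\le C_0k$. Inside the disk, $|\sigma+c_*k|\ge 2c_*k-|\sigma-c_*k|>\tfrac32c_*k$. Therefore
\[
 |\sigma-c_*k|\le\frac{C_0k}{\varepsilon\cdot\tfrac32c_*k}=\frac{2C_0}{3\varepsilon c_*},
\]
which is \eqref{eq:responsegrowth}.

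There is no real obstacle; the only point needing care is the boundary lower bound for $|\sigma+c_*k|$, which fixes the constant $3/4$ and hence the threshold $4C_0/(3p)$. The argument also needs invertibility of the elimination on a neighbourhood of the closed disk, so that $D$ is holomorphic there and its zeros are the genuine modes. That is supplied by the hypothesis and is not re-derived here.
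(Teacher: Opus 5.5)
Your proposal is correct and follows the paper's proof essentially line by line. Both apply Rouch\'e's theorem against the factored quadratic $\varepsilon(\sigma-c_*k)(\sigma+c_*k)$ using $|\sigma+c_*k|\ge\tfrac32c_*k$ on the circle, get $\operatorname{Re}\sigma>\tfrac12c_*k$ from the zero lying in the interior, and obtain the location bound from the same factorization evaluated at that zero.
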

\begin{proof}
On the boundary, $|\sigma+c_*k|\ge3c_*k/2$, so
\[
 |\varepsilon\sigma^2-pk^2|
 =\varepsilon|\sigma-c_*k|\,|\sigma+c_*k|
 \ge\tfrac34pk^2>C_0k.
\]
Rouch\'e's theorem gives the same zero count as the quadratic, namely one.
The disk lies in $\operatorname{Re}\sigma\ge c_*k/2$, and the zero is interior.
At that zero, the same factorization and
$|\sigma+c_*k|\ge3c_*k/2$ give the second bound.
\end{proof}

The radiating junction still requires constrained invertibility on the
complex disk and reconstruction of a physical mode. The vacuum response
in~\cite{yang2023} does not establish these properties.

\begin{corollary}[Failure of bounded frozen normal evolution]
\label{cor:frozenillposed}
Let $\varepsilon>0$ and at least one principal pressure be positive.
On a periodic tangent plane, \eqref{eq:normalprincipal} has no bounded solution map
$H^s\times H^{s-1}\to H^{s-r}$ at any fixed positive time, for any $s\in\mathbb R$ and finite $r\ge0$.
The full frozen surface system in Proposition~\ref{prop:membrane} likewise has no bounded Sobolev evolution with any finite derivative loss for its first-order variables.
\end{corollary}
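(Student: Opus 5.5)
The plan is to build explicit Fourier modes on the periodic tangent plane and show that no bounded map survives the exponential growth. Suppose, without loss of generality, that $P^{11}>0$. On the periodic plane with period $2\pi$, take the direction $\hat k=e_1$ and integer wavenumbers $k\to\infty$. Then $p(\hat k)=P^{11}>0$. Equation~\eqref{eq:normalprincipal} reduces, for $\xi=\hat\xi(t)e^{ikx^1}$, to
\[
\varepsilon\ddot{\hat\xi}=p(\hat k)k^2\hat\xi .
\]
Choose data $(\xi_0,\xi_1)=(\delta e^{ikx^1},\,\delta c_*k\,e^{ikx^1})$ with $c_*=\sqrt{p(\hat k)/\varepsilon}$. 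The solution is exactly $\delta e^{c_*kt}e^{ikx^1}$. By uniqueness of the Fourier ODE, any solution map must return this solution.

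Now compare norms. The data norm in $H^s\times H^{s-1}$ is of size $\delta\langle k\rangle^s$. The $H^{s-r}$ norm at time $t>0$ is of size $\delta\langle k\rangle^{s-r}e^{c_*kt}$. Their ratio is $\langle k\rangle^{-r}e^{c_*kt}$, which diverges as $k\to\infty$ for any fixed $s$ and finite $r$. So no bounded map exists. If boundedness is meant only on a small ball, normalize $\delta=\delta_k$ so the data norm tends to zero while the output norm blows up; this is the ill-posedness already described before Appendix~\ref{app:normalresponse}. Boundedness for any derivative-lossy solution map fails likewise, since a linear bounded map would have to bound each Fourier mode.

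For the full frozen system, use Proposition~\ref{prop:membrane}. The system is translation invariant after freezing, so Fourier modes decouple into the finite-dimensional ODE $W_t=[kA_1(\hat k)+O(1)]W$. Its matrix has an eigenvalue $\sigma(k)=kc_*+O(1)$ whose eigenvector $W_k$ converges to the normal eigenvector. Take data $\delta_kW_ke^{ikx^1}$, normalized so the data vanish in the first-order $H^s$ norms. These norms are comparable to $\langle k\rangle^s|W|$ because $W$ already carries the weights $k\xi$ and $k\eta$. The solution is $\delta_k e^{\sigma(k)t}W_ke^{ikx^1}$, and $\operatorname{Re}\sigma\ge c_*k/2$ for large $k$. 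The same ratio argument then gives $\langle k\rangle^{-r}e^{c_*kt/2}\to\infty$.

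The main obstacle is making sure the eigenvector is genuinely nondegenerate, so that the growing mode is not annihilated in the chosen norm. The convergence of $W_k$ to a fixed normal eigenvector, with nonzero $k\xi$ component, handles this. Uniform equivalence of the first-order variable norms with Sobolev norms needs only the bounded triangular reduction of Appendix~\ref{app:material}.
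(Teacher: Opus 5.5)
Your argument is essentially the paper's. You use single Fourier modes on which \eqref{eq:normalprincipal} grows like $e^{c_*kt}$. You choose amplitudes so that the data norm tends to zero while the $H^{s-r}$ norm at a fixed time diverges like $\langle k\rangle^{-r}e^{c_*kt}$; the paper takes $A_k=|k|^{-s}e^{-c|k|t_*/2}$. For the full system you use the growing eigenvectors of Proposition~\ref{prop:membrane} with $\mathrm{Re}\,\sigma=c_*k+O(1)$.

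The one step that needs more than you give is ``without loss of generality $P^{11}>0$'' with period $2\pi$. The paper's proof treats the period lattice as given. A positive principal direction need not be a lattice direction, and no coordinate axis need carry positive pressure. For example, $P=\bigl(\begin{smallmatrix}-1&2\\2&-1\end{smallmatrix}\bigr)$ has $P^{11}=P^{22}<0$ but $p>0$ along $(1,1)$.

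The paper closes this by noting that $\{\hat k: p(\hat k)>0\}$ is a nonempty open cone. It therefore contains the direction of some nonzero reciprocal-lattice vector $k_0$, by density of rational directions. You then run your argument with wave vectors $Nk_0$, $N\to\infty$, for both the scalar equation and the frozen system. Your lattice-dependent WLOG also hides exactly this step, since the proposition's modes must likewise be lattice modes.

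If the perturbations are required to be real, you should also take real or imaginary parts of your complex modes, as the paper does. The coefficients are real, so these parts are still solutions, and they lose only a constant factor in norm.
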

\begin{proof}
The open cone of positive-pressure directions contains a reciprocal-lattice direction. Choose its integer multiples as wave vectors and put $c=\sqrt{p(\hat k)/\varepsilon}>0$.
A Fourier mode has the growing solution $\xi_k=A_ke^{c|k|t}e^{ikx}$.
For a prescribed $t_*>0$, take $A_k=|k|^{-s}e^{-c|k|t_*/2}$.
The initial $H^s\times H^{s-1}$ norm tends to zero, whereas the $H^{s-r}$ displacement norm at $t_*$ grows as
$|k|^{-r}e^{c|k|t_*/2}\to\infty$.
The full system uses its growing unit eigenvectors and
$\operatorname{Re}\sigma=c|k|+O(1)$. Real or imaginary parts yield
real perturbations with the same growth.
\end{proof}

\section{Radiating free-boundary estimates}
\label{app:transport}

Conservation propagates the Einstein constraints. We then estimate
photon transport across moving faces and use transverse crossings
to control normal derivatives in the coupled evolution.
Here $C^{1,1}$ means continuously differentiable with Lipschitz first derivatives.

\subsection{Constraints and photon transport}

Static initial data are compatible with delayed emission. For
$g=-e^{2U}dt^2+e^{-2U}h_{ij}dx^idx^j$, spatial harmonic coordinates
are spacetime harmonic: $\Box_gt=0$ and
$\Box_gx^i=e^{2U}\Delta_hx^i=0$.
Here $U$ denotes the static metric potential.

\begin{theorem}[Constraint propagation with material-radiation exchange]
\label{thm:bulkconstraints}
Let a reduced solution have a piecewise smooth $C^{1,1}$ metric,
a smooth timelike material boundary, complete uniformly spacelike slices,
and bounded wave-energy coefficients, lapse, and inverse lapse.
Assume total stress $T=\mathbf1_{\Omega_t}T_{\rm mat}+T_\gamma$
obeys the exchange law and $N_\mu T_{\rm mat}^{\mu\nu}=0$ at the faces.
If the harmonic residual $C^\mu=g^{\alpha\beta}\Gamma^\mu_{\alpha\beta}$
is in the wave-energy class and
\begin{equation}
 G_{\mu\nu}-8\pi GT_{\mu\nu}
 =\nabla_{(\mu}C_{\nu)}-\tfrac12g_{\mu\nu}\nabla_\alpha C^\alpha,
 \label{eq:reducedresidual}
\end{equation}
then initial $C=0$ and the normal Einstein constraints imply $C=0$ throughout.
\end{theorem}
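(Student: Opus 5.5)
Apply the contracted Bianchi identity to \eqref{eq:reducedresidual}. The left side is divergence-free once total stress is conserved. This gives a linear homogeneous wave equation for the harmonic residual $C^\mu$. Energy uniqueness with zero Cauchy data then forces $C=0$. The work is in justifying each step for a piecewise smooth $C^{1,1}$ metric across a moving material face, and in producing zero data for $\nabla_t C$ from the constraints.

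\emph{Step 1: total stress is weakly conserved.} Inside $\Omega_t$ the exchange law gives $\nabla_\mu T_{\rm mat}^{\mu\nu}=-J^\nu$, and the photon moments give $\nabla_\mu T_\gamma^{\mu\nu}=J^\nu$. The indicator produces a surface term $\delta_{\partial\Omega}N_\mu T_{\rm mat}^{\mu\nu}$, which vanishes by the traction-free condition. Hence $\nabla_\mu T^{\mu\nu}=0$ in the distributional sense. With a $C^{1,1}$ metric, $G_{\mu\nu}$ is $L^\infty$ and the Bianchi identity $\nabla^\mu G_{\mu\nu}=0$ holds distributionally. One can see this by mollifying the metric, since the identity is a divergence structure linear in second derivatives, or by working piecewise and noting the jump terms cancel under $C^{1,1}$ matching.

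\emph{Step 2: derive the residual wave equation.} Taking the divergence of \eqref{eq:reducedresidual} gives
$\tfrac12\Box_g C_\nu+\tfrac12 R_{\nu\mu}C^\mu=0$,
using $\nabla^\mu\nabla_\nu C_\mu=\nabla_\nu\nabla^\mu C_\mu+R_{\nu\mu}C^\mu$. This is a linear wave equation with $L^\infty$ zeroth-order coefficient $R_{\nu\mu}$ and $C^{0,1}$ principal part. For such an equation the standard energy estimate on complete uniformly spacelike slices closes. The bounded lapse, inverse lapse, and wave-energy coefficients give $\partial_t\mathcal E\le K\mathcal E$ for $\mathcal E=\|\partial C\|_{L^2}^2+\|C\|_{L^2}^2$, and Gr\"onwall then applies. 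The energy class assumed for $C$ is exactly what makes the integrations by parts legitimate.

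\emph{Step 3: zero initial data.} $C=0$ on the initial slice is assumed. For the normal derivative, contract \eqref{eq:reducedresidual} with the unit normal $n$ on the initial slice. When $C|_{t=0}=0$, all tangential derivatives of $C$ vanish there. The residual's normal–normal and normal–tangential components then reduce to nondegenerate algebraic expressions in $n^\mu\nabla_\mu C$. The normal Einstein constraints make the left side vanish there, so $\nabla_n C=0$. Uniqueness in Step 2 gives $C\equiv0$.

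I expect Step 1 to be the main obstacle, namely rigorously justifying distributional conservation and the Bianchi identity for the indicator-weighted stress at the free boundary. I would first handle it by integrating against compactly supported test vector fields and applying the divergence theorem separately on $\Omega_t$ and its complement. The boundary flux is $N_\mu T_{\rm mat}^{\mu\nu}$ from the material side and is continuous for photons. The piecewise regularity of the metric keeps these traces meaningful.
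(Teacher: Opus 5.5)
Your proposal is correct and follows the paper's argument. Traction and the exchange law give total-stress conservation with no surface measure. The contracted Bianchi identity then turns \eqref{eq:reducedresidual} into $\Box_g C_\nu+R_\nu{}^\mu C_\mu=0$, an energy estimate with Gr\"onwall propagates zero data, and the normal components of the residual on the initial slice give $\nabla_n C=0$. The only difference is presentational: the paper keeps an $L^2$ conservation error $Q_\nu$ and cancels the conormal interface fluxes piecewise before setting $Q=0$, whereas you impose exact conservation first and let the distributional equation carry the transmission condition.
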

\begin{proof}
Traction removes the surface divergence; volume exchange cancels.
The Bianchi identity gives, allowing an $L^2$ conservation error
$Q_\nu=\nabla^\mu T_{\mu\nu}$ with no surface measure,
\begin{equation}
 \Box_g C_\nu+R_\nu{}^\mu C_\mu=-16\pi GQ_\nu.
 \label{eq:subsidiary}
\end{equation}
Multiplication by $\partial_tC$, cancellation of the conormal interface
fluxes, and Gronwall give
\begin{equation}
 E_C(t)\le e^{Kt}\left[E_C(0)+K\int_0^t e^{-Kr}\|16\pi GQ(r)\|_2^2dr\right],
 \quad E_C\simeq\|C\|_{H^1}^2+\|\partial_tC\|_2^2.
 \label{eq:constraintenergy}
\end{equation}
Initially $C=0$ removes its tangential derivatives; the normal components
of \eqref{eq:reducedresidual} give $\nabla_0C_\mu=0$ in an adapted
orthonormal frame. Thus $E_C(0)=0$, and conservation gives $Q=0$.
\end{proof}

\begin{proposition}[Photon estimates across a material interface]
\label{prop:photonenergy}
On a finite slab, prescribe a $C^{1,1}$ metric, possibly piecewise smooth
across a moving timelike interface, with uniform coefficient bounds,
positive lapse and spatial metric, and bounded inverses. In canonical
momenta $p_i$, let the initial data and $S=\mathcal C/p^0$ be supported
in $0<p_-\le|p|\le p_+<\infty$.
The massless transport problem has a unique $L^2$ solution for
$f_0\in L^2$, $S\in L^1_tL^2$, preserving positivity, with
\begin{align}
 \|f(t)\|_2&\le\|f_0\|_2+\int_0^t\|S(r)\|_2dr,\nonumber\\
 \|f(t)\|_{H^1}&\le e^{Ct}\left[\|f_0\|_{H^1}
                       +\int_0^t\|S(r)\|_{H^1}dr\right].
 \label{eq:photonenergy}
\end{align}
The second estimate requires the indicated $H^1$ data. Momentum support
stays in $p_-e^{-Ct}\le|p|\le p_+e^{Ct}$, and
$\|T_\gamma(t)\|_{L^2_x}\le C_T\|f(t)\|_2$, without an angular
transversality assumption.
\end{proposition}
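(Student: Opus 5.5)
The plan is to treat the massless Vlasov (transport) equation $\mathcal L_g f=\mathcal C$ as a linear first-order hyperbolic equation on the cotangent bundle in canonical momenta $(x^i,p_i)$. Its Hamiltonian vector field is
\[
 X_H=\partial_t+\frac{\partial p_0}{\partial p_i}\partial_{x^i}-\frac{\partial p_0}{\partial x^i}\partial_{p_i},
\]
where $p_0(t,x,p)$ is the solution of the mass-shell condition. The interface needs no separate boundary treatment. Photons interact with matter only through the source, so the interface enters only through the metric. Since the metric is $C^{1,1}$, the Hamiltonian $H=p_0$ has Lipschitz first derivatives. Positive lapse, bounded inverses and the momentum cutoff away from $p=0$ make $X_H$ globally Lipschitz on the relevant phase-space region, uniformly on the slab. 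The characteristic flow $\Phi_{t,s}$ therefore exists and is unique and bi-Lipschitz, whether or not rays are transverse to the interface. I will define the solution by Duhamel along characteristics,
\[
 f(t)=f_0\circ\Phi_{0,t}+\int_0^t S(r)\circ\Phi_{r,t}\,dr .
\]
With $S\ge0$ and $f_0\ge0$ this is nonnegative. Uniqueness in $L^2$ comes from the weak formulation, by testing against transported smooth test functions, or equivalently from the renormalization/DiPerna--Lions theory for Lipschitz divergence-free fields.

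For the $L^2$ bound the key fact is Liouville's theorem in canonical coordinates. The phase-space field $(\partial_{p_i}p_0,-\partial_{x^i}p_0)$ is divergence free, so $\Phi$ preserves $dx\,dp$. The equation in coordinate time is $\partial_t f+X_H f=S$, with $S=\mathcal C/p^0$ after reparametrizing the affine flow by $t$. Composition with $\Phi$ is therefore an $L^2$ isometry, and Minkowski's integral inequality applied to the Duhamel formula gives the first estimate with constant one. For the $H^1$ bound I will differentiate the equation. The derivatives $\partial_x f$ and $\partial_p f$ satisfy transport equations of the same form, with an extra zeroth-order term $-(DX_H)\nabla f$ plus $\nabla S$. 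Since $DX_H\in L^\infty$ by the $C^{1,1}$ regularity, an energy estimate and Gronwall give the factor $e^{Ct}$. Because $DX_H$ is only $L^\infty$, I will justify this by mollifying the data. Lipschitz flows preserve $H^1$, with $\|\nabla(g\circ\Phi)\|\le\operatorname{Lip}(\Phi)\|\nabla g\|$, and the Lipschitz constant of $\Phi$ is bounded by $e^{Ct}$ through Gronwall on the variational equation. This route avoids differentiating coefficients twice.

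For the momentum support I will use the characteristic equation $\dot p_i=-\partial_{x^i}p_0$. Homogeneity gives $|\partial_x p_0|\le C|p|$, so $\frac{d}{dt}|p|\le C|p|$ in a uniformly equivalent norm. Gronwall gives the bounds $p_\pm e^{\pm Ct}$, with the constant absorbing the norm equivalence, and the source is supported in the same shell. Finally, $T_\gamma^{\mu\nu}=\int p^\mu p^\nu f\,\sqrt{-g}^{-1}\,dp/p^0$. On the compact momentum shell the weight is bounded, and Cauchy--Schwarz in $p$ over a set of bounded measure gives $\|T_\gamma\|_{L^2_x}\le C_T\|f\|_{L^2}$. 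No angular transversality enters here, only the momentum cutoff.

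I expect the main obstacle to be the $H^1$ estimate across the interface. There the metric is only $C^{1,1}$, so second derivatives of $p_0$ jump, and one must check that the differentiated equation involves only first derivatives of the Hamiltonian field. To handle this I will argue entirely through Lipschitz bounds on the flow rather than commuting derivatives with the equation. This is also where I will keep track of the fact that no transversality is needed for $H^1$ in phase space. That is consistent with the earlier remark that grazing rays spoil regularity only for moments and higher piecewise norms, not for this estimate.
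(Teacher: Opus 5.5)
Your proposal is correct and follows essentially the paper's own proof. The shared steps are: Liouville invariance of the Lipschitz, divergence-free Hamiltonian field $V=(\partial_p\mathcal H,-\partial_x\mathcal H)$ for positivity and the $L^2$ bound; an $L^\infty$ bound on $DV$ for the $H^1$ bound; $|\dot p|\le C|p|$ for the momentum support; and Cauchy--Schwarz against the bounded moment kernel. Your flow-Lipschitz argument for $H^1$ is the integrated form of the paper's differentiated equation $(\partial_t+V\cdot D)Df=DS-(DV)^TDf$.

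Two small corrections. First, with $p_0$ the covariant component, the Hamiltonian is $\mathcal H=-p_0=N\sqrt{h^{ij}p_ip_j}-\beta^ip_i$, so the signs in your $X_H$ and in $\dot p_i$ are reversed. Second, your closing remark is inaccurate: grazing rays \emph{can} destroy $H^1$ regularity of $f$ itself (Proposition~\ref{prop:grazing}). They do so only through a face-supported source extended by zero, and such a source fails the hypothesis $S\in L^1_tH^1$ of the second estimate.
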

\begin{proof}
For $g=-N^2dt^2+h_{ij}(dx^i+\beta^idt)(dx^j+\beta^jdt)$,
\[
 \mathcal H=-p_0=N\sqrt{h^{ij}p_ip_j}-\beta^ip_i,\qquad
 V=(\partial_p\mathcal H,-\partial_x\mathcal H).
\]
On compact momentum annuli $V$ is Lipschitz, divergence-free, and
$|\dot p|\le C|p|$. Its flow preserves $dx\,dp$, so integration of
$f_t+V\cdot Df=S$ gives positivity and the first estimate.
Differentiation gives
$(\partial_t+V\cdot D)Df=DS-(DV)^TDf$; the $L^\infty$ bound on
$DV$ gives the second by approximation.
The moment kernel $p^\mu p^\nu/(\sqrt{-g}\,p^0)$ is bounded on the
propagated annulus, giving the stress estimate by Cauchy--Schwarz.
\end{proof}
The same argument for differences gives
\[
 \|\delta f(t)\|_2\le\|\delta f_0\|_2+
 \int_0^t(\|\delta S\|_2+\|\delta V\|_\infty\|Df_2\|_2)dr,
 \qquad \|\delta V\|_\infty\le C\|\delta g\|_{W^{1,\infty}}.
\]
A source nonzero at a face, extended by zero, need not be $H^1$.

\begin{proposition}[Loss of regularity at grazing rays]
\label{prop:grazing}
A nonnegative source on a static Minkowski annulus, smooth up to its faces and extended by zero, can produce $f\notin H^1_{\rm loc}(x,p)$ from zero data and an energy density that is not piecewise $H^2$. The source may have compact time and positive-momentum support. Vanishing to order $k\ge0$ at the outer face can still leave $f\notin H^{k+1}_{\rm loc}$ downstream.
\end{proposition}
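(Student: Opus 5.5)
The plan is to construct an explicit counterexample on a static Minkowski annulus $R_-<|x|<R_+$ and solve the transport equation exactly by integrating along straight null rays. With zero incoming data and $f_t+\omega\cdot\nabla_x f=S$, where $S=\mathcal C/p^0$, the solution is
\[
 f(t,x,p)=\int_0^t S\bigl(t',x-(t-t')\omega,p\bigr)\,dt',
 \qquad \omega=p/|p|.
\]
Take $S=\phi(t)\psi(|p|)\,g(x)\mathbf 1_{\{R_-<|x|<R_+\}}$ with $\phi\in C_c^\infty((0,\infty))$ and $\psi\ge0$ supported in a compact positive interval. Choose $g$ smooth, nonnegative, and near the outer face equal to $(R_+-|x|)^k h(x)$ with $h>0$. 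Everything is explicit and nonnegative, and the source has compact time and positive-momentum support.

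The key step is a local computation near a grazing ray. Fix a point $x_0$ on the outer sphere and a direction $\omega_0$ tangent to it there. Consider rays through downstream points $x$ near the tangent line, just outside the annulus, with momentum directions near $\omega_0$. Let $\delta$ be the signed distance from the ray's closest approach to the center, minus $R_+$. For $\delta<0$ the ray cuts a chord of the outer shell of length about $2\sqrt{2R_+|\delta|}$. For $\delta>0$ it misses the shell, at least locally near $x_0$.

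Along the chord, $R_+-|y|\approx|\delta|-s^2/(2R_+)$. Hence the line integral of $g$ behaves like
\[
 c\,|\delta|_-^{\,k+1/2}\,(1+O(|\delta|)),
\]
so for $k=0$ it behaves like $|\delta|_-^{1/2}$. The time cutoff and constant $\phi$ on a suitable window keep this form at fixed $t$ after the pulse has been on long enough. Since $\delta$ is a smooth submersion in $(x,\omega)$, $f$ behaves locally like $H(\delta)=\delta_-^{k+1/2}$ times a smooth positive factor.

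A one-dimensional check then gives the Sobolev failure. The function $\delta_-^{k+1/2}$ lies in $H^{s}_{\rm loc}$ only for $s<k+1$, because its $(k+1)$st derivative $\sim\delta_-^{-1/2}$ is not square integrable. Composing with the submersion and multiplying by a positive smooth factor preserves this. So $f\notin H^{k+1}_{\rm loc}$, and in particular $f\notin H^1_{\rm loc}$ for $k=0$. The nonintegrable singularity sits on the downstream hypersurface $\{\delta=0\}$ in phase space, away from the material.

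For the energy density at $k=0$, integrate $f$ against $|p|^3\,d|p|\,d\Omega$. At a fixed exterior point $x$ near the tangent cone, the grazing set is a circle of directions in $S^2$. Integrating $\delta_-^{1/2}$ transversally across that circle gains one power, giving roughly $\operatorname{dist}(x,\Sigma)_-^{3/2}$ across the exterior light-cone-type surface $\Sigma$ formed by tangent rays, possibly with a further Jacobian gain. The aim is to show the profile is $\sim d^{3/2}$ or $d^{3/2}\log$-type, which fails $H^2$ because its second derivative is $\sim d^{-1/2}$. This holds in the exterior piece, so the density is not piecewise $H^2$.

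The main obstacle is the precise exponent of this angular averaging. The caustic geometry of the tangent rays may sharpen the singularity, which only helps, or smooth it. I would first compute it exactly in the spherically symmetric case, using impact parameter coordinates, where the density becomes an Abel-type integral of $\delta_-^{1/2}$ and its transversal behavior is explicit. If needed, I would instead localize $h$ near $x_0$ to get a planar-like model where the integral separates.
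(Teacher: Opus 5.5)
Your treatment of $f$ is sound and is essentially the paper's argument. On the plateau, the backward-ray integral over a near-grazing chord gives $f=2\psi\sqrt{R_+^2-b^2}\,\mathbf 1_{\{b<R_+\}}$. A weight vanishing to order $k$ at the outer face turns this into $d^{k+1/2}$ times a smooth positive factor, with $d=R_+^2-b^2$. The paper takes the weight $(R_+^2-|x|^2)^k$, so the chord integral is exactly $B(1/2,k+1)\,d^{k+1/2}$. For your weight, smoothness of the factor follows after the substitution $s=\sqrt d\,\sigma$, because odd powers of $\sqrt d\,\sigma$ integrate to zero. The $(k+1)$st transverse derivative then behaves like $d^{-1/2}$, which is not square integrable.

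\textbf{The gap: the energy-density claim.} The mechanism you propose is wrong, not merely unquantified.
\begin{itemize}
\item On the plateau, take an exterior point $|x|>R_+$ and let $\theta$ be the angle between $x$ and $\omega$. The grazing directions satisfy $\sin\theta=R_+/|x|$, so $\partial_\theta b=|x|\cos\theta\neq0$ there.
\item Near that circle, use $\delta=b-R_+$ as an angular coordinate depending smoothly on $x$, together with a coordinate $\alpha$ along the circle. The density's contribution is then $\int G(x,\delta,\alpha)\,\delta_-^{1/2}\,d\delta\,d\alpha$ with $G$ smooth, and this is smooth in $x$.
\item Hence there is no singular surface $\Sigma$ off the face, and no $d^{3/2}$ profile.
\end{itemize}
You have conflated the singular cylinder $\{b(x,\omega)=R_+\}$ of $f(\cdot,\omega)$ at fixed $\omega$ with a singular set of the angular average. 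Averaging over $\omega$ removes that singularity everywhere except on the envelope of the tangent lines, which is the sphere $|x|=R_+$ itself. There $b-R_+=R_+(\sin\theta-1)$ vanishes quadratically, so the grazing circle folds. The resulting profile is $\delta\log|\delta|$, not a power.

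\textbf{The missing step} is the paper's explicit computation.
\begin{itemize}
\item With radial $\psi$, the identity $ds\,d\Omega=d^3y/|x-y|^2$ makes the plateau density a constant times $\int_{\mathcal B}g(y)\,|x-y|^{-2}\,d^3y$.
\item For $g\equiv1$ this equals $I_{R_+}(r)-I_{R_-}(r)$, where $I_R(r)=2\pi R+\pi\frac{R^2-r^2}{r}\log\frac{r+R}{|r-R|}$.
\item At $r=R_++\delta$ the singular part is $2\pi\delta\log|\delta|$. Its second derivative $2\pi/\delta$ is not in $L^2$ on either side, while the $R_-$ term is smooth near $R_+$.
\end{itemize}
So the density fails to be piecewise $H^2$ at the material face itself, on both the body and exterior pieces. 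Your fallback of an exact spherical computation would lead here. As written, however, the proposal targets the wrong location and the wrong exponent, and it does not prove the density claim.
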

\begin{proof}
For $\omega=p/|p|$, put $\ell=x\cdot\omega$ and
$b=|x-\ell\omega|$. Choose
$S=a(t)\psi(p)\mathbf1_{\{R_-<|x|<R_+\}}$, with smooth nonnegative
$a,\psi$, $\psi>0$ on an open momentum set, and $a=1$ on a long
finite interval. On an open downstream set whose full crossing occurs
during that plateau, rays near $b=R_+$ miss the inner face and give
\begin{equation}
 f=2\psi(p)\sqrt{R_+^2-b^2}\,\mathbf1_{\{b<R_+\}}.
 \label{eq:grazingchord}
\end{equation}
Its squared $b$ derivative has a logarithmically divergent integral
against $b\,db\,d\ell\,d\phi$ on an open momentum set.
For radial $\psi$, the energy density during the plateau is proportional
to $I_{R_+}(r)-I_{R_-}(r)$, where polar integration gives
\begin{equation}
 I_R(r)=\int_{|y|<R}\frac{d^3y}{|x-y|^2}
 =\frac{2\pi}{r}\int_0^R s\log\frac{r+s}{|r-s|}\,ds
 =2\pi R+\pi\frac{R^2-r^2}{r}\log\frac{r+R}{|r-R|}.
 \label{eq:grazingmoment}
\end{equation}
At $r=R+\delta$ the singular term is $2\pi\delta\log|\delta|$,
with second derivative $2\pi/\delta\notin L^2$ on either side.
The inner-ball term is smooth near $R_+$.
Multiplying the source by $(R_+^2-|x|^2)^k$ replaces its chord by
\[
 \psi(p)\int_{-\sqrt d}^{\sqrt d}(d-\ell^2)^k d\ell
 =\psi(p)B(1/2,k+1)d^{k+1/2},\qquad d=R_+^2-b^2.
\]
Its first $k$ weak transverse derivatives are square integrable,
but its $(k+1)$st is proportional to $d^{-1/2}$.
\end{proof}

\subsection{Transverse photon coupling}
\label{app:transverse}

Material coordinates fix the moving faces. We estimate derivatives
tangent to those faces first, then recover normal derivatives from
transport by dividing by the nonzero crossing velocity.

\begin{proof}[Proof of Theorem~\ref{thm:transversebody}]
Extend the material map to a spatial diffeomorphism
$x=\Phi(t,X)$ of a larger fixed domain.
Use the Cartesian metric components $\widetilde g=g\circ\Phi$ and retain
the Eulerian covariant photon momenta $p_i$.
Writing $J=\det D_X\Phi$, the phase velocity is
\begin{equation}
 W^X=(D_X\Phi)^{-1}(\partial_p\mathcal H-\Phi_t),
 \qquad W^p=-(\partial_x\mathcal H)\circ\Phi,
 \qquad
 \partial_tJ+\operatorname{div}_{X,p}(JW)=0.
 \label{eq:materialphase}
\end{equation}
Evaluate the Hamiltonian at $(t,\Phi(t,X),p)$; these coefficients use
only first derivatives of $(\widetilde g,\Phi)$.
In fixed face charts define
$\|F\|_{\mathcal P_s}=\sum_{j=0}^s
\|\partial_t^jF\|_{H^{s-j}_{X,p}(\mathrm{pieces})}$ on the body,
cavity, and exterior, with $\mathcal P_s^x$ omitting momentum variables.
These are instantaneous norms; time-uniform bounds use their supremum
over $[0,T]$. The momentum integrals are over the compact annulus
propagated in Proposition~\ref{prop:photonenergy}.
For the metric, use the transmission spaces of~\cite{andersson2014}:
its $j$th time derivative is piecewise $H^{s+1-j}$ and globally
$H^{\min(2,s+1-j)}$, for $0\le j\le s+1$.
Common first metric traces make $W$ Lipschitz. Near the small-$G$
static state, short-time characteristic continuity gives
$W^X\cdot X/|X|>1/2$ near all emitted rays.

Transversality gives regular crossing times and $[f]=0$. In a fixed
face chart, all time, momentum, and face-tangential derivatives of this
trace agree. The normal derivative need not agree: transport gives
\begin{equation}
 W^n[\partial_n f]=[S],\qquad
 S=(\mathcal C/p^0)\circ\Phi,
 \label{eq:photontracejump}
\end{equation}
where the source is zero-extended and its initial time jets vanish.
Thus the appropriate high-order spaces are piecewise Sobolev spaces.
Let $\mathcal T_s(f)$ be the $L^2(J\,dX\,dp)$ sum of derivatives through
order $s$ in $t$, $p$, and directions tangent to the fixed faces, together
with all derivatives away from their collars.
The interface flux for each such derivative $D_\parallel^\alpha f$ is
$[JW^n|D_\parallel^\alpha f|^2]$, which vanishes by the common traces
of $f$, $J$, and $W$.
The commuted equation has source
$D_\parallel^\alpha S-[D_\parallel^\alpha,W\cdot D]f$.
Every commutator term contains at most $s$ derivatives of $W$ and
at most $s$ derivatives of $f$; the top derivative of $W$ multiplies
only $Df$. Sobolev products in the six phase-space variables therefore
close for $s\ge8$.
Normal derivatives are recovered from
\begin{equation}
 \partial_n f=
 \frac{S-\partial_tf-W_{\rm tan}\cdot D_{\rm tan}f-W^p\cdot D_pf}{W^n},
 \qquad |W^n|\ge c>0.
 \label{eq:normalrecovery}
\end{equation}
Localize near emitted rays, where $|W^n|\ge c$. Each use of
\eqref{eq:normalrecovery} replaces a normal derivative by tangential,
time, or momentum derivatives and a source derivative. Induction gives
\begin{align*}
 \|f\|_{\mathcal P_s}
 &\le C_R\bigl(\mathcal T_s(f)+\|S\|_{\mathcal P_{s-1}}\bigr),\\
 \frac{d}{dt}\mathcal T_s(f)
 &\le C_R\bigl(\|f\|_{\mathcal P_s}+\|S\|_{\mathcal P_s}\bigr).
\end{align*}
The first bound controls the normal derivatives in the commutators of
the second. Coefficients use at most $s+1$ derivatives of
$(\widetilde g,\Phi)$. Since the initial source jets vanish,
$\|S(t)\|_{\mathcal P_{s-1}}\le\int_0^t\|S(r)\|_{\mathcal P_s}\,dr$.
Gronwall's inequality gives
\begin{equation}
 \|f(t)\|_{\mathcal P_s}
 \le C_R\int_0^t\|S(r)\|_{\mathcal P_s}\,dr,
 \qquad
 \|T_\gamma(t)\|_{\mathcal P_s^x}\le C_R\|f(t)\|_{\mathcal P_s}.
 \label{eq:transverseestimate}
\end{equation}
Here $s\ge8$ controls phase-space products; constants depend on the
state bounds, $T,c$. The difference source
$\delta S-\delta W\cdot Df_2$ closes one order lower. A common trace
approximation preserves the interface cancellation.

Freeze the material-coordinate metric $h$, then solve the elastic
Neumann problem with the prescribed fuel and recoil using
\cite[Theorem 4.3]{andersson2014}. Positive inertia and Korn coercivity
apply; the fuel term has no spatial deformation gradient and adds no
traction. Solve transport next, then the linear metric transmission
problem with both stresses using \cite[Theorem 4.2]{andersson2014}.
The source space $\mathcal X_T^s$ in that wave theorem requires
piecewise spatial $H^{s-j}$ for $j$ time derivatives, with only global
$L^2$ regularity. This differs from the metric transmission space and
allows different normal derivatives on the two sides of a face.
The compact momentum support and smooth moment kernel give
\[
 \|\partial_t^j T_\gamma\|_{H^{s-j}(\mathrm{pieces})}
 \le C_R\sum_{\ell=0}^j
       \|\partial_t^\ell f\|_{H^{s-\ell}_{X,p}(\mathrm{pieces})}.
\]
Thus \eqref{eq:transverseestimate} supplies the wave source without
requiring $f$ or $T_\gamma$ to be globally $H^2$.
Delayed emission keeps all initial metric and traction jets static.
On a fixed finite domain containing the causal disturbance, small
$T,\eta$ preserve a closed bounded metric ball and the strict coefficient
margins. Space-time compactness two orders lower and the difference
estimates give a continuous compact map. Schauder gives a fixed point;
weak lower semicontinuity preserves the higher bounds.

For differences, let $D_s$ be the metric-material energy of order $s$
(including velocity at order $s-1$), and
$F_{s-1}=\|\delta f\|_{\mathcal P_{s-1}}$.
Elliptic recovery and the first physical metric derivatives, as in
\cite[section 5.1.2]{andersson2014}, give
\[
 D_s(t)\le C_R\int_0^t(D_s+F_{s-1})dr,\qquad
 F_{s-1}(t)\le C_R\int_0^tD_sdr.
\]
Photon moments enter the wave source, while recoil is independent of
$f$; Gronwall proves uniqueness. Conservation restores the Einstein
constraints. Characteristic continuity keeps photons outside the cavity,
and domain of dependence joins to the asymptotically flat exterior.
\end{proof}

\subsection{Momentum control with self-gravity}
\label{app:controlG}

We select the target momentum by inverting the leading linear impulse.
Uniform remainder bounds make this inversion a contraction.

\begin{proof}[Proof of Corollary~\ref{cor:gravitatingcontrol}]
The static family and the estimates of Theorem~\ref{thm:transversebody}
are uniform for small $G$. Choose $T$ before the pulse and then reduce
$\eta$. Write $\epsilon_G=GM_0/R_+$, where
$M_0=\int_{\mathcal B}(m+z_0)\,dX$; the controls satisfy $|b_*|<5/16$.

Subtract the static equations in material coordinates. The elastic
energy estimate, the photon estimate \eqref{eq:transverseestimate}, and
the metric transmission estimate give, on $[0,T]$,
\[
 \|\Phi-\Phi_G\|_{s+1}+\|f\|_{\mathcal P_s}\le C\eta,
 \qquad \|\widetilde g-\widetilde g_G\|_{s+1}\le C\epsilon_G\eta.
\]
These are the mixed transmission norms of that theorem.
The metric perturbation has zero initial data and sources proportional
to $G$. Control differences satisfy the same estimates one order lower,
with an extra factor $|\theta-\vartheta|$, since fuel is common and
recoil is affine in $b_*$. The elastic difference flux has the form
$A^{\alpha\beta}\partial_\beta\delta\Phi+B^\alpha\delta\widetilde g$;
its full normal component vanishes at the faces. Keeping the metric
term in that conormal condition permits the difference estimate of
\cite[section 5.1.2]{andersson2014}. One higher construction order
supplies the coefficient bounds.

The material exchange equation and zero traction give the exact identity
\begin{equation}
 \dot P_i=\frac12\int_{\Omega_t}\sqrt{-\det g}\,
 T_{\rm mat}^{\mu\nu}\partial_i g_{\mu\nu}\,d^3x
 -\eta\int_{\mathcal B}a\chi z\,(U_i+d_{b,i})\,dX.
 \label{eq:gravitatingimpulse}
\end{equation}
For the second term use $\sqrt{-\det g}\,nU^0\det D\Phi=1$.
The gravitational integral vanishes on the static reference; its
perturbation and control differences are $O(G\eta)$ and
$O(G\eta|\theta-\vartheta|)$, with the reference scales fixed.
At $G=\eta=0$, $e_r=X/|X|$, and radial integration gives
\[
 \int_{\mathcal B}\chi z_0e_r\,dX=0,\qquad
 \int_{\mathcal B}\chi z_0\mathsf C_{e_r}\,dX
       =\frac{1-\beta^2}{3}I_zI.
\]
Since the rotating part of \eqref{eq:conecontrol} has zero $a$-weighted mean,
\begin{align}
 P(T;\theta)&=-\eta K_c\theta+R_{G,\eta}(\theta),\nonumber\\
 |R_{G,\eta}(\theta)|&\le C\eta(\epsilon_G+\eta),\qquad
 |R_{G,\eta}(\theta)-R_{G,\eta}(\vartheta)|
 \le C\eta(\epsilon_G+\eta)|\theta-\vartheta|.
 \label{eq:gravitatingremainder}
\end{align}
Choose $C(\epsilon_G+\eta)\le K_c\min(r_c/2,1/2)$.
Then $\theta\mapsto[R_{G,\eta}(\theta)-p_*]/(\eta K_c)$ maps the
radius-$r_c$ ball into itself with contraction constant at most $1/2$.
It gives the claimed unique control.
At phases $\pi/2$ and $\pi$, the leading force directions have cross
product of magnitude at least $b_c^2-2b_cr_c=b_c^2/2$.
The pointwise version of \eqref{eq:gravitatingimpulse} preserves this
margin for small $\epsilon_G+\eta$; fuel depletion is independent of $\theta$.
\end{proof}

\acknowledgments
This work is financially supported by VinUniversity under the Environmental
Intelligence (CEI) Grant (No.~VUNI.CEI.FS 0009).


\begin{thebibliography}{99}
\newcommand{\eprint}[1]{\href{https://arxiv.org/abs/#1}{arXiv:#1}}

\bibitem{alcubierre1994}
M.~Alcubierre, \emph{The warp drive: hyper-fast travel within general relativity},
\emph{Class. Quantum Grav.} \textbf{11} (1994) L73 [\eprint{gr-qc/0009013}].

\bibitem{natario2002}
J.~Nat\'ario, \emph{Warp drive with zero expansion},
\emph{Class. Quantum Grav.} \textbf{19} (2002) 1157 [\eprint{gr-qc/0110086}].

\bibitem{santiago2022}
J.~Santiago, S.~Schuster and M.~Visser, \emph{Generic warp drives violate the null energy condition},
\emph{Phys. Rev. D} \textbf{105} (2022) 064038 [\eprint{2105.03079}].

\bibitem{bobrick2021}
A.~Bobrick and G.~Martire, \emph{Introducing physical warp drives},
\emph{Class. Quantum Grav.} \textbf{38} (2021) 105009 [\eprint{2102.06824}].

\bibitem{fuchs2024}
J.~Fuchs, C.~Helmerich, A.~Bobrick, L.~Sellers, B.~Melcher and G.~Martire,
\emph{Constant velocity physical warp drive solution},
\emph{Class. Quantum Grav.} \textbf{41} (2024) 095013 [\eprint{2405.02709}].

\bibitem{membrane2023}
G.~Huey, \emph{Membrane models as a means of propulsion in general relativity:
super-luminal warp-drive that satisfies the weak energy condition},
\emph{Class. Quantum Grav.} \textbf{41} (2024) 135007 [\eprint{2311.07193}].

\bibitem{barzegar2026}
H.~Barzegar, T.~Buchert and Q.~Vigneron,
\emph{General formalism, classification, and demystification of the current warp-drive spacetimes},
\eprint{2602.16495}.

\bibitem{buchert2026frackowiak}
T.~Buchert and A.~Frackowiak,
\emph{Novel realizations of warp drive spacetimes as solutions of general relativity},
\emph{Universe} \textbf{12} (2026) 132 [\eprint{2605.03653}].

\bibitem{kinnersley1969}
W.~Kinnersley, \emph{Field of an arbitrarily accelerating point mass},
\emph{Phys. Rev.} \textbf{186} (1969) 1335.

\bibitem{bonnor1994}
W.~B.~Bonnor, \emph{The photon rocket},
\emph{Class. Quantum Grav.} \textbf{11} (1994) 2007.

\bibitem{damour1995}
T.~Damour, \emph{Photon rockets and gravitational radiation},
\emph{Class. Quantum Grav.} \textbf{12} (1995) 725 [\eprint{gr-qc/9412063}].

\bibitem{dain1996}
S.~Dain, O.~M.~Moreschi and R.~J.~Gleiser,
\emph{Photon rockets and the Robinson--Trautman geometries},
\emph{Class. Quantum Grav.} \textbf{13} (1996) 1155 [\eprint{gr-qc/0203064}].

\bibitem{podolsky2011}
J.~Podolsk\'y, \emph{Photon rockets moving arbitrarily in any dimension},
\emph{Int. J. Mod. Phys. D} \textbf{20} (2011) 335 [\eprint{1006.1583}].

\bibitem{fernandez2026}
F.~Fern\'andez-\'Alvarez and J.~M.~M.~Senovilla,
\emph{Asymptotic gravitational radiation of photon rockets with $\Lambda$:
point, string, and sheet sources}, \eprint{2608.31129}.

\bibitem{fuzfa2019}
A.~F\"uzfa, \emph{Interstellar travels aboard radiation-powered rockets},
\emph{Phys. Rev. D} \textbf{99} (2019) 104081 [\eprint{1902.03869}].

\bibitem{henriques2012}
P.~G.~Henriques and J.~Nat\'ario, \emph{The rocket problem in general relativity},
\emph{J. Optim. Theory Appl.} \textbf{154} (2012) 500 [\eprint{1105.5235}].

\bibitem{yang2023}
H.~Yang, B.~Bonga and Z.~Pan, \emph{Dynamical instability of self-gravitating membranes},
\emph{Phys. Rev. Lett.} \textbf{130} (2023) 011402 [\eprint{2207.13754}].

\bibitem{mourao2025}
P.~Mour\~ao, J.~Nat\'ario and R.~Vicente,
\emph{Relativistic elastic membranes: rotating disks and Dyson spheres},
\emph{Class. Quantum Grav.} \textbf{42} (2025) 025023 [\eprint{2409.10602}].

\bibitem{pitre2026}
T.~Pitre, B.~Schneider and E.~Poisson,
\emph{Self-gravitating thin shells are dynamically unstable on all angular scales},
\eprint{2604.05980}.

\bibitem{bondi1962}
H.~Bondi, M.~G.~J.~van der Burg and A.~W.~K.~Metzner,
\emph{Gravitational waves in general relativity. VII. Waves from axi-symmetric isolated systems},
\emph{Proc. R. Soc. Lond. A} \textbf{269} (1962) 21.

\bibitem{sachs1962}
R.~K.~Sachs, \emph{Gravitational waves in general relativity. VIII. Waves in asymptotically flat space-time},
\emph{Proc. R. Soc. Lond. A} \textbf{270} (1962) 103.

\bibitem{israel1966}
W.~Israel, \emph{Singular hypersurfaces and thin shells in general relativity},
\emph{Nuovo Cimento B} \textbf{44} (1966) 1;
erratum \textbf{48} (1967) 463.

\bibitem{poisson2004}
E.~Poisson, \emph{A Relativist's Toolkit: The Mathematics of Black-Hole Mechanics},
Cambridge University Press (2004).

\bibitem{hawking1973}
S.~W.~Hawking and G.~F.~R.~Ellis, \emph{The Large Scale Structure of Space-Time},
Cambridge University Press (1973).

\bibitem{horvatilijic2007}
D.~Horvat and S.~Iliji\'c, \emph{Gravastar energy conditions revisited},
\emph{Class. Quantum Grav.} \textbf{24} (2007) 5637 [\eprint{0707.1636}].

\bibitem{andreasson2008sharp}
H.~Andr\'easson, \emph{Sharp bounds on $2m/r$ of general spherically symmetric static objects},
\emph{J. Differential Equations} \textbf{245} (2008) 2243 [\eprint{gr-qc/0702137}].

\bibitem{teschl2012}
G.~Teschl, \emph{Ordinary Differential Equations and Dynamical Systems},
Graduate Studies in Mathematics \textbf{140}, American Mathematical Society,
Providence, RI (2012).

\bibitem{nirenberg1972}
L.~Nirenberg, \emph{An abstract form of the nonlinear Cauchy--Kowalewski theorem},
\emph{J. Differential Geom.} \textbf{6} (1972) 561.

\bibitem{minguzzi2019}
E.~Minguzzi, \emph{Lorentzian causality theory},
\emph{Living Rev. Relativ.} \textbf{22} (2019) 3.

\bibitem{balart2014}
L.~Balart and E.~C.~Vagenas,
\emph{Regular black hole metrics and the weak energy condition},
\emph{Phys. Lett. B} \textbf{730} (2014) 14 [\eprint{1401.2136}].

\bibitem{andersson2014}
L.~Andersson, T.~A.~Oliynyk and B.~G.~Schmidt,
\emph{Dynamical compact elastic bodies in general relativity},
\emph{Arch. Ration. Mech. Anal.} \textbf{220} (2016) 849 [\eprint{1410.4894}].

\bibitem{andersson2007}
L.~Andersson, R.~Beig and B.~G.~Schmidt,
\emph{Static self-gravitating elastic bodies in Einstein gravity},
\emph{Commun. Pure Appl. Math.} \textbf{61} (2008) 988 [\eprint{gr-qc/0611108}].

\bibitem{hu2010}
Z.~Hu, \emph{A modern view of the classical Herglotz--Noether theorem},
\eprint{1004.1935}.

\end{thebibliography}
\end{document}